\documentclass[a4paper,UKenglish,cleveref,pdftex, thm-restate,numberwithinsect]{lipics-v2019}


\newif\ifreport
\reportfalse

\bibliographystyle{abbrv}

\usepackage{lipsum}

\usepackage{physics}
\usepackage{amssymb}
\usepackage{enumitem}
\usepackage{mathtools}
\usepackage{mathpartir}
\usepackage[utf8]{inputenc}
\usepackage{tikz}
\usetikzlibrary{matrix, arrows.meta, positioning, calc}
\usepackage[usestackEOL]{stackengine}
\usepackage{proof-at-the-end}

\DeclareFontFamily{OT1}{pzc}{}
\DeclareFontShape{OT1}{pzc}{m}{it}{<-> s * [1.200] pzcmi7t}{}
\DeclareMathAlphabet{\mathpzc}{OT1}{pzc}{m}{it}
\renewcommand{\mathscr}{\mathpzc}

\newcommand{\catname}[1]{{\normalfont\textbf{#1}}}

\newcommand{\eqt}[1]{\mathsf{Eq}\qty(\mathcal{#1})}
\newcommand{\mmb}[1]{\mathsf{M}\qty(\mathcal{#1})}

\newcommand{\sgn}{\catname{Sign}}

\newcommand{\supp}[1]{\mathsf{supp}({#1},\mu_{#1})}

\newcommand{\rs}[1]{\mathsf{r}_{\mathbf{#1}}}
\newcommand{\res}[2]{\mathsf{r}_{\mathbf{#1},\hspace{1pt} {#2}}}

\newcommand{\alg}[1]{{#1}\text{-}\mathsf{Alg}}
\newcommand{\arro}[0]{\mathrel{\rotatebox[origin=c]{270}{$\twoheadrightarrow$}}}

\newcommand{\arr}[2]{{#1}{\arro}{\catname{#2}}}
\newcommand{\term}[1]{\mathsf{T}_{#1}}
\newcommand{\trm}[1]{\mathscr{F}_{#1}}
\newcommand{\trma}[1]{\mathscr{F}^{\catname{Set}}_{#1}}
\newcommand{\terms}[1]{\mathcal{{#1}}\text{-}\mathsf{Terms}}
\newcommand{\ed}{\coloneqq}
\newcommand{\thr}[1]{\mathsf{Th}\qty(\mathcal{#1})}
\newcommand{\mode}[1]{\catname{Mod}\qty(#1)}
\newcommand{\modd}[0]{\catname{Mod}}
\newcommand{\id}[1]{\mathrm{id}_{#1}}

\newcommand{\gn}[0]{{\langle \nabla(X)\rangle_{{\mathcal{T}_{\Lambda[\nabla(X)]}},{\bar{\eta}_{\nabla(X)}}}}}
\newcommand{\gen}[3]{\langle {#1}, \mu_{#1} \rangle_{{\mathcal{#3}},{#2}}}
\newcommand{\trms}[1]{\mathsf{Terms}\qty({#1})}
\newcommand{\lan}{\catname{Lng}}
\newcommand{\stat}[1]{\mathsf{Form}\qty(\mathcal{#1})}

\newcommand{\msub}[1]{\mathscr{M}\text{-}\mathscr{Sub}_{\catname{C}}}

\newcommand{\ar}[0]{\mathsf{ar}}

\newcommand{\ex}[2]{\mathsf{E}\qty(#1, #2)}

\newcommand{\ext}[3]{#3^{\mathcal{#1}, {#2}}}

\newcommand{\fuz}[1]{\catname{Fuz}\qty(#1)}

\newcommand*\circled[1]{\tikz[baseline=(char.base)]{
		\node[shape=circle,draw,inner sep=1pt] (char) {#1};}}

\newcommand\functorop[1][l]{\csname#1functor\endcsname}
\newcommand\lfunctorop[3]{%
	\setbox0=\hbox{$#2$}%
	\kern\wd0%
	\ensurestackMath{\Centerstack[c]{#1\\ \mathllap{#2\;\,}\mathclap{\DownArrow}\\#3}}%
}
\newcommand\rfunctorop[3]{%
	\setbox0=\hbox{$#2$}%
	\ensurestackMath{\Centerstack[c]{#1\\\mathclap{\UpArrow}\mathrlap{\,\;#2}\\#3}}%
	\kern\wd0%
}

\setstackgap{L}{1.3\normalbaselineskip}
\newcommand\UpArrow{\rotatebox[origin=c]{90}{$\longrightarrow$\,}}
\newcommand\DownArrow{\rotatebox[origin=c]{-90}{$\longrightarrow$\,}}

\newcommand{\sqt}[1]{\mathsf{Seq}\qty(\mathcal{#1})}
\newcommand\functor[1][l]{\csname#1functor\endcsname}
\newcommand\lfunctor[3]{%
	\setbox0=\hbox{$#2$}%
	\kern\wd0%
	\ensurestackMath{\Centerstack[c]{#1\\ \mathllap{#2\;\,}\mathclap{\DownArrow}\\#3}}%
}
\newcommand\rfunctor[3]{%
	\setbox0=\hbox{$#2$}%
	\ensurestackMath{\Centerstack[c]{#1\\\mathclap{\DownArrow}\mathrlap{\,\;#2}\\#3}}%
	\kern\wd0%
}

\newcommand{\eim}[1]{\catname{Alg}(\mathsf{T}_{#1})}

\setstackgap{L}{1.3\normalbaselineskip}


\newtheorem*{thm1*}{Theorem}


\newtheorem*{lem*}{Lemma}


\newtheorem*{prop*}{Proposition}


\newtheorem*{notazione}{Notation}


\title{Fuzzy algebraic theories}

\author{Davide Castelnovo}
{Department of Mathematics, Computer Science and Physics, University of Udine, Italy}
{davide.castelnovo@uniud.it}{}{}
\author{Marino Miculan}
{Department of Mathematics, Computer Science and Physics, University of Udine, Italy}
{marino.miculan@uniud.it}{0000-0003-0755-3444}{Supported by the Italian MIUR PRIN 2017FTXR7S \emph{IT MATTERS}.}


\authorrunning{D.~Castelnovo, M.~Miculan}
\Copyright{Davide~Castelnovo and Marino Miculan}

\begin{CCSXML}
	<ccs2012>
	<concept>
	<concept_id>10003752.10003790</concept_id>
	<concept_desc>Theory of computation~Logic</concept_desc>
	<concept_significance>500</concept_significance>
	</concept>
	</ccs2012>
\end{CCSXML}

\ccsdesc[500]{Theory of computation~Logic}

\keywords{categorical logic, fuzzy sets, algebraic reasoning, equational axiomatisations, monads, Eilenberg-Moore algebras.}  

\hideLIPIcs
\nolinenumbers 
\ifreport
\nolinenumbers 
\hideLIPIcs  
\else
\relatedversion{A short version of this paper will appear in the \emph{CSL 2022} conference proceedings.} 
\fi




\ifreport
\else
\EventEditors{John Q. Open and Joan R. Access}
\EventNoEds{2}
\EventLongTitle{42nd Conference on Very Important Topics (CVIT 2016)}
\EventShortTitle{CVIT 2016}
\EventAcronym{CVIT}
\EventYear{2016}
\EventDate{December 24--27, 2016}
\EventLocation{Little Whinging, United Kingdom}
\EventLogo{}
\SeriesVolume{42}
\ArticleNo{xx}
\fi

\begin{document}
	
	\maketitle
	
	\begin{abstract}
		In this work we propose a formal system for \emph{fuzzy algebraic reasoning}.
		The sequent calculus we define is based on two kinds of propositions, capturing equality and existence of terms as members of a fuzzy set. 
		We provide a sound semantics for this calculus and show that there is a notion of free model for any theory in this system, allowing us (with some restrictions) to recover models as Eilenberg-Moore algebras for some monad.
		We will also prove a completeness result: a formula is derivable from a given theory if and only if it is satisfied by all models of the theory.
		Finally, leveraging results by Milius and Urbat, we give HSP-like characterizations of subcategories of algebras which are categories of models of particular kinds of theories.
	\end{abstract}

\section{Introduction}	

One of the most fruitful and influential lines of research of Logic in Computer Science is the algebraic study of computation.  
After Moggi's seminal work  \cite{moggi91monads} showed that notions of computation can be represented as monads, Plotkin and Power \cite{plotkinpower02} approached the problem using operations and equations, i.e., Lawvere theories. 
Since then, various extensions of the notion of Lawvere theory have been introduced in order to accommodate an ever increasing number of computational notions within this framework; see, e.g., \cite{plotkinpower03,hylandpower07,nishizawa2009lawvere}, and more recently \cite{bacci2018algebraic,bacci2020quantitative} for quantitative algebraic reasoning for probabilistic computations.


Along this line of research, in this work we study algebraic reasoning on \emph{fuzzy sets}. 
Algebraic structures on fuzzy sets are well known since the seventies (see e.g., \cite{rosenfeld1971fuzzy,mashour1990normal,ajmal1994homomorphism, mordeson2012fuzzy}).
Fuzzy sets are very important in computer science, with applications ranging from pattern recognition to decision making, from system modeling to artificial intelligence.
So, it is natural to ask if it is possible to use an approach similar to above for \emph{fuzzy algebraic reasoning}.

In this paper we answer positively to this question.
We propose a sequent calculus based on two kind of propositions, one expressing equality of terms and the other the existence of a term as a member of a fuzzy set. These sequents have a natural interpretation in categories of fuzzy sets endowed with operations. 
This calculus is sound and complete for such a semantics: a formula is satisfied by all the models of a given theory if and only if it is derivable from it.

It is possible to go further. Both in the classical and in the quantitative settings there is a notion of free model for a theory; we show that is also true for theories in our formal system for fuzzy sets.
In general the category of models of a given theory will not be equivalent to the category of Eilenberg-Moore algebras for the induced monad, but we will show that this equivalence holds for theories with sufficiently simple axioms.
Finally we will use the techniques developed in \cite{milius2019equational} to prove two results analogous to Birkhoff's theorem.

\smallskip 
\noindent\textit{Synopsis.} 
In \cref{sec:fuzzy} we recall the category $\fuz{H}$ of fuzzy sets over a frame $H$. 
\cref{sec:the} introduces the syntax and the rules of fuzzy theories.
Then, in \cref{sec:al} we introduce the notions of algebras for a signature and of models for a theory; in this section we will also show that the calculus proposed is sound and complete. 
\cref{sec:fre} is devoted to free models and it is shown that if a theory is \emph{basic} then its category of models arose as the category of Eilenberg-Moore algebras for a monad on $\fuz{H}$. 
In \cref{sec:birk} we use the results of \cite{milius2019equational} to prove two HSP-like theorems for our calculus. 
Finally, \cref{sec:conc} draws some conclusions and directions for future work. 
\cref{sec:der} contains two detailed derivations used in the paper.

\section{Fuzzy sets}
	\label{sec:fuzzy}
	In this section we will recall the definition and some well-known properties of the category of fuzzy sets over a frame $H$ (i.e. a complete Heyting algebra \cite{johnstone1982stone}).
	\begin{definition}[\cite{wyler1991lecture,wyler1995fuzzy}]
		Let $H$ be a frame. A \emph{$H$-fuzzy set} is a pair $(A, \mu_A)$ consisting in a set $A$ and a \emph{membership function} $\mu_A:A\rightarrow H$. The \emph{support of $\mu_A$} is  the set $	\supp{A}$ of elements $x\in A$ such that $\mu_A(x)\neq \bot$.
		An arrow $f:(A, \mu_A)\rightarrow (B, \mu_B)$ is a function $f:A\rightarrow B$ such that	$\mu_A(x)\leq \mu_B(f(x))$ for all $x\in A$.
		
		We denote by $\fuz{H}$ the category of $H$-fuzzy sets and their arrows. We will often drop the explicit reference to the frame $H$ when there is no danger of confusion.
	\end{definition}

	\begin{proposition}\label{adj}For any frame $H$, the forgetful functor $\mathscr{V}:\fuz{H}\rightarrow \catname{Set}$ has both a left and a right adjoint $\nabla$ and $\Delta$ endowing a set $X$ with the function constantly equal to the bottom and the top element of $H$, respectively.
	\end{proposition}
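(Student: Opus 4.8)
The plan is to verify directly that $\nabla$ and $\Delta$ realise the universal properties of, respectively, a left and a right adjoint to $\mathscr{V}$, by exhibiting for each a natural bijection of hom-sets that is nothing more than the identity on underlying functions. First I would check that $\nabla$ and $\Delta$ are genuinely functors $\catname{Set}\rightarrow\fuz{H}$: on a function $f\colon X\rightarrow Y$ both act by keeping the same underlying map, and this is a legal fuzzy arrow because the required membership inequality reads $\bot\leq\bot$ for $\nabla f$ and $\top\leq\top$ for $\Delta f$, each of which holds trivially. Thus $\mathscr{V}\nabla=\mathrm{id}_{\catname{Set}}=\mathscr{V}\Delta$ on the nose.

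For the adjunction $\nabla\dashv\mathscr{V}$, I would inspect a typical arrow of $\fuz{H}(\nabla X,(B,\mu_B))$: it is a function $f\colon X\rightarrow B$ subject to the constraint $\mu_{\nabla X}(x)\leq\mu_B(f(x))$ for every $x\in X$. Since $\mu_{\nabla X}$ is constantly $\bot$ and $\bot$ is the least element of the frame $H$, this constraint is vacuous, so \emph{every} function $X\rightarrow B$ underlies a unique such fuzzy arrow. The forgetful map $\fuz{H}(\nabla X,(B,\mu_B))\rightarrow\catname{Set}(X,B)$ is therefore a bijection, and it is patently natural in $X$ and in $(B,\mu_B)$ because it is the identity on set-functions; this yields $\nabla\dashv\mathscr{V}$, with unit the identity $X\rightarrow\mathscr{V}\nabla X$.

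The adjunction $\mathscr{V}\dashv\Delta$ is entirely dual. An arrow $(A,\mu_A)\rightarrow\Delta X$ is a function $g\colon A\rightarrow X$ with $\mu_A(a)\leq\mu_{\Delta X}(g(a))=\top$, which again holds automatically since $\top$ is the greatest element of $H$. Hence the forgetful map $\fuz{H}((A,\mu_A),\Delta X)\rightarrow\catname{Set}(A,X)$ is a natural bijection, giving the right adjoint and the full chain $\nabla\dashv\mathscr{V}\dashv\Delta$.

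I do not expect any real obstacle here: the entire content is the single observation that the order-theoretic side condition on a fuzzy morphism degenerates precisely when the source carries the bottom membership (governing the left adjoint) or the target carries the top membership (governing the right adjoint). In both extreme cases the universal property collapses to the trivial fact that there is exactly one fuzzy arrow lying over each set-function. The only thing to keep an eye on is the naturality bookkeeping, but this is immediate because every comparison map is literally the identity on underlying functions.
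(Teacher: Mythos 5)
Your proof is correct and rests on exactly the same observation as the paper's: the membership inequality defining a fuzzy arrow becomes vacuous when the source carries the constant-$\bot$ function (giving $\nabla\dashv\mathscr{V}$) or the target carries the constant-$\top$ function (giving $\mathscr{V}\dashv\Delta$). The paper merely packages this as the statement that $\id{X}$ serves as unit of $\nabla\dashv\mathscr{V}$ and counit of $\mathscr{V}\dashv\Delta$, whereas you phrase it as a natural hom-set bijection; these are equivalent formulations of the same argument.
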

	\begin{proof}
		If $\nabla(X)$ and $\Delta(X)$ are, respectively $(X,c_\bot)$ and $(X, c_\top)$, where $c_\bot$ and $c_\top$ are the functions $X\rightarrow H$ constant in $\bot$ and $\top$, then for any $X\in \catname{Set}$, $	 	\id{X}:\mathscr{V}(\Delta(X))=X\rightarrow X=\mathscr{V}(\nabla(X))$
		is the counit of  $\mathscr{V}\dashv \Delta$ and the unit of $\nabla \dashv \mathscr{V}$.	
\end{proof}

\begin{definition}
	Let $e:A\rightarrow B$ and $m:C\rightarrow D$ be two arrows in a category $\catname{C}$, we say that $m$ has \emph{the left lifting property with respect to $e$} if for any two arrows $f:A\rightarrow C$ and $g:B\rightarrow D$ such that $m\circ f=g\circ e$ there exists a unique $k:B\rightarrow C$ with $m\circ k= g$.
		
	A \emph{strong monomorphism} is an arrow $m$ which has the left lifting property with respect to all epimorphisms. 
\end{definition}
\begin{proposition}\label{mono}
	Let $f:(A, \mu_A)\rightarrow (B, \mu_B)$ be an arrow of $\fuz{H}$, then:
	\begin{enumerate}
		\item $f$ is a monomorphism iff it is injective; $f$ is an epimorphism iff it is surjective;
		\item $f$ is a strong monomorphism iff it is injective and $\mu_B(f(x))=\mu_A(x)$ for all $x\in A$;
		\item $f$ is a split epimorphism iff for any $b\in B$ there exists $a_b\in f^{-1}(b)$ s.t.~$\mu_B(b)=\mu_A(a_b)$.
	\end{enumerate}
\end{proposition}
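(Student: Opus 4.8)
The plan is to dispatch the three items one at a time, proving each biconditional by hand and invoking \cref{adj} for the first. For item (1) the slick route is to note that $\mathscr{V}\colon\fuz{H}\rightarrow\catname{Set}$ is faithful and, by \cref{adj}, is at once a right adjoint (of $\nabla$) and a left adjoint (of $\Delta$). Since a faithful functor reflects monos and epis, a right adjoint preserves monos, and a left adjoint preserves epis, an arrow $f$ is a mono (resp.\ epi) in $\fuz{H}$ precisely when $\mathscr{V}(f)$ is one in $\catname{Set}$, i.e.\ precisely when $f$ is injective (resp.\ surjective). Should one want a self-contained argument, the cancellation properties are immediate from (in)jectivity/surjectivity of the underlying function, while the converses are obtained by testing against the free and cofree fuzzy sets supplied by \cref{adj}: maps out of $\nabla(\{\ast\})=(\{\ast\},c_\bot)$ name elements and separate any two points a mono identifies, and the two maps into $\Delta(\{0,1\})=(\{0,1\},c_\top)$ that agree on $\im{f}$ but disagree outside it force an epi to be onto.

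For item (2) I would first observe that the uniqueness clause of the lifting property makes a strong mono a mono, hence injective by item (1); the inequality $\mu_A(x)\le\mu_B(f(x))$ is free because $f$ is an arrow, so the real work is the reverse inequality. Here I would apply the lifting property to a tailored epi: setting $\mu'(x)\ed\mu_B(f(x))$, the underlying identity is a surjective, hence epi, arrow $e\colon(A,\mu_A)\rightarrow(A,\mu')$, and $f$ also defines an arrow $g\colon(A,\mu')\rightarrow(B,\mu_B)$ because $\mu'=\mu_B\circ f$; these fit into a commuting square $f\circ\id{A}=g\circ e$ against the strong mono $f$. The diagonal $k\colon(A,\mu')\rightarrow(A,\mu_A)$ satisfies $f\circ k=g$, and since $f$ is injective this forces $k=\id{A}$ on underlying sets; as $k$ is an arrow we get $\mu_B(f(x))=\mu'(x)\le\mu_A(x)$, the missing inequality. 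For the converse, assuming $f$ is injective with $\mu_B\circ f=\mu_A$, I take any epi $e\colon P\rightarrow Q$ (surjective by item (1)) and a square $f\circ f'=g\circ e$, and define $k$ on underlying sets by $k(e(p))\ed f'(p)$; this is well defined because $f(f'(p))=g(e(p))$ and $f$ is injective. The point is that $k$ is an arrow: for $q=e(p)$ we have $\mu_Q(q)\le\mu_B(g(q))=\mu_B(f(k(q)))=\mu_A(k(q))$, the last step being the hypothesis. Uniqueness of $k$ is then automatic since $f$ is mono.

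For item (3) both implications are immediate. If $s\colon(B,\mu_B)\rightarrow(A,\mu_A)$ splits $f$, then $a_b\ed s(b)\in f^{-1}(b)$ and the two arrow inequalities $\mu_B(b)\le\mu_A(s(b))\le\mu_B(f(s(b)))=\mu_B(b)$ collapse to $\mu_B(b)=\mu_A(a_b)$. Conversely a choice of witnesses $a_b$ yields a function $s(b)\ed a_b$ with $f\circ s=\id{B}$, and $\mu_B(b)=\mu_A(a_b)$ is exactly the condition making $s$ an arrow, so $f$ splits.

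I expect the only real obstacle to be the construction in item (2): one has to guess both the auxiliary membership function $\mu'$ and the orientation of the lifting square so that the diagonal is forced to be the underlying identity, which is precisely what pins the membership values together. Everything else is bookkeeping, apart from a harmless appeal to choice when selecting the witnesses $a_b$ in item (3).
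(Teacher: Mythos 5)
Your proof is correct and follows essentially the same route as the paper's: item (1) via faithfulness of $\mathscr{V}$ together with both adjunctions from \cref{adj}, item (2) by lifting $f$ against a tailored epimorphism carrying the membership function pulled back from $B$ (your $(A,\mu_B\circ f)$ is just the paper's image $(f(A),{\mu_B}_{|f(A)})$ transported along the injection $f$) and, conversely, by building the set-level diagonal filler and checking it is a fuzzy arrow, and item (3) by the same direct computation with the section. The only differences are cosmetic, so nothing further is needed.
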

	 Let us start with the following observation.
	 \begin{remark}\label{mon}
	 	In any category $\catname{C}$, if $m:A\rightarrow B$ is a strong monomorphism then it is a monomorphism. Suppose that $m\circ f = m\circ g$ for some $f$ and $g$, then we have the square \vspace{-0.5ex}
	 		\begin{center}
	 			\begin{tikzpicture}
	 			\node(A)at(0,0){$C$};
	 			\node(B)at(0,-1.5){$C$};
	 			\node(C)at(1.5,-1.5){$B$};
	 			\node(D)at(1.5,0){$A$};
	 			\draw[->](A)--(B)node[pos=0.5, left]{$\id{C}$};
	 			\draw[->](B)--(C)node[pos=0.5, below]{$m\circ g$};
	 			\draw[->](A)--(D)node[pos=0.5, above]{$f$};
	 			\draw[<-](C)--(D)node[pos=0.5, right]{$m$};
	 			\end{tikzpicture}
	 		\end{center}
 		thus there exists a unique  $k$ such that $m\circ k= m \circ g$ but both $f$ and $g$ satisfy this equation.
 \end{remark}
\begin{remark}
Notice that the previous remark implies that for any square as in the definition of strong monomorphism, $k\circ e=f$.
 \end{remark}

\begin{proof}[Proof of \cref{mono}]
		\begin{enumerate}
			\item 	$\mathscr{V}:\fuz{H}\rightarrow \catname{Set}$ is faithful, so it reflects monomorphisms and epimorphisms, since it is both a left and a right adjoint it preserves them too, hence $1$ follows.
			\item Let us show the two implications.
			\begin{itemize}
				\item[$\Rightarrow$] Injectivity follows from \cref{mon},  let $(D, \mu_D)\ed(f(A), {\mu_B}_{|f(A)})$ be the set-theoretic image of $f$ endowed with the restriction of $\mu_B$. We have a factorization of $f$ as $m\circ e$ where $m$ is a monomorphism and $e$ an epimorphism, so we have the square
				\begin{center}
					\begin{tikzpicture}
					\node(A)at(0,0){$(A, \mu_A)$};
					\node(B)at(0,-1.5){$(D, {\mu_D})$};
					\node(C)at(3,-1.5){$(B,\mu_B)$};
					\node(D)at(3,0){$(A, \mu_A)$};
					\draw[->](A)--(B)node[pos=0.5, left]{$e$};
					\draw[->](B)--(C)node[pos=0.5, below]{$m$};
					\draw[->](A)--(D)node[pos=0.5, above]{$\id{(A,\mu_A)}$};
					\draw[<-](C)--(D)node[pos=0.5, right]{$f$};
					\end{tikzpicture}
				\end{center}
				and its diagonal filling $k:(D, \mu_D)\rightarrow (A, \mu_A)$. For any $x\in A$:
				\begin{align*}
				\mu_B(f(x))=\mu_{B}(m(e(x)))=\mu_D(e(x))\geq  \mu_A(k(e(x)))=\mu_A(x) 
				\end{align*}
				and we get the thesis. 
				\item[$\Leftarrow$]  Any square \vspace{-1ex}
				\begin{center}
					\begin{tikzpicture}
					\node(A)at(0,0){$(C, \mu_C)$};
					\node(B)at(0,-1.5){$(D, {\mu_D})$};
					\node(C)at(2,-1.5){$(B,\mu_B)$};
					\node(D)at(2,0){$(A, \mu_A)$};
					\draw[->](A)--(B)node[pos=0.5, left]{$e$};
					\draw[->](B)--(C)node[pos=0.5, below]{$m$};
					\draw[->](A)--(D)node[pos=0.5, above]{$g$};
					\draw[<-](C)--(D)node[pos=0.5, right]{$f$};
					\end{tikzpicture}
				\end{center}
				induces the square 
				\begin{center}
					\begin{tikzpicture}
					\node(A)at(0,0){$C$};
					\node(B)at(0,-1.5){$D$};
					\node(C)at(1.5,-1.5){$B$};
					\node(D)at(1.5,0){$A$};
					\draw[->](A)--(B)node[pos=0.5, left]{$e$};
					\draw[->](B)--(C)node[pos=0.5, below]{$m$};
					\draw[->](A)--(D)node[pos=0.5, above]{$g$};
					\draw[<-](C)--(D)node[pos=0.5, right]{$f$};
					\end{tikzpicture}
				\end{center}
				in $\catname{Set}$, which, by point $1$ and $2$, has a diagonal filling $k:D\rightarrow A$. Now:
				\begin{align*}
				\mu_D(x)\leq \mu_B(m(x))=\mu_B(f(k(x)))=
				\mu_A(k(x))
				\end{align*} 
				and we can conclude that $k$ is an arrow of $\fuz{H}$.
				\item If $f$ is split then there exists a right inverse $e:(B, \mu_B)\rightarrow (A, \mu_A)$ and we can put $a_b:=e(b)$. The other direction follows noticing that $b\mapsto a_b$ is a splitting of $f$.\qedhere 
			\end{itemize} 
		\end{enumerate}
\end{proof}

\begin{definition}[\cite{kelly1991note}]
	A \emph{proper factorization system} on a category $\catname{C}$ is a pair $(\mathscr{E}, \mathscr{M})$  given by two classes of arrows such that: 
	\begin{itemize}
		\item $\mathscr{E}$ and $\mathscr{M}$ are closed under composition;
		\item every isomorphism belongs to both $\mathscr{E}$ and $\mathscr{M}$;
		\item every $e\in \mathscr{E}$ is an epimorphism  and every $m\in \mathscr{M}$ is a monomorphism;
		\item every $m\in \mathscr{M}$ has the left lifting property with respect to every $e\in \mathscr{E}$;
		\item every arrow of $\catname{C}$ is equal to $m\circ e$ for some $m\in\mathscr{M}$ and $e\in \mathscr{E}$.
	\end{itemize} 
	\end{definition}
	\begin{lemma}\label{prod} For any frame $H$, $\fuz{H}$ has all products. Moreover the classes of epimorphisms and strong monomorphisms form a proper factorization system on it.
	\end{lemma}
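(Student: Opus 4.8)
The plan is to treat the two assertions separately: first build products explicitly, then verify the five axioms of a proper factorization system, leaning throughout on the concrete descriptions of arrows given in \cref{mono}.

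For products, given a family $(A_i, \mu_{A_i})_{i\in I}$ of $H$-fuzzy sets, I would equip the set-theoretic product $\prod_{i\in I}A_i$ with the membership function $\mu$ sending $(x_i)_{i\in I}$ to $\bigwedge_{i\in I}\mu_{A_i}(x_i)$; this meet exists because $H$, being a frame, is a complete lattice. Each projection $\pi_j$ is then an arrow of $\fuz{H}$, since $\mu((x_i)_{i\in I})=\bigwedge_{i\in I}\mu_{A_i}(x_i)\leq \mu_{A_j}(x_j)$. To check the universal property, I would start from an arbitrary cone $(f_i:(C, \mu_C)\rightarrow (A_i, \mu_{A_i}))_{i\in I}$: the unique mediating map $f:C\rightarrow \prod_{i\in I}A_i$ in $\catname{Set}$ is automatically an arrow of $\fuz{H}$, because $\mu_C(c)\leq \mu_{A_i}(f_i(c))$ for every $i$ forces $\mu_C(c)\leq \bigwedge_{i\in I}\mu_{A_i}(f_i(c))=\mu(f(c))$.

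For the factorization system I take $\mathscr{E}$ to be the epimorphisms and $\mathscr{M}$ the strong monomorphisms, which by \cref{mono} are exactly the surjections and, respectively, the injections $m$ with $\mu_B(m(x))=\mu_A(x)$. Using these descriptions, closure under composition is immediate for both classes, and every isomorphism is simultaneously a surjection and a membership-preserving injection, hence lies in both. Every element of $\mathscr{E}$ is an epimorphism by definition, while every element of $\mathscr{M}$ is a monomorphism by \cref{mon}; and the left lifting property of $\mathscr{M}$ against $\mathscr{E}$ is nothing but the defining property of strong monomorphisms, since $\mathscr{E}$ consists precisely of the epimorphisms.

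The only axiom requiring a construction is the existence of factorizations, and here I would simply reuse the image factorization already appearing in the proof of \cref{mono}: for $f:(A, \mu_A)\rightarrow (B, \mu_B)$, set $D=f(A)$ with $\mu_D={\mu_B}_{|f(A)}$, let $e:(A, \mu_A)\rightarrow (D, \mu_D)$ be $f$ with restricted codomain, and $m:(D, \mu_D)\rightarrow (B, \mu_B)$ the inclusion. Then $e$ is a surjection and an arrow of $\fuz{H}$ because $\mu_A(x)\leq \mu_B(f(x))=\mu_D(e(x))$, $m$ is injective with $\mu_D(d)=\mu_B(d)$ and so a strong monomorphism, and $m\circ e=f$. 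I expect no real obstacle here: the only delicate point is that the product construction relies essentially on the completeness of $H$, so that arbitrary meets are available, whereas the entire factorization argument reduces to the explicit descriptions established in \cref{mono}.
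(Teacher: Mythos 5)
Your proof is correct and takes essentially the same route as the paper: the product is the set-theoretic product equipped with the pointwise infimum membership function (using completeness of $H$), and the factorization system comes from the image factorization $(A,\mu_A)\xrightarrow{e}(f(A),{\mu_B}_{|f(A)})\xrightarrow{m}(B,\mu_B)$ combined with the characterizations of epimorphisms and strong monomorphisms in \cref{mono}. You merely spell out the universal property of the product and the individual factorization-system axioms, which the paper's terser proof leaves implicit.
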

	\begin{proof}
$\Delta(1)$ is the terminal object by \cref{adj}. Given a family $\qty{\qty(A_i, \mu_{i})}$ of fuzzy sets, their product is constructed taking their product in $\catname{Set}$ equipped with
\begin{align*}
\mu:\prod_{i\in I}A_i \rightarrow H\qquad
\qty(a_i)_{i\in I} \mapsto \bigwedge_{i\in I}\mu_i(a_i)
\end{align*} The last part of the thesis follows at once  noticing that every arrow $f:(A,\mu_A)\rightarrow (B, \mu_B)$ factors as 
\begin{equation*}
(A, \mu_A )\xrightarrow{e} (f(A), {\mu_{B}}_{|f(A)})\xrightarrow{m}(B, \mu_B)	
\end{equation*}
and applying \cref{mono}.
	\end{proof}	

\begin{remark}
	Completeness and the existence of both adjoints to $\mathscr{V}$ can be deduced directly from the fact that $\fuz{H}$ is topological over $\catname{Set}$ \cite[Prop.~71.3]{wyler1991lecture}.
\end{remark}
	
\section{Fuzzy Theories}\label{sec:the}  
In this section we introduce the syntax and logical rules of fuzzy theories.
The first step is to introduce an appropriate notion of signature. 
Differently from usual signatures, in fuzzy theories constants cannot be seen simply as 0-arity operations, because , as we will see in  \cref{sec:al}, these are interpreted as fuzzy morphisms from the terminal object, and these correspond only to elements whose membership degree is $\top$.

\begin{definition} A \emph{signature} $\Sigma=(
	O, \ar, C)$ is a set $O$ of \emph{operations} with an \emph{arity function} $\ar:O\rightarrow \mathbb{N}_+$ and a set $C$ of \emph{constants}. 
	Signatures form a category $\sgn$ in which an arrow $\Sigma_1=(O_1, \ar_1, C_1)\rightarrow \Sigma_2=(O_2, \ar_2, C_2)$ is a pair $ \mathbf{F}=(F_1, F_2)$ of functions: $F_1:O_1\rightarrow O_2$ and $F_2:C_1\rightarrow C_2$ with the property that $\ar_2\circ F_1= \ar_1$. 
	
	An \emph{algebraic language} $\mathcal{L}$ is a pair $(\Sigma, X)$ where $\Sigma $ is a signature and $X$ a set. The category $\lan$ of algebraic languages is just $\sgn\times \catname{Set}$.
\end{definition}

\begin{example}
	The \emph{signature of semigroups} $\Sigma_S$ in which $O=\{\cdot\}$, $\ar(\cdot)=2$ and $C=\emptyset$. 
\end{example}
\begin{example}
	The \emph{signature of groups} $\Sigma_G$ is equal to $\Sigma_S$ plus an operation $(-)^{-1}$ of arity $1$ and a constant $e$.
\end{example}
Given a language $\mathcal{L}$ we can inductively apply the operations to the set of variables to construct terms, and once terms are built we can introduce equations.
\begin{definition}
	Given a language $\mathcal{L}=(\Sigma, X)$,  the set $\terms{L}$ is the smallest set s.t.
	\begin{itemize}
		\item $X\sqcup C\subset \terms{L}$;
		\item if $f\in O$ and $t_1,\dots, t_{\ar(f)}\in \terms{L}$ then $f\qty(t_1,\dots, t_{\ar(f)})\in \terms{L}$.
	\end{itemize}
\end{definition}
\vspace{-0.5ex}

\begin{proposition}\label{terms}
	There exists a functor $\mathsf{Terms}:\lan \rightarrow \catname{Set}$ sending $\mathcal{L}$ to $\terms{L}$.
\end{proposition}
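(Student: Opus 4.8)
The plan is to keep the action on objects as given, $\mathcal{L}\mapsto\terms{L}$, and to define the action on arrows by structural recursion on terms. So let $(\mathbf{F},g)\colon\mathcal{L}_1\to\mathcal{L}_2$ be an arrow of $\lan$, where $\mathcal{L}_i=(\Sigma_i,X_i)$, $\Sigma_i=(O_i,\ar_i,C_i)$, $\mathbf{F}=(F_1,F_2)$ and $g\colon X_1\to X_2$. I would define $\mathsf{Terms}(\mathbf{F},g)$ to be the unique function $\tau\colon\terms{L_1}\to\terms{L_2}$ satisfying
\[
\tau(x)=g(x)\ \ (x\in X_1),\qquad \tau(c)=F_2(c)\ \ (c\in C_1),\qquad \tau\big(f(t_1,\dots,t_{\ar_1(f)})\big)=F_1(f)\big(\tau(t_1),\dots,\tau(t_{\ar_1(f)})\big).
\]
The first two clauses use that $X_2\sqcup C_2\subseteq\terms{L_2}$; the third is the crucial one, and it is here that the definition of $\sgn$ pays off: because a signature morphism preserves arities, $\ar_2(F_1(f))=\ar_1(f)$, so $F_1(f)$ expects exactly as many arguments as $f$ and $F_1(f)\big(\tau(t_1),\dots,\tau(t_{\ar_1(f)})\big)$ is a genuine element of $\terms{L_2}$. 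A routine induction on the construction of terms then shows that $\tau$ is total and really lands in $\terms{L_2}$.

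It remains to check functoriality. For identities, $\mathsf{Terms}(\id{\Sigma},\id{X})$ agrees with $\id{\terms{L}}$ on variables and constants and commutes with every operation, so by induction it is the identity. For composition, given $(\mathbf{F},g)\colon\mathcal{L}_1\to\mathcal{L}_2$ and $(\mathbf{G},h)\colon\mathcal{L}_2\to\mathcal{L}_3$, both $\mathsf{Terms}\big((\mathbf{G},h)\circ(\mathbf{F},g)\big)$ and $\mathsf{Terms}(\mathbf{G},h)\circ\mathsf{Terms}(\mathbf{F},g)$ satisfy the same three recursive clauses (this uses that composition in $\lan=\sgn\times\catname{Set}$ is componentwise), hence they coincide by the uniqueness part of the recursion principle; again a one-line structural induction makes this precise.

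The only real subtlety---and the step I would be most careful about---is justifying the structural recursion itself, since $\terms{L}$ is introduced merely as the \emph{smallest} set closed under the two formation clauses. To make the recursion legitimate I would either present $\terms{L}$ explicitly as the union of the increasing chain $T_0=X\sqcup C$, $T_{n+1}=T_n\cup\{f(t_1,\dots,t_{\ar(f)})\mid f\in O,\ t_i\in T_n\}$, define $\tau$ on each $T_n$ by induction on $n$ and check that the definitions agree on the overlaps; or appeal to \emph{unique readability}, i.e. the fact that each term is exactly one of: a variable, a constant, or of the form $f(t_1,\dots,t_n)$ for a uniquely determined operation $f$ and a uniquely determined argument tuple. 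Either way the three clauses are exhaustive and mutually exclusive, so they determine a unique total $\tau$, and all the inductions above are instances of the associated induction principle.
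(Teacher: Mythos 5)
Your proof is correct and follows essentially the same route as the paper's: the action on arrows is defined by the same three recursive clauses (variables via $g$, constants via $F_2$, operations via $F_1$, using arity preservation), and functoriality is checked by structural induction. The only difference is that you spell out the justification of the recursion principle itself (via stratification or unique readability), which the paper leaves implicit.
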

\begin{proof}
		For any  $F=((F_1,F_2), h):(\Sigma_1, X)\rightarrow (\Sigma_2,Y)$ we can define  $\trms{\mathbf{F}}$ by induction:
	\begin{itemize}
		\item for any $x\in X$, $\trms{F}(x)\ed g(x)$;
		\item for any $c\in C_1$, $\trms{F}(c)\ed F_2(c)$;
		\item for any $f\in O_1$ and $t_1,\dots, t_{\ar_1(f)}\in \terms{L}$ it is $\ar_2(F_1(f))=\ar_1(f)$, so  we can define
		\begin{equation*}
			\trms{F}\qty(f\qty(t_1,\dots,t_{\ar(f)}))\ed F_1(f)\qty(\trms{F}\qty(t_1),\dots,\trms{F}\qty(t_{\ar(f)}) )
		\end{equation*}
	\end{itemize}
	 Identities are preserved and an easy induction shows that composition is respected. 
\end{proof}

\begin{definition}[Formulae]
	For any language $\mathcal{L}$ we define the sets $\eqt{L}$ of \emph{equations} as the product $\eqt{L} \ed \terms{L}\times \terms{L}$ and the set $\mmb{L}$ of \emph{membership propositions} as $\mmb{L}\ed H\times \terms{L}$.
	We will write $s\equiv t$ for $(s,t)\in \eqt{L}$ and  $\ex{l}{t}$ for  $(l,t)\in \mmb{L}$. The set $\stat{L}$ of {formulae} is then defined as $\eqt{L} \sqcup \mmb{L}$. 
\end{definition}	
Clearly, a proposition $s\equiv t$ means ``$s$ and $t$ are equivalent and hence interchangeable''; on the other hand,  $\ex{l}{t}$ intuitively means ``the degree of existence of $t$ is at least $l$''.
\begin{definition}[Sequent ant fuzzy theory]
	A \emph{sequent} $\Gamma \vdash \psi$ is an element $(\Gamma, \psi)$ of $\sqt{L}\ed\mathcal{P}(\stat{L})\times\stat{L}$, where $\mathcal{P}$ is the (covariant) power set functor.	
	A \emph{fuzzy theory in the language $\mathcal{L}$} is a subset $\Lambda\subset\sqt{L}$ and we will use $\thr{L}$ for the set  $\mathcal{P}\qty(\sqt{L})$.
\end{definition}
	
\begin{notazione}\label{not}
	We will write $\vdash \phi$ for $\emptyset \vdash \phi$.
		
	For any function $\sigma: X \rightarrow \terms{L}$ and $t\in \terms{L}$ we denote $t[\sigma]$ the term obtained substituting  $\sigma(x)$ to any occurence of $x$ in $t$.
	Moreover, for any formula $\phi\in \stat{L}$ we define $\phi[\sigma]$ as $t[\sigma]\equiv s[\sigma]$ if $\phi$ is $t\equiv s$ or as $\ex{l}{t[\sigma]}$ if $\phi$ is $\ex{l}{t}$.	Finally, given $\Gamma \subset \mathcal{P}(\stat{L})$ we put $\Gamma[\sigma]\ed\{\phi[\sigma]\mid \phi \in \Gamma\}$.
\end{notazione}

\begin{figure}[t]
\[\begin{array}{ccc}
	\inferrule*[right=A]{ \phi \in \Gamma}{\Gamma \vdash \phi} 	\qquad
	\inferrule*[right=Weak]{\Gamma\vdash \phi}{\Gamma\cup \Delta \vdash \phi} &
	\inferrule*[right=Cut]{\qty{\Gamma\vdash\phi\mid \phi \in \Phi} \\ \Phi \vdash \psi}{\Gamma \vdash \psi} \\
	\inferrule*[right=Refl]{\hspace{1pt}}{\Gamma \vdash s\equiv s} \qquad
	\inferrule*[right=Sym]{\Gamma \vdash s\equiv t}{\Gamma \vdash t\equiv s} &
	\inferrule*[right=Trans]{\Gamma \vdash s \equiv t \\ \Gamma \vdash t \equiv u }{\Gamma \vdash t \equiv u} \\ 
	\inferrule*[right=Sub]{ \sigma:X \rightarrow \terms{L}\\ \Gamma \vdash \psi}{\Gamma[\sigma] \vdash  \psi[\sigma] } &
	\inferrule*[right=Cong]{f\in O\\ \qty{\Gamma \vdash t_i \equiv s_i}_{i=1}^{\ar(f)}}{\Gamma \vdash f\qty(t_1,\dots,t_{\ar(f)}) \equiv f\qty(s_1,\dots,s_{\ar(f)}) } \\
	\inferrule*[right=Inf]{\hspace{1pt}}{\Gamma \vdash \ex{\bot}{t}}\qquad 
	\inferrule*[right=Mon]{\Gamma \vdash  \ex{l}{t}}{\Gamma \vdash  \ex{l\wedge l'}{t}  } &
	\inferrule*[right=Exp]{\qty{\Gamma \vdash \ex{l_i}{t_i}}_{i=1}^{\ar(f)}} {\Gamma \vdash  \ex{\inf\qty(\qty{l_i}_{i=1}^n)}{f\qty(t_1,\dots,t_{\ar(f)}) }}   \\
	\inferrule*[right=Sup]{S\subset H\\\qty{\Gamma \vdash \ex{l}{t}}_{l\in S}}{\Gamma \vdash  \ex{\sup(S)}{t} } &
	\inferrule*[right=Fun]{\Gamma \vdash t \equiv s \\ \Gamma \vdash \ex{l}{t}}{\Gamma \vdash \ex{ l}{s}}
\end{array}\]
\caption{Derivation rules for the fuzzy sequent calculus.}\label{fig:scrules}
\end{figure}	

\begin{definition}
	For any $\mathcal{L}$, the \emph{fuzzy sequent calculus} is given by the rules in \cref{fig:scrules}.
	
	Given a fuzzy theory $\Lambda$,   its \emph{deductive closure} $\Lambda^\vdash$ is the smallest subset of $\sqt{L}$ which contains $\Lambda$ and it is closed under the rules of fuzzy sequent calculus. 
	A sequent is \emph{derivable from $\Lambda$} (or simply derivable if $\Lambda=\emptyset$) if it belongs to ${\Lambda}^{\vdash}$. 
	We will write $\vdash_ {\Lambda}\phi$ if $\vdash \phi \in \Lambda^{\vdash}$. 
	
	Finally we say that two theories $\Lambda$ and $\Theta$ are \emph{deductively equivalent} if $\Lambda^\vdash=\Theta^{\vdash}$.
\end{definition}
	
The next result shows how maps between languages are lifted to theories.
\begin{proposition}
	For any $\mathbf{F}:\mathcal{L}_1\rightarrow \mathcal{L}_2$ arrow of $\lan$:
	\begin{enumerate}
		\item there exists a Galois connection $\mathbf{F}_*\dashv \mathbf{F}^*$ between $\qty(\mathsf{Th}\qty(\mathcal{L}_1), \subset)$ and $\qty(\mathsf{Th}\qty(\mathcal{L}_2), \subset)$;
		\item $	\mathbf{F}_*\qty(\Lambda_1^\vdash)\subset \qty(\mathbf{F}_*\qty(\Lambda_1))^\vdash$ and $\qty(\mathbf{F}^*\qty(\Lambda_2))^\vdash\subset \mathbf{F}^*\qty(\Lambda_2^\vdash)$ for any $\Lambda_1\in \mathbf{Th}(\mathcal{L}_1)$ and $\Lambda_2 \in \mathbf{Th}(\mathcal{L}_2)$.
	\end{enumerate} 
\end{proposition}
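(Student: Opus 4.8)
The plan is to realise both $\mathbf{F}_*$ and $\mathbf{F}^*$ as the direct-image and preimage operators along a single function between sets of sequents. First I would lift $\mathbf{F}:\mathcal{L}_1\rightarrow\mathcal{L}_2$ along the chain term $\to$ formula $\to$ sequent. The map $\trms{\mathbf{F}}:\terms{L_1}\rightarrow\terms{L_2}$ of \cref{terms} extends to formulae by sending $s\equiv t$ to $\trms{\mathbf{F}}(s)\equiv\trms{\mathbf{F}}(t)$ and $\ex{l}{t}$ to $\ex{l}{\trms{\mathbf{F}}(t)}$ (the frame $H$ being untouched), hence to a map $\stat{L_1}\rightarrow\stat{L_2}$, then to subsets of formulae by direct image, and finally to a function $\mathsf{s}:\sqt{L_1}\rightarrow\sqt{L_2}$ sending $\Gamma\vdash\psi$ to $\mathsf{s}(\Gamma)\vdash\mathsf{s}(\psi)$. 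I then set $\mathbf{F}_*$ to be the direct image $\mathsf{s}[-]$ and $\mathbf{F}^*$ the preimage $\mathsf{s}^{-1}[-]$ on power sets; these are exactly the two operators on $\thr{L_1}$ and $\thr{L_2}$.

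For part 1 the Galois connection is the textbook adjunction between direct image and preimage along $\mathsf{s}$: for $\Lambda_1\in\thr{L_1}$ and $\Lambda_2\in\thr{L_2}$ one has $\mathsf{s}[\Lambda_1]\subseteq\Lambda_2$ if and only if $\Lambda_1\subseteq\mathsf{s}^{-1}[\Lambda_2]$, and both operators are monotone. This step is immediate from the definitions and uses no property of the calculus.

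For part 2 I would prove both inclusions by induction on the length of derivations, showing that $\mathsf{s}$ transports each rule of \cref{fig:scrules} to an instance of the same rule over the translated theory. The structural rules (A, Weak, Cut) go through because $\mathsf{s}$ on sets of formulae is a direct image and hence commutes with unions and with membership; Refl, Sym, Trans and Fun only rearrange the $\equiv$/$\mathsf{E}$ skeleton that $\mathsf{s}$ preserves; Cong and Exp rely on the fact that $\mathbf{F}$ preserves arities, so $\trms{\mathbf{F}}(f(t_1,\dots,t_{\ar(f)}))=F_1(f)(\trms{\mathbf{F}}(t_1),\dots,\trms{\mathbf{F}}(t_{\ar(f)}))$ is again a legal application of $F_1(f)$; and Inf, Mon, Sup are trivial since $\bot$, $\wedge$, finite $\inf$ and arbitrary $\sup$ all live in the common frame $H$. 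This gives $\mathbf{F}_*(\Lambda_1^\vdash)\subseteq(\mathbf{F}_*(\Lambda_1))^\vdash$; the second inclusion is the same induction read through the preimage, its base case being that a sequent in $\mathbf{F}^*(\Lambda_2)$ is sent by $\mathsf{s}$ into $\Lambda_2\subseteq\Lambda_2^\vdash$.

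The main obstacle, in both inclusions, is the rule Sub, for which I need a substitution lemma asserting that translation commutes with substitution. Given $\sigma:X_1\rightarrow\terms{L_1}$ one wants a $\tau:X_2\rightarrow\terms{L_2}$ with $\trms{\mathbf{F}}(t[\sigma])=\trms{\mathbf{F}}(t)[\tau]$ for every $t$ (and correspondingly on formulae and on sets of formulae); granting this, applying Sub with $\tau$ to the inductively available translated premise yields precisely the translated conclusion. The delicate point is the very construction of $\tau$: the natural recipe sends each variable in the image of the variable component $h$ of $\mathbf{F}$ to the translation of its $\sigma$-value and fixes the remaining variables of $\mathcal{L}_2$, so one must track carefully how $h$ interacts with $\sigma$ in order to see that $\tau$ is well defined on $X_2$ and that the induction on the structure of $t$ closes. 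This bookkeeping is where essentially all the content of the proof lies, the remaining rules being routine.
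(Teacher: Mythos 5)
Your construction is exactly the paper's: your $\mathsf{s}$ is the paper's $\sqt{\mathbf{F}}$, your $\mathbf{F}_*$ and $\mathbf{F}^*$ are its image and preimage, part 1 is the textbook image/preimage adjunction, and part 2 is an induction on derivations showing that each rule commutes with translation. The only organizational difference is that the paper proves only the inclusion $\mathbf{F}_*\qty(\Lambda_1^\vdash)\subset \qty(\mathbf{F}_*\qty(\Lambda_1))^\vdash$ and obtains the other formally from the Galois connection (via the unit/counit and monotonicity of $(-)^\vdash$), whereas you rerun the induction; and, to your credit, you isolate \textsc{Sub} as the one rule whose commutation with translation is not automatic, a point the paper passes over in silence. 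But precisely the step you defer as bookkeeping is a genuine gap, not a routine verification: the substitution $\tau$ you need does not exist in general. If the variable component $h:X_1\rightarrow X_2$ of $\mathbf{F}$ satisfies $h(x)=h(y)$ for $x\neq y$ while $\trms{\mathbf{F}}(\sigma(x))\neq \trms{\mathbf{F}}(\sigma(y))$, then the requirement $\tau(h(x))=\trms{\mathbf{F}}(\sigma(x))$ is contradictory, so no choice of $\tau$ makes translation commute with that instance of \textsc{Sub}.

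Moreover this cannot be repaired by a cleverer argument, because the statement itself fails for non-injective $h$. Take $\Sigma$ with two constants $c,d$ and no operations, $X_1=\{x,y\}$, $X_2=\{z\}$, $\mathbf{F}$ the identity on $\Sigma$ with $h(x)=h(y)=z$, and $\Lambda_1=\{\vdash x\equiv y\}$. Applying \textsc{Sub} with $\sigma(x)=c$, $\sigma(y)=d$ puts $\vdash c\equiv d$ into $\Lambda_1^\vdash$, hence $\vdash c\equiv d \in \mathbf{F}_*\qty(\Lambda_1^\vdash)$. On the other hand $\mathbf{F}_*\qty(\Lambda_1)=\{\vdash z\equiv z\}$, and every $\Sigma$-algebra is a model of this theory, including one interpreting $c$ and $d$ as distinct elements; by soundness (which the paper establishes independently of this proposition) such an algebra satisfies every sequent in $\qty(\mathbf{F}_*\qty(\Lambda_1))^\vdash$, so $\vdash c\equiv d\notin \qty(\mathbf{F}_*\qty(\Lambda_1))^\vdash$. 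Thus your substitution lemma, and the induction built on it, go through only under an extra hypothesis such as injectivity of $h$: in that case $\tau$ defined by $\tau(h(x))=\trms{\mathbf{F}}(\sigma(x))$ and $\tau(w)=w$ for $w$ outside the image of $h$ is well defined, the identity $\trms{\mathbf{F}}(t[\sigma])=\trms{\mathbf{F}}(t)[\tau]$ follows by induction on $t$, and the rest of your argument closes. You should also be aware that the paper's own proof has the same defect: its assertion that ``the same rule can be applied'' to the translated premises is exactly the claim that fails for \textsc{Sub}.
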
 
\begin{proof}
	\begin{enumerate}
		\item For any formula $\psi$ we can define (using \cref{terms}):
		\begin{gather*}
			\stat{\mathbf{F}}(t\equiv s)\ed\trms{\mathbf{F}}(t)\equiv\trms{\mathbf{F}}(s)\\\stat{\mathbf{F}}(\ex{l}{t})\ed\ex{l}{\trms{\mathbf{F}}(t)}
		\end{gather*}
		getting a function $	\stat{\mathbf{F}}:\mathsf{Form}\qty(\mathcal{L}_1)\rightarrow \mathsf{Form}\qty(\mathcal{L}_2)$. We can extend it to sequents defining
		\begin{equation*}
			\sqt{\mathbf{F}}(\Phi	\vdash 	\psi)\ed\qty{\stat{\mathbf{F}}(\phi)\mid \phi \in \Phi }\vdash \stat{\mathbf{F}}(\psi)
		\end{equation*}
		Now it is enough to take $\mathbf{F}_*$ and $\mathbf{F}^*$ as the image and the preimage of $\sqt{\mathbf{F}}$ respectively. 
		\item First of all notice that the two inequalities are equivalent because $\mathbf{F}_*\dashv \mathbf{F}^*$. Now, the first one holds since if a sequent $\Gamma \vdash \phi$ follows from the set of sequents $\qty{\Gamma_i\vdash \phi_i}_{i\in I}$ by the application of one of the rules of the fuzzy sequent calculus then the same rule can be applied to $\qty{\sqt{\mathbf{F}}\qty(\Gamma_i\vdash \phi_i)}_{i\in I}$ to get $\sqt{\mathbf{F}}\qty(\Gamma\vdash \psi)$.
			\qedhere
	\end{enumerate} 
\end{proof}

Usually, logics enjoy the so-called ``deduction lemma'', which allows us to treat elements of a theory on a par with assumptions in sequents.  In fuzzy theories, this holds in a slightly restricted form, as proved next.
\begin{lemma}[Partial deduction lemma]\label{ext}
	Let $\Lambda$ be in $\thr{L}$ and $\Gamma\in \mathcal{P}\qty(\stat{L})$, let also $\Lambda[\Gamma]$ be the theory $\Lambda \cup \qty{\emptyset \vdash \phi \mid \phi \in \Gamma}$. Then the following are true:
	\begin{enumerate}
		\item $\Gamma \cup \Delta \vdash \psi$ in $\Lambda^{\vdash}$  implies $\Delta \vdash \psi$ in $\qty(\Lambda[\Gamma])^{\vdash}$;
		\item if $\Delta \vdash \psi$ is derivable from $\Lambda[\Gamma]$ without using rule \textsc{Sub} then $\Gamma\cup \Delta \vdash \psi $ is in $\Lambda^{\vdash}$.\qedhere
	\end{enumerate} 
\end{lemma}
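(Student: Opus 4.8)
The plan is to handle the two items separately: the first is an almost immediate consequence of \textsc{Cut}, while the second is proved by induction on the structure of a \textsc{Sub}-free derivation, rule by rule. The guiding idea throughout is that moving the premises of $\Gamma$ across the turnstile is controlled exactly by the interplay of \textsc{A}, \textsc{Weak}, and \textsc{Cut}, and that every rule other than \textsc{Sub} is insensitive to enlarging the context by $\Gamma$.

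For item 1, I would first observe that $\Lambda\subseteq\Lambda[\Gamma]$ forces $\Lambda^{\vdash}\subseteq\qty(\Lambda[\Gamma])^{\vdash}$ by monotonicity of the closure, so the hypothesis already gives $\Gamma\cup\Delta\vdash\psi$ inside $\qty(\Lambda[\Gamma])^{\vdash}$. Next, for each $\phi\in\Gamma\cup\Delta$ I would produce $\Delta\vdash\phi$ in $\qty(\Lambda[\Gamma])^{\vdash}$: when $\phi\in\Delta$ this is an instance of \textsc{A}, and when $\phi\in\Gamma$ the sequent $\emptyset\vdash\phi$ is an axiom of $\Lambda[\Gamma]$, so \textsc{Weak} yields $\Delta\vdash\phi$. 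A single application of \textsc{Cut} with $\Phi\ed\Gamma\cup\Delta$ then combines these with $\Gamma\cup\Delta\vdash\psi$ to deliver $\Delta\vdash\psi$ in $\qty(\Lambda[\Gamma])^{\vdash}$, as required.

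For item 2, I would prove, by induction on the (\textsc{Sub}-free) derivation, the invariant that every sequent $\Theta\vdash\chi$ so derivable from $\Lambda[\Gamma]$ satisfies $\Gamma\cup\Theta\vdash\chi\in\Lambda^{\vdash}$; item 2 is the special case $\Theta=\Delta$, $\chi=\psi$. The base cases are the members of $\Lambda[\Gamma]$: a sequent coming from $\Lambda$ is already in $\Lambda^{\vdash}$ and is lifted to $\Gamma\cup\Theta\vdash\chi$ by \textsc{Weak}, while an added axiom $\emptyset\vdash\phi$ with $\phi\in\Gamma$ gives $\Gamma\vdash\phi$ directly by \textsc{A}. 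For the inductive step, each purely logical rule (\textsc{Refl}, \textsc{Sym}, \textsc{Trans}, \textsc{Cong}, \textsc{Inf}, \textsc{Mon}, \textsc{Exp}, \textsc{Sup}, \textsc{Fun}) commutes with the context and so applies verbatim to the enlarged sequents $\Gamma\cup\Theta\vdash(\cdot)$ provided by the induction hypothesis; \textsc{Weak} is immediate; and \textsc{Cut} goes through because, writing $\Phi$ for the cut set, the hypothesis gives $\Gamma\cup\Theta\vdash\chi$ for every $\chi\in\Phi$, whereas the extra sequents $\Gamma\cup\Theta\vdash\chi$ for $\chi\in\Gamma$ are supplied by \textsc{A}, so cutting against $\Gamma\cup\Phi$ (together with the induction hypothesis $\Gamma\cup\Phi\vdash\psi$ on the right premise) closes the case.

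The main obstacle, and precisely the reason for the hypothesis, is the rule \textsc{Sub}. Applying it to an induction hypothesis $\Gamma\cup\Theta\vdash\chi$ produces the sequent $\Gamma[\sigma]\cup\Theta[\sigma]\vdash\chi[\sigma]$, whereas the invariant demands $\Gamma\cup\Theta[\sigma]\vdash\chi[\sigma]$ with the \emph{unsubstituted} $\Gamma$. Since \textsc{Sub} unavoidably acts on the moved-in assumptions of $\Gamma$ as well, there is in general no way to recover $\Gamma$ from $\Gamma[\sigma]$, and the argument breaks down exactly here; this is the sense in which the deduction lemma is only partial.
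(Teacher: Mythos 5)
Your proposal is correct and follows essentially the same route as the paper: item 1 via \textsc{A}, \textsc{Weak}, and a single \textsc{Cut} against $\Gamma\cup\Delta$, and item 2 by induction on the \textsc{Sub}-free derivation with exactly the same treatment of the only delicate case, \textsc{Cut}, where the sequents for formulae of $\Gamma$ are supplied by \textsc{A} and the rest by the inductive hypothesis. Your explicit statement of the inductive invariant and of why \textsc{Sub} breaks it (matching the paper's counterexample remark) is a nice clarification, but the substance is identical.
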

\begin{proof}
	\begin{enumerate}
		\item By hypothesis $\Gamma \cup \Delta \vdash \psi$ is in $\Lambda^\vdash$ then, since $\Lambda \subset \Lambda[\Gamma]$, it is also in $\qty(\Lambda[\Gamma])^\vdash$.  Now, for any $\phi \in \Gamma$ and $\theta \in \Delta$ rules \textsc{Weak} and A give
		\begin{equation*}
			\inferrule*[right=Weak]{\vdash \phi }{\Delta \vdash \phi}\qquad 
			\inferrule*[right=A]{\hspace{1pt} }{\Delta \vdash \theta}
		\end{equation*}
		so $\qty{\Delta \vdash \phi \mid \phi \in \Gamma \cup \Delta }$ is contained in $ \qty(\Lambda[\Gamma])^{\vdash}$ and we can apply  \textsc{Cut} to get the thesis: 
		\begin{equation*}
			\inferrule*[right=Cut]{\qty{\Delta \vdash \phi \mid \phi \in \Gamma \cup \Delta }\\ \Gamma \cup \Delta \vdash \psi }{\Delta \vdash \psi}
		\end{equation*}
		\item Let us proceed by induction on a derivation of $\Delta \vdash \psi$ from $\Lambda[\Gamma]$. 
		\begin{itemize}
			\item If $\Delta \vdash \psi\in \Lambda[\Gamma]$ then  or $\Delta \vdash \psi\in \Lambda$ and we are done, or $\psi\in \Gamma$  and we can conclude since $\Gamma \cup \Delta \vdash \phi_i$ is in $\Lambda^\vdash$ by rules A and \textsc{Weak}
			\item If $\Delta\vdash \psi$ follows from the application of one of the rules A, \textsc{Inf} or \textsc{Refl} then it belongs to the closure of any theory, by \textsc{Weak} this is true even for $\Gamma \cup \Delta\vdash \psi $ which, in particular, it belongs to $\Lambda^\vdash$.
			\item Suppose that $\Delta \vdash \psi$ comes from an application of \textsc{Weak}, then there exists $\Psi$ and $\Phi$ such that $\Psi \cup \Phi=\Delta$ and $\Psi\vdash \phi$ is in $(\Lambda[\Gamma])^\vdash$. By inductive hypothesis we have that
			\begin{equation*}
				\inferrule*[right=Weak]{\Gamma  \cup \Psi  \vdash \psi}{\Gamma  \cup \Psi \cup \Phi \vdash \psi  }
			\end{equation*}
			is a derivation of $\Gamma \cup \Delta \vdash \psi $ from $\Lambda$.
			\item If $\Delta \vdash \psi$ is derived with an application of \textsc{Cut} as last rule then there exists a set $\Theta$ such that $
				\qty{\Delta \vdash \theta \mid \theta \in \Theta} \cup \qty{\Theta\vdash \psi} \subseteq \qty(\Lambda[\Gamma])^\vdash
			$, therefore, by the inductive hypothesis, we have that $\qty{\Gamma \cup \Delta \vdash \theta \mid \theta \in \Theta} \cup \qty{\Gamma \cup \Theta\vdash \psi}$ is conteined in  $\Lambda^\vdash$.
			Now, by rule A we have that $\qty{\Gamma \cup \Delta \vdash \phi \mid \phi \in \Gamma\cup \Theta}$ is contained in $\Lambda^\vdash$ too and so we can apply \textsc{Cut}:
			\begin{equation*}
				\inferrule*[right=Cut]{\qty{\Gamma \cup \Delta\vdash \phi \mid \phi \in \Gamma\cup \Theta}\\ \Gamma \cup \Theta \vdash \psi}{\Gamma \cup \Delta \vdash \psi}
			\end{equation*}
			which gives us the thesis.
			\item Any other rule is of the form
			\begin{equation*}
				\inferrule*[right=R]{\qty{\Psi \vdash \xi_j}_{j\in J}}{\Psi \vdash \xi}
			\end{equation*}
			therefore, if $\Delta \vdash \psi$ is derived with an application of one of this rules then the set of its premises must be an element of $(\Lambda[\Gamma])^\vdash$ of type $\qty{\Delta \vdash \xi_j}_{j\in J}$, so by inductive hypothesis $\qty{\Gamma\cup \Delta \vdash \theta_j}_{j\in J}\subset \Lambda^\vdash$ and then
			\begin{equation*}
				\inferrule*[right=R]{\qty{\Gamma \cup \Delta \vdash \xi_j}_{j\in J}}{\Gamma \cup \Delta \vdash \psi}
			\end{equation*}
			witnesses the thesis.\qedhere
		\end{itemize}
	\end{enumerate} 
\end{proof}

\begin{example}\label{ex:semig1}
	Our first set of running examples is inspired by \cite{mordeson2012fuzzy}.
	Let $\Sigma_{S}$ be the signature of semigroups and $X$ a countable set. The theory of \emph{fuzzy semigroups} $\Lambda_S$ is simply the usual theory of semigroups, i.e given by the sequent (using infix notation) 
	\begin{gather*}
		\vdash (x\cdot y)\cdot z \equiv x\cdot (y \cdot z)
	\end{gather*}
	We get the theory of \emph{left ideal} $\Lambda_{LI}$ if we add the axioms (one for any $l\in L$):
	\begin{equation*}
		\ex{l}{y}	\vdash \ex{l}{x\cdot y}
	\end{equation*} 
	Similarly the theory $\Lambda_{RI}$ of \emph{right ideal} is obtained from the axioms:
	\begin{equation*}
		\ex{l}{x}	\vdash \ex{l}{x\cdot y}
	\end{equation*} 
	Finally we get the theory of \emph{(bilateral) ideal} $\Lambda_I$ taking the union of the above theories.
\end{example}

\begin{example}[\cite{rosenfeld1971fuzzy, ajmal1994homomorphism, ajmal1992fuzzy}]\label{ex:fuzzyg1}
	Let $\Sigma_{G}$ be the signature of groups and $X$ a countable set.	The theory $\Lambda_G$ of \emph{fuzzy groups} is simply the usual theory of groups, i.e that given by the sequents
	\begin{gather*}
		\vdash x \cdot x^{-1} \equiv e  \quad
		\vdash  x^{-1} \cdot x \equiv e  \quad
		\vdash e\cdot x \equiv x  \quad
		\vdash x \cdot x \equiv x   \quad
		\vdash (x\cdot y)\cdot z\equiv x\cdot (y \cdot z)
	\end{gather*}
	We get the theory $\Lambda_N$ of \emph{normal fuzzy groups} (\cite{mashour1990normal}) if we add the axioms:
	\begin{equation*}
		\ex{l}{x}	\vdash \ex{l}{y\cdot (x \cdot y^{-1})}
	\end{equation*}
\end{example}
\vspace{-1ex}
\begin{proposition}\label{equ} 
	The sequent
$\ex{l}{x} \vdash \ex{l}{e}$   is derivable from $\Lambda_G$ and $	\ex{l}{y\cdot (x \cdot y^{-1})}\vdash \ex{l}{x}$ is derivable from $\Lambda_N$.
\end{proposition}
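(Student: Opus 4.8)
The plan is to obtain both derivations by combining the existence-propagation rule \textsc{Exp} with the rule \textsc{Fun}, which transports a membership proposition along a derivable equation, while discharging all purely equational obligations inside the ordinary equational theory generated by $\Lambda_G$. I will use freely that any equational consequence of the group axioms is derivable, since \textsc{Refl}, \textsc{Sym}, \textsc{Trans}, \textsc{Cong} and \textsc{Sub} reproduce ordinary equational logic.

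For $\ex{l}{x}\vdash\ex{l}{e}$, I start from $\ex{l}{x}\vdash\ex{l}{x}$ (rule \textsc{A}). Applying \textsc{Exp} to the unary operation $(-)^{-1}$ gives $\ex{l}{x}\vdash\ex{l}{x^{-1}}$, because $\inf(\{l\})=l$. Feeding $\ex{l}{x}\vdash\ex{l}{x}$ and $\ex{l}{x}\vdash\ex{l}{x^{-1}}$ into \textsc{Exp} for the binary operation $\cdot$ yields $\ex{l}{x}\vdash\ex{l}{x\cdot x^{-1}}$, since $l\wedge l=l$. Weakening the axiom $\vdash x\cdot x^{-1}\equiv e$ to $\ex{l}{x}\vdash x\cdot x^{-1}\equiv e$ (rule \textsc{Weak}) and applying \textsc{Fun} with $t\ed x\cdot x^{-1}$ and $s\ed e$ then delivers $\ex{l}{x}\vdash\ex{l}{e}$.

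For the second sequent, the key observation is that $x$ is the conjugate of $w\ed y\cdot(x\cdot y^{-1})$ by $y^{-1}$. Accordingly, I instantiate the normality axiom $\ex{l}{x}\vdash\ex{l}{y\cdot(x\cdot y^{-1})}$ of $\Lambda_N$ along the substitution $\sigma$ with $\sigma(x)\ed w$ and $\sigma(y)\ed y^{-1}$ (rule \textsc{Sub}). Since $\sigma$ is applied simultaneously, the left-hand side becomes exactly $\ex{l}{w}$, and the sequent obtained is
\begin{equation*}
\ex{l}{y\cdot(x\cdot y^{-1})}\vdash\ex{l}{y^{-1}\cdot((y\cdot(x\cdot y^{-1}))\cdot(y^{-1})^{-1})}.
\end{equation*}
It then remains to prove that the right-hand term equals $x$ and to close with \textsc{Fun}, after weakening that equation into the context $\ex{l}{w}$.

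The laborious part is the equational lemma
\begin{equation*}
y^{-1}\cdot((y\cdot(x\cdot y^{-1}))\cdot(y^{-1})^{-1})\equiv x,
\end{equation*}
to be proved from $\Lambda_G$. It rests on the right unit law $x\cdot e\equiv x$ and the double-inverse law $(y^{-1})^{-1}\equiv y$, both consequences of $\Lambda_G$ obtained by repeated use of \textsc{Trans}, \textsc{Cong} and \textsc{Sub}. Once these are in hand, rewriting $(y^{-1})^{-1}$ to $y$ and reassociating collapse the two occurrences of $y^{-1}\cdot y$ to $e$ and reduce the whole term to $x$. These rewrites are long but entirely routine, so the full derivation is best recorded in an appendix.
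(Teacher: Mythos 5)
Your proposal is correct. For the first sequent it follows the paper's own derivation exactly: rule \textsc{A}, then \textsc{Exp} for $(-)^{-1}$, then \textsc{Exp} for $\cdot$, then \textsc{Fun} along $x\cdot x^{-1}\equiv e$ (you are in fact slightly more scrupulous, making explicit the \textsc{Weak} that brings the closed equation into the context $\ex{l}{x}$, which the paper leaves implicit). For the second sequent you take a genuinely different route. The paper applies \textsc{Sub} twice in sequence (first $x\mapsto y^{-1}\cdot(x\cdot y)$, later $y\mapsto y^{-1}$), interleaved with two applications of \textsc{Fun}, each discharging a short group identity, namely $x\equiv y\cdot((y^{-1}\cdot(x\cdot y))\cdot y^{-1})$ and $y\equiv (y^{-1})^{-1}$; notably, the paper's final \textsc{Fun} rewrites the formula on the \emph{left} of the turnstile, which is not literally what \textsc{Fun} licenses and silently compresses a \textsc{Cong}/\textsc{Fun}/\textsc{Cut} combination. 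You instead perform one simultaneous substitution $x\mapsto y\cdot(x\cdot y^{-1})$, $y\mapsto y^{-1}$ into the normality axiom, so that the antecedent is already the target $\ex{l}{y\cdot(x\cdot y^{-1})}$, and then close with a single \textsc{Fun} on the consequent along the longer derivable identity $y^{-1}\cdot((y\cdot(x\cdot y^{-1}))\cdot(y^{-1})^{-1})\equiv x$. Your version only ever uses \textsc{Fun} in the exact form in which the rule is stated, at the price of a bigger equational side condition; the paper's version keeps each equational lemma small at the price of that informal antecedent-side rewriting step. Both equational obligations are routine consequences of the group axioms, available since $\Lambda_G\subset\Lambda_N$, exactly as you claim.
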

\begin{proof}
		The two derivations are shown in \cref{sec:der}. In the second derivation we have used the fact that $\vdash y \equiv \qty(y^{-1})^{-1}$ and  $\vdash x\equiv y \cdot\qty (\qty(y^{-1} \cdot (x\cdot y)) \cdot y^{-1})$ are derivable in the theory of groups (this can be shown as in the case of ordinary group theory) and we have substituted $y^{-1}\cdot (x \cdot y)$ for $x$ in the first application of \textsc{Sub} and $y^{-1}$ for $y$ in the second one. 
\end{proof}
	
\section{Fuzzy algebras and semantics}\label{sec:al}
In this section we provide a sound and complete semantics to the syntax and sequents introduced in \cref{sec:the}. The first step is to define the semantic counterpart of a signature.
\begin{definition}
	Given a signature $\Sigma$, a $\Sigma$-\emph{fuzzy algebra} $\mathcal{A}\ed((A, \mu_A), \Sigma^{\mathcal{A}})$ is a fuzzy set $(A, \mu_A)$ and a collection $\Sigma^{\mathcal{A}}=\{f^{\mathcal{A}}\mid f\in O\}\sqcup \{c^{\mathcal{A}}\mid c\in C\}$ of arrows:
\begin{align*}
	f^{\mathcal{A}}:(A,\mu_A)^{\ar(f)}\rightarrow (A, \mu_A)\qquad
	c^{\mathcal{A}}:(1, c_\bot)\rightarrow (A, \mu_A)
\end{align*}
where $c_\bot$ is the constant function in $\bot$.
A \emph{morphism of $\Sigma$-fuzzy algebras} $\mathcal{A} \rightarrow \mathcal{B}$ is an arrow $g:(A, \mu_A)\rightarrow (B, \mu_B)$ such that $g\circ c^{\mathcal{A}}=c^{\mathcal{B}}$ and  $f^{\mathcal{B}}\circ g^{\ar(f)}=g\circ f^{\mathcal{A}}$ for every $c\in C$ and $f\in O$.	 We will write $\alg{\Sigma}$ for the resulting category of $\Sigma$-fuzzy algebras.
\end{definition}
\begin{remark}
	We will not distinguish between a function from the singleton and its value.
\end{remark}

\begin{definition}
	Let $\mathcal{L}=(\Sigma, X)$ be a language and $\mathcal{A}=\qty((A, \mu_A), \Sigma^{\mathcal{A}})$ be a $\Sigma$-algebra.
	
	An \emph{assignment} is simply a function $\iota:X\rightarrow A$. We define the \emph{evaluation in $\mathcal{A}$ with respect to $\iota$} as the function $\ext{A}{\iota}{(\text{-})}:\terms{L}\rightarrow A$ by induction:
	\begin{itemize}
		\item $\ext{A}{\iota}{x}\ed\iota(x)$ if $x\in X$;
		\item $\ext{A}{\iota}{c}\ed c^{\mathcal{A}}$ if $c\in C$;
		\item $\ext{A}{\iota}{(f(t_1,\dots,t_{\ar(f)}))}\ed f^{\mathcal{A}}(\ext{A}{\iota}{t_1}, \dots, \ext{A}{\iota}{t_{\ar(f)}})$ if $f\in O$.
	\end{itemize}
\end{definition}

\begin{proposition}\label{su}
	Let $\mathcal{A}$ be a $\Sigma$-algebra. Given a function $\sigma :X\rightarrow \terms{L}$ and an assignment $\iota:X \rightarrow A$ define $
	\iota_\sigma:X\rightarrow A$ as the assignment sending		$x$ to $ \qty(\sigma\qty(x))^{\mathcal{A}, \iota}$.
	Then $\mathcal{A}\vDash_\iota \phi[\sigma]$ if and only if $\mathcal{A}\vDash_{\iota_\sigma} \phi$. 
\end{proposition}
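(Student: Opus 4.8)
The plan is to reduce the statement to a substitution lemma at the level of terms and then read off the two cases of formulae from the definition of satisfaction. First I would establish the key auxiliary fact that substitution commutes with evaluation: for every term $t\in\terms{L}$,
\begin{equation*}
	\ext{A}{\iota}{t[\sigma]}=\ext{A}{\iota_\sigma}{t}.
\end{equation*}
This is proved by structural induction on $t$, following the inductive definition of $\terms{L}$ together with the clauses defining $\ext{A}{\iota}{(\text{-})}$. For a variable $x\in X$ we have $x[\sigma]=\sigma(x)$, so $\ext{A}{\iota}{x[\sigma]}=\ext{A}{\iota}{\sigma(x)}=\iota_\sigma(x)=\ext{A}{\iota_\sigma}{x}$ directly from the definition of $\iota_\sigma$. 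For a constant $c\in C$ we have $c[\sigma]=c$ and $\ext{A}{\iota}{c}=c^{\mathcal{A}}=\ext{A}{\iota_\sigma}{c}$, the point being that the interpretation of a constant does not depend on the assignment. For $t=f(t_1,\dots,t_{\ar(f)})$ substitution distributes over the operation, so applying $f^{\mathcal{A}}$ together with the induction hypothesis coordinatewise yields the claim.

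With the lemma in hand, the proposition follows by case analysis on the shape of $\phi$. If $\phi$ is an equation $s\equiv t$, then $\phi[\sigma]$ is $s[\sigma]\equiv t[\sigma]$, and $\mathcal{A}\vDash_\iota\phi[\sigma]$ unfolds, by the definition of satisfaction, to $\ext{A}{\iota}{s[\sigma]}=\ext{A}{\iota}{t[\sigma]}$; by the lemma this is equivalent to $\ext{A}{\iota_\sigma}{s}=\ext{A}{\iota_\sigma}{t}$, that is, to $\mathcal{A}\vDash_{\iota_\sigma}\phi$. If $\phi$ is a membership proposition $\ex{l}{t}$, then $\phi[\sigma]$ is $\ex{l}{t[\sigma]}$, and $\mathcal{A}\vDash_\iota\phi[\sigma]$ unfolds to $l\leq\mu_A(\ext{A}{\iota}{t[\sigma]})$; again by the lemma this is equivalent to $l\leq\mu_A(\ext{A}{\iota_\sigma}{t})$, that is, to $\mathcal{A}\vDash_{\iota_\sigma}\phi$. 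Since each reformulation is an equivalence, both directions of the biconditional are obtained at once.

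I do not expect a genuine obstacle here: the whole content is the substitution lemma, which is a routine structural induction. The only point requiring a moment's care is the constant case, since in this fuzzy setting constants are interpreted as arrows out of $(1,c_\bot)$ rather than as nullary operations, so one must explicitly note that their evaluation is assignment-independent. The operation step is then purely mechanical, resting on the fact that evaluation commutes with each $f^{\mathcal{A}}$.
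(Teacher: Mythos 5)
Your proof is correct and follows exactly the paper's route: the paper likewise reduces the statement to the identity $\qty(t[\sigma])^{\mathcal{A},\iota}=t^{\mathcal{A},\iota_\sigma}$ for all terms $t$, from which the two cases of the definition of satisfaction give the biconditional at once. You merely spell out the structural induction and the case analysis that the paper leaves implicit, which is fine.
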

\begin{proof}
		This follows at once noticing that $\big{(}t\qty[\sigma]\big{)}^{\mathcal{A}, \iota}=t^{\mathcal{A}, \iota_\sigma}$ holds for every term $t$.	
\end{proof}

\begin{definition}
	$\mathcal{A}$ \emph{satisfies $\phi\in \stat{L}$ with respect to $\iota$}, and we write $\mathcal{A}\vDash_{\iota} \phi$, if $\phi$ is $\ex{l}{ t}$ and $l \leq \mu_A(\ext{A}{\iota}{t})$ or if $\phi$ is $t \equiv s$ and $ \ext{A}{\iota}{t}=\ext{A}{\iota}{s}$.
	
	$\mathcal{A}$ \emph{satisfies $\phi$} if $\mathcal{A}\vDash_{\iota} \phi $ for all $\iota:X\rightarrow A$, and we write $\mathcal{A}\vDash \phi$.
	
	Given $\Gamma\subset \stat{L}$, $\mathcal{A}\vDash \Gamma$ ($\mathcal{A}\vDash_{\iota} \Gamma$) means $\mathcal{A}\vDash \phi$ ($\mathcal{A}\vDash_{\iota} \phi$) for any $\phi \in \Gamma$.
	Finally, given a sequent $\Gamma \vdash \phi$ we say that $\mathcal{A}$ \emph{satisfies it with respect to $\iota$} and we will write $\Gamma\vDash_{\mathcal{A}, \iota} \phi$ if $\mathcal{A}\vDash_{\iota} \phi$ whenever $\mathcal{A}\vDash_{\iota}\Gamma$; if this happens for all assignments  $\iota$ we say that $\mathcal{A}$ \emph{satisfies} the sequent and we will write $\Gamma\vDash_{\mathcal{A}} \phi$. 
	
	We say that a $\Sigma$-fuzzy algebra $\mathcal{A}$ is a \emph{model} of a fuzzy theory $\Lambda\in \thr{L}$ if it satisfies all the sequents in it. 
	
	We will write $\mode{\Lambda}$ for the full subcategory of $\alg{\Sigma}$ given by the models of $\Lambda$.
\end{definition}
	Clearly $\alg{\Sigma}=\mode{\emptyset}$. For any $\Lambda\in \thr{L}$ there exist two forgetful functors $\mathscr{U}_\Lambda:\mode{\Lambda}\rightarrow \fuz{L}$ and $\mathscr{V}_\Lambda:\mode{\Lambda}\rightarrow \catname{Set}$.
	We will write $\mathscr{U}_\Sigma$ and $\mathscr{V}_\Sigma$  for $\mathscr{U}_\emptyset$ and $\mathscr{V}_\emptyset$.
		
	\begin{proposition}\label{left}
		For any signature $\Sigma$, $\mathscr{V}_\Sigma$ has a left adjoint $\trma{\Sigma}:\catname{Set}\rightarrow \mode{\Lambda}$.
	\end{proposition}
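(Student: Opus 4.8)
The plan is to construct, for each set $X$, the free $\Sigma$-fuzzy algebra on $X$ explicitly as the term algebra and then verify its universal property with respect to $\mathscr{V}_\Sigma:\alg{\Sigma}\rightarrow \catname{Set}$. First I would set $\mathcal{L}=(\Sigma, X)$ and take the underlying set of $\trma{\Sigma}(X)$ to be $\terms{L}$, interpreting each $f\in O$ by term formation and each $c\in C$ by itself. The crucial point is the membership function: I claim it must be the constant map $c_\bot$, so that the underlying fuzzy set is $\nabla(\terms{L})$. The motivation is that freeness cannot force any term to have positive degree. Indeed, the unit will send a variable $x$ to itself, but an assignment may send $x$ to an element of $\bot$ membership, so $x$ cannot have degree above $\bot$; likewise a constant's interpretation $c^{\mathcal{A}}:(1,c_\bot)\rightarrow(A,\mu_A)$ is constrained only by $\bot\leq\mu_A(c^{\mathcal{A}})$. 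Since every operation has arity at least $1$ and the product membership from \cref{prod} is an infimum, a structural induction shows every term is forced to degree $\bot$. With $c_\bot$ as membership the arrow conditions $\bigwedge_i c_\bot(t_i)=\bot\leq\bot=c_\bot(f(\dots))$ for operations and $\bot\leq\bot$ for constants hold trivially, so $\trma{\Sigma}(X)$ is a genuine object of $\alg{\Sigma}$.

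Next I would exhibit the unit and establish the universal property. The unit $\eta_X:X\rightarrow\mathscr{V}_\Sigma\trma{\Sigma}(X)=\terms{L}$ is the inclusion of variables. Given any $\Sigma$-algebra $\mathcal{A}=((A,\mu_A),\Sigma^{\mathcal{A}})$ and any function $h:X\rightarrow A$, I regard $h$ as an assignment $\iota$ and define $\bar h\ed\ext{A}{\iota}{(\text{-})}:\terms{L}\rightarrow A$. The inductive clauses defining evaluation say precisely that $\bar h(c)=c^{\mathcal{A}}$ and $\bar h(f(t_1,\dots))=f^{\mathcal{A}}(\bar h(t_1),\dots)$, so $\bar h$ respects the $\Sigma$-structure at the level of sets; moreover it is a morphism of fuzzy sets because the domain membership is $\bot$, making the condition $c_\bot(t)=\bot\leq\mu_A(\bar h(t))$ automatic. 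By construction $\mathscr{V}_\Sigma(\bar h)\circ\eta_X=h$.

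For uniqueness, any morphism $g:\trma{\Sigma}(X)\rightarrow\mathcal{A}$ with $\mathscr{V}_\Sigma(g)\circ\eta_X=h$ must satisfy $g(x)=h(x)$ on variables, $g(c)=c^{\mathcal{A}}$, and $g(f(t_1,\dots))=f^{\mathcal{A}}(g(t_1),\dots)$ by the defining equations of an algebra morphism; a structural induction on terms then forces $g=\bar h$. This shows $\eta_X$ is universal from $X$ to $\mathscr{V}_\Sigma$, and assembling these data over all $X$ yields the adjunction $\trma{\Sigma}\dashv\mathscr{V}_\Sigma$, functoriality of $\trma{\Sigma}$ following from \cref{terms}.

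The bulk of this is routine term-algebra bookkeeping, and I do not expect a serious obstacle. The one genuine point — and the step I would flag as carrying the content — is recognizing that freeness collapses the membership function to $c_\bot$: once that is in place, the fuzzy-set morphism condition on $\bar h$ becomes vacuous and the argument reduces to the classical construction of a free algebra over $\catname{Set}$.
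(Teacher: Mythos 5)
Your proposal is correct and takes essentially the same route as the paper: both define $\trma{\Sigma}(X)$ as $\nabla(\terms{L})$ for $\mathcal{L}=(\Sigma,X)$, with operations interpreted by term formation and constants by themselves, and both verify the universal property by showing the evaluation map $\ext{A}{\iota}{(\text{-})}$ is the unique morphism of $\alg{\Sigma}$ restricting to the given assignment on variables. Your extra discussion arguing that freeness \emph{forces} the membership function to be $c_\bot$ is harmless motivation but not needed — the proof only requires verifying the universal property for that choice, which you do.
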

	\begin{proof}For any set $X$ take the language $\mathcal{L}_X$ and define  $\trma{\Sigma}(X)$ has $(\nabla(\mathcal{L}_X\text{-}\mathbf{Terms}), \Sigma^{\trma{\Sigma}(X)})$ where $c^{\trma{\Sigma}(X)}:=c$ and
		\begin{align*}
			f^{\trma{\Sigma}(X)}: \nabla(\mathcal{L}_X\text{-}\mathbf{Terms})^{\ar(f)}\rightarrow \nabla(\mathcal{L}_X\text{-}\mathbf{Terms})\qquad
			\qty(t_1,\dots, t_{\ar(f)})\mapsto f(t_1,\dots, t_{\ar(f)})
		\end{align*}
		for any $f\in O$. Now,  it is immediate to see that for any $\iota:X\rightarrow \mathscr{V}_\Sigma(\mathcal{A})$ the evaluation $\ext{A}{\iota}{(\text{-})}$ is the unique morphism of $\alg{\Sigma}$ that composed with the inclusion $X\rightarrow\mathcal{L}_X\text{-}\mathbf{Terms}$ gives back $\iota.$ 	
\end{proof}

We now provide two technical results about interpretations.
The first describes how interpretations are moved along morphisms of algebras.
\begin{proposition}\label{ind}
	Let $\mathcal{L}=(\Sigma, X)$ be a language, $\Lambda \in \thr{L}$ and $\mathcal{A}=\qty(\qty(A, \mu_A), \Sigma^{\mathcal{A}}) $, $\mathcal{B}=\qty(\qty(B, \mu_B), \Sigma^{\mathcal{B}})$ be two $\Sigma$-algebras. 
	Let also $f:\mathcal{A}\rightarrow \mathcal{B}$ be a morphism between them, then:
	\begin{enumerate}
		\item $\mathcal{A}$ is a model of $\Lambda$ if and only if it is a model of $\Lambda^{\vdash}$;
		\item $f\circ (-)^{\mathcal{A}, \iota}= (-)^{\mathcal{B}, f\circ \iota}$ for every assignment $\iota:X\rightarrow A$;
		\item for any assignment $\iota:X\rightarrow A$, $\mathcal{A}\vDash_\iota \phi$ entails $\mathcal{B}\vDash_{f\circ\iota} \phi$;
		\item if $\mathscr{U}_\Sigma(f)$ is a strong monomorphism in $\fuz{H}$ and $\iota:X\rightarrow A$ is an assignment then, for any formula $\phi$, $\mathcal{A}\vDash_\iota \phi$ if and only if $\mathcal{B}\vDash_{f\circ \iota} \phi$;
		\item if $\mathscr{U}_\Sigma(f)$ is a strong monomorphism in $\fuz{H}$ and $\mathcal{B}\in \mode{\Lambda}$ then  $\mathcal{A}\in \mode{\Lambda}$.  
	\end{enumerate}
\end{proposition}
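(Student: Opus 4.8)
The plan is to prove the five items in a dependency order rather than the order stated: item~2 is the structural backbone, items~3, 4 and 5 follow by unwinding the definition of satisfaction, and item~1 is a self-contained soundness argument that does not involve $f$ at all.

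I would begin with item~2, the identity $f\circ\ext{A}{\iota}{(-)}=\ext{B}{f\circ\iota}{(-)}$, proved by structural induction on $t\in\terms{L}$. For a variable $x$ both sides are $f(\iota(x))$; for a constant $c$ the algebra-morphism law $f\circ c^{\mathcal{A}}=c^{\mathcal{B}}$ gives the claim; for $g(t_1,\dots,t_{\ar(g)})$ the law $g^{\mathcal{B}}\circ f^{\ar(g)}=f\circ g^{\mathcal{A}}$ together with the inductive hypothesis on the $t_i$ closes the step. Item~3 is then a two-case check: for $t\equiv s$ apply $f$ to $\ext{A}{\iota}{t}=\ext{A}{\iota}{s}$ and rewrite both sides with item~2; for $\ex{l}{t}$ use that $f$ is an arrow of $\fuz{H}$, so $\mu_A(a)\le\mu_B(f(a))$, and chain $l\le\mu_A(\ext{A}{\iota}{t})\le\mu_B(f(\ext{A}{\iota}{t}))=\mu_B(\ext{B}{f\circ\iota}{t})$.

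For item~4 the forward direction is exactly item~3; for the converse I would invoke \cref{mono}(2), which says that a strong monomorphism $f$ is injective and satisfies $\mu_B(f(x))=\mu_A(x)$. In the equality case $f(\ext{A}{\iota}{t})=f(\ext{A}{\iota}{s})$ (item~2) plus injectivity gives $\ext{A}{\iota}{t}=\ext{A}{\iota}{s}$; in the membership case $l\le\mu_B(\ext{B}{f\circ\iota}{t})=\mu_B(f(\ext{A}{\iota}{t}))=\mu_A(\ext{A}{\iota}{t})$. Item~5 is now immediate: given $\Gamma\vdash\psi\in\Lambda$ and $\iota$ with $\mathcal{A}\vDash_\iota\Gamma$, item~3 transports the hypotheses to $\mathcal{B}\vDash_{f\circ\iota}\Gamma$, membership of $\mathcal{B}$ in $\mode{\Lambda}$ yields $\mathcal{B}\vDash_{f\circ\iota}\psi$, and the converse half of item~4 brings the conclusion back to $\mathcal{A}\vDash_\iota\psi$.

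Item~1 I would prove separately: $\Leftarrow$ is trivial since $\Lambda\subseteq\Lambda^{\vdash}$, and for $\Rightarrow$ I would show that the set of sequents satisfied by $\mathcal{A}$ contains $\Lambda$ and is closed under every rule of \cref{fig:scrules}, so by minimality it contains $\Lambda^{\vdash}$. Most rules (\textsc{A}, \textsc{Weak}, \textsc{Cut}, \textsc{Refl}, \textsc{Sym}, \textsc{Trans}, \textsc{Cong}, \textsc{Fun}, \textsc{Mon}, \textsc{Sup}, \textsc{Inf}) are direct from the definition of $\vDash_\iota$, and \textsc{Sub} uses \cref{su}. The one step needing genuine care, and the main obstacle, is \textsc{Exp}: here I must use that the product membership function computed in \cref{prod} is the meet, together with the fuzzy-morphism inequality for $g^{\mathcal{A}}$, to obtain $\inf_i l_i\le\bigwedge_i\mu_A(\ext{A}{\iota}{t_i})\le\mu_A(g^{\mathcal{A}}(\ext{A}{\iota}{t_1},\dots,\ext{A}{\iota}{t_{\ar(g)}}))$. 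Beyond this the only subtlety is to keep both directions of item~4 available, since item~5 needs the forward transfer of the premises together with the backward transfer of the conclusion.
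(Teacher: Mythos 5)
Your proof is correct and follows essentially the same route as the paper: item~2 by structural induction, items~3 and~4 by the two-case unwinding of satisfaction together with the strong-monomorphism characterization from \cref{mono}, and item~5 by transferring the premises forward via item~3 and the conclusion backward via item~4. The only cosmetic difference is item~1, where the paper simply cites its soundness result (\cref{sound}, proved just afterwards via \cref{rules}), while you inline exactly that closure-under-rules argument---including the \textsc{Sub}/\cref{su} and \textsc{Exp}/meet points---so the content is identical.
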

\begin{proof}
		\begin{enumerate}
			\item The implication from the right to the left is obvious since $\Lambda \subseteq \Lambda^{\vdash}$. The other follows from soundness.
			\item This follows at once by structural induction. 
			\item This follows from the previous point: if $\phi$ is $t\equiv s$ then
			\begin{align*}
				t^{\mathcal{B}, f\circ \iota}=f\qty(t^{\mathcal{A},  \iota})=f\qty(s^{\mathcal{A},  \iota})=s^{\mathcal{B}, f\circ \iota}
			\end{align*}  
			otherwise
			\begin{align*}
				l \leq \mu_A\qty(t^{\mathcal{A},  \iota})\leq \mu_B\qty(f\qty(t^{\mathcal{A},  \iota}))= 	\mu_B\qty(t^{\mathcal{B}, f\circ \iota})
			\end{align*}
			\item We have to show only the implication from the right to the left. Let us proceed by cases remembering that by \cref{mono} $\mathscr{U}_\Sigma(f)$ is a strong monomorphism if and only if it is injective and $\mu_A(a)=\mu_B(f(a))$ for any $a \in A$.
			\begin{itemize}
				\item $\phi=t\equiv s$.  $\mathcal{A}\vDash_{f\circ \iota} t\equiv s$ if and only if  $t^{\mathcal{A}, \iota}=s^{\mathcal{A},  \iota}$, by injectivity of $f$ this is equivalent to 	$f(t^{\mathcal{A, \iota}})=f(s^{\mathcal{A, \iota}})$ and by point $2$ this is equivalent to $t^{\mathcal{B},f\circ  \iota}=s^{\mathcal{B},f\circ \iota}$, i.e. $\mathcal{B}\vDash_{f\circ \iota}\phi$. 
				\item $\phi=\ex{l}{t}$. As before, $\mathcal{A}\vDash_{ \iota} \phi$ if and only if $\mu_A(t^{\mathcal{A},  \iota})\geq l$ but
				\begin{align*}
					\mu_A(t^{\mathcal{A},  \iota})=	\mu_B(f(t^{\mathcal{A},  \iota})) 	=\mu_B(t^{\mathcal{B}, f\circ \iota})	
				\end{align*}  
				and thus  $\mathcal{A}\vDash_{ \iota} \phi$ is equivalent to $\mathcal{B}\vDash_{f\circ \iota} \phi$.
			\end{itemize} 
			\item Let $\Gamma \vdash \phi $ be in $\Lambda$ and suppose $\mathcal{A}\vDash_\iota \Gamma$ for some assignment $\iota:X\rightarrow B$, by the previous point this implies $\mathcal{B}\vDash_{f\circ \iota }\Gamma$, so, by hypothesis $\mathcal{B}\vDash_{f\circ \iota }\phi$ and we can conclude again using point $3$.\qedhere 
		\end{enumerate} 
\end{proof}

We can also move interpretations and theories along morphisms of signatures.
\begin{definition}
	For any  $\mathbf{F}:\Sigma_1\rightarrow \Sigma_2$ arrow of $\sgn$ and any $\mathcal{A}=\qty(\qty(A, \mu_A), \Sigma_2^{\mathcal{A}})\in \alg{\Sigma_2}$, we define $\rs{F}(\mathcal{A})=\qty(\qty(A,\mu_A), \Sigma_1^{\rs{F}(\mathcal{A})})\in \alg{\Sigma_1}$ putting, for any $f\in O_1$ 
	\begin{equation*}
			f^{\rs{F}(\mathcal{A})}:(A, \mu_A)^{\ar(f)}\rightarrow (A, \mu_A)\qquad
			\qty(a_1,\dots, a_{\ar(f)})\mapsto F_2(f)^{\mathcal{A}}\qty(a_1,\dots, a_{\ar(f)})
	\end{equation*}
	and 	$c^{\rs{F}\qty(\mathcal{A})}:= F_3(c)^{\mathcal{A}}$ for every $c\in C_1$.
\end{definition}\begin{lemma}\label{mrp}
	Let $\mathcal{L}_1=(\Sigma_1, X)$ and $\mathcal{L}_2=(\Sigma_2, Y)$ and $\mathbf{F}=\qty(\qty(F_1,F_2), g):\mathcal{L}_1\rightarrow \mathcal{L}_2$, then:
	\begin{enumerate}
		\item there exists a functor $\rs{F}:\alg{\Sigma_2}\rightarrow \alg{\Sigma_1}$ sending $\mathcal{A}$ to $\rs{F}(\mathcal{A})$; 
		\item  $t^{\rs{F}(\mathcal{A}), \iota \circ g}=\qty(\trms{\mathbf{F}}(t))^{\mathcal{A}, \iota}$ for any assignment $\iota:Y\rightarrow A$ and $t\in \mathcal{L}_1\text{-}\mathsf{Terms}$;
		\item for any assignment $\iota:Y\rightarrow A$, 	$\rs{F}(\mathcal{A})\vDash_{ \iota \circ g } \phi$ if and only if $\mathcal{A}\vDash_\iota \stat{\mathbf{F}}(\phi)$;
		\item  If $X = Y$ and $g=\id{X}$ then $\rs{F}$ restricts to a functor $\res{F}{\Lambda}:\mode{\Lambda}\rightarrow \mode{\mathbf{F}^*(\Lambda)}$.
	\end{enumerate}
\end{lemma}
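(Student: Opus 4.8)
The plan is to prove the four claims in the stated order, since each feeds the next; the first three are routine structural inductions and diagram chases, and the genuine content sits in point~4. For point~1 I would first confirm that $\rs{F}(\mathcal{A})$ is a legitimate $\Sigma_1$-algebra: its carrier is the fuzzy set $(A,\mu_A)$ underlying $\mathcal{A}$, and each $\Sigma_1$-symbol is interpreted by interpreting its $\mathbf{F}$-image in $\mathcal{A}$, as prescribed in the definition of $\rs{F}(\mathcal{A})$. The arities match because the signature-morphism condition $\ar_2\circ F_1=\ar_1$ forces $\ar_1(f)$ to equal the arity of the image operation $F_1(f)$. On morphisms I would set $\rs{F}(h)=h$, keeping the same underlying fuzzy map, so identities and composition are preserved for free; the only point to check is that $h$ is still a homomorphism for the reindexed operations, which is exactly the homomorphism condition for $h$ against the $\Sigma_2$-operations $F_1(f)^{\mathcal{A}}$ and $F_1(f)^{\mathcal{B}}$ (and likewise for constants), and this holds since $h\in\alg{\Sigma_2}$.

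Points~2 and~3 then follow mechanically. For point~2 I would induct on $t\in\mathcal{L}_1\text{-}\mathsf{Terms}$ using the description of $\trms{\mathbf{F}}$ from \cref{terms}: the variable case reduces both sides to $\iota(g(x))$ since $\trms{\mathbf{F}}(x)=g(x)$; the constant case reduces both to the $\mathcal{A}$-interpretation of $F_2(c)$; and the operation case closes via the inductive hypothesis together with $f^{\rs{F}(\mathcal{A})}=F_1(f)^{\mathcal{A}}$. For point~3 I would split on the shape of $\phi$. If $\phi$ is $t\equiv s$, then, via point~2, both $\rs{F}(\mathcal{A})\vDash_{\iota\circ g}\phi$ and $\mathcal{A}\vDash_\iota\stat{\mathbf{F}}(\phi)$ unfold to the single equality $(\trms{\mathbf{F}}(t))^{\mathcal{A},\iota}=(\trms{\mathbf{F}}(s))^{\mathcal{A},\iota}$. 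If $\phi$ is $\ex{l}{t}$, I use that $\rs{F}(\mathcal{A})$ and $\mathcal{A}$ share the membership function $\mu_A$, so both sides unfold to $l\leq\mu_A((\trms{\mathbf{F}}(t))^{\mathcal{A},\iota})$; in each case I read $\stat{\mathbf{F}}$ off its definition in the proposition introducing the Galois connection $\mathbf{F}_*\dashv\mathbf{F}^*$.

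The hard point is~4, and the hypotheses $X=Y$, $g=\id{X}$ are exactly what make it work: only then is every assignment $\iota\colon X\to A$ of the form $\iota\circ g$, so point~3 sharpens to the equivalence $\rs{F}(\mathcal{A})\vDash_\iota\phi \iff \mathcal{A}\vDash_\iota\stat{\mathbf{F}}(\phi)$ for \emph{all} $\iota$. Given $\mathcal{A}\in\mode{\Lambda}$, I would take an arbitrary sequent $\Gamma\vdash\psi\in\mathbf{F}^*(\Lambda)$; since $\mathbf{F}^*$ is the preimage under $\sqt{\mathbf{F}}$, its image $\sqt{\mathbf{F}}(\Gamma\vdash\psi)=\{\stat{\mathbf{F}}(\phi)\mid\phi\in\Gamma\}\vdash\stat{\mathbf{F}}(\psi)$ lies in $\Lambda$. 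For any $\iota$ with $\rs{F}(\mathcal{A})\vDash_\iota\Gamma$, point~3 gives $\mathcal{A}\vDash_\iota\stat{\mathbf{F}}(\phi)$ for every $\phi\in\Gamma$; as $\mathcal{A}$ satisfies the $\Lambda$-sequent $\sqt{\mathbf{F}}(\Gamma\vdash\psi)$ we obtain $\mathcal{A}\vDash_\iota\stat{\mathbf{F}}(\psi)$, and point~3 again yields $\rs{F}(\mathcal{A})\vDash_\iota\psi$. Hence $\rs{F}(\mathcal{A})$ satisfies every sequent of $\mathbf{F}^*(\Lambda)$, i.e.\ $\rs{F}(\mathcal{A})\in\mode{\mathbf{F}^*(\Lambda)}$; because $\mode{\Lambda}$ and $\mode{\mathbf{F}^*(\Lambda)}$ are full subcategories, $\rs{F}$ automatically carries morphisms of $\mode{\Lambda}$ to morphisms of $\mode{\mathbf{F}^*(\Lambda)}$, giving the restriction $\res{F}{\Lambda}$. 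I expect the only real difficulty to be notational rather than mathematical: keeping the three transports $\trms{\mathbf{F}}$, $\stat{\mathbf{F}}$ and $\sqt{\mathbf{F}}$ coherent, and using that $\mathbf{F}^*$ is a \emph{preimage}, so that membership of $\Gamma\vdash\psi$ in $\mathbf{F}^*(\Lambda)$ is precisely what makes its image a sequent of $\Lambda$, hence one that $\mathcal{A}$ satisfies.
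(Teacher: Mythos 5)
Your proposal is correct and follows essentially the same route as the paper's proof: point~1 by taking $\rs{F}$ to be the identity on underlying arrows and checking the homomorphism conditions against the reindexed operations, point~2 by structural induction on terms, point~3 by unfolding the two cases of $\phi$ via point~2, and point~4 by using that $g=\id{X}$ makes point~3 an equivalence for every assignment and that $\Gamma\vdash\psi\in\mathbf{F}^*(\Lambda)$ means its $\sqt{\mathbf{F}}$-image lies in $\Lambda$. Your explicit remark that fullness of the subcategories gives functoriality of the restriction is left implicit in the paper but is the same argument.
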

\begin{proof}
		\begin{enumerate}
			\item Let $h:\mathcal{A}\rightarrow \mathcal{B}$ be a morphism of $\alg{\Sigma_2}$ then $h\circ u^{\mathcal{A}}=u^{\mathcal{B}}\circ g^{\ar_2(h)}$ and $ d^{\mathcal{B}}=h\circ d^{\mathcal{A}}$ for all $u\in O_2$ and $d\in C_2$. In particular this is true when $u=F_2(f)$ and $d=F_3(c)$ for $f\in O_1$ and $c\in C_1$ but these are exactly the conditions making $h$ into an homomorphism $\rs{F}(\mathcal{A})\rightarrow \rs{F}(\mathcal{B})$.
			\item  This follows by induction:
			\begin{itemize}
				\item for any $x\in X$
				\begin{align*}
					x^{ \rs{F}(\mathcal{A}),\iota \circ g}=\iota(g(x))=(\trms{\mathbf{F}}(x))^{\mathcal{A}, \iota}
				\end{align*}
				and for any constant $c\in C_1$
				\begin{align*}
					c^{ \rs{F}(\mathcal{A}),\iota \circ g}=c^{ \rs{F}(\mathcal{A})}=F_3(c)^{\mathcal{A}}=(\trms{\mathbf{F}}(c))^{\mathcal{A}}=
					(\trms{\mathbf{F}}(c))^{\mathcal{A},\iota}
				\end{align*}
				\item for any $f\in O_1$:
				\begin{align*}
					\qty(f\qty(t_1,\dots,t_{\ar_1(f)}))^{\mathcal{\rs{F}(A)}, \iota \circ g }&=f^{\rs{F}(\mathcal{A})}\qty(t^{\mathcal{\rs{F}(A)}, \iota \circ g }_1,\dots,t^{\mathcal{\rs{F}(A)}, \iota \circ g }_{\ar_1(f)})\\&=f^{\rs{F}(\mathcal{A})}\qty(\qty(\trms{\mathbf{F}}(t_1))^{\mathcal{A}, \iota},\dots, \qty(\trms{\mathbf{F}}(t_{\ar_1(f)}))^{\mathcal{A}, \iota})\\&=F_1(f)^{\mathcal{A}}\qty(\qty(\trms{\mathbf{F}}\qty(t_1))^{\mathcal{A}, \iota},\dots, \qty(\trms{\mathbf{F}}\qty(t_{\ar_1(f)}))^{\mathcal{A}, \iota})\\&=
					\qty(F_1(f)\qty(\qty(\trms{\mathbf{F}}\qty(t_1)),\dots, \qty(\trms{\mathbf{F}}\qty(t_{\ar_1(f)}))))^{\mathcal{A}, \iota}\\&=\trms{\mathbf{F}}	\qty(f\qty(t_1,\dots,t_{\ar_1(f)}))^{\mathcal{A}, \iota}
				\end{align*}
			\end{itemize}
			\item This follows immediately by point $2$:
			\begin{align*}
				\rs{F}(\mathcal{A})\vDash_{\iota \circ g}(t\equiv s)&\iff t^{\rs{F}(\mathcal{A}), \iota \circ g }= s^{\rs{F}(\mathcal{A}), \iota \circ g}\\&\iff \qty(\trms{\mathbf{F}}\qty(t))^{\mathcal{A},\iota }=\qty(\trms{\mathbf{F}}\qty(s))^{\mathcal{A}, \iota}\\&\iff \mathcal{A}\vDash_\iota \stat{\mathbf{F}}(t\equiv s)\\\hspace{1pt}\\
				\rs{F}(\mathcal{A})\vDash_{\iota \circ g }\ex{l}{t}&\iff l\leq \mu_A\qty(t^{\rs{F}(\mathcal{A}), \iota \circ g})\\&\iff l \leq  \mu_A\qty(\trms{\mathbf{F}}\qty(t))^{\mathcal{A}, \iota}\\&\iff \mathcal{A}\vDash_\iota \stat{\mathbf{F}}\qty(\ex{l}{t})
			\end{align*}
			\item  Let $\Gamma \vdash \psi$ be in $\mathbf{F}^*(\Lambda)$, by the previous point, for any assignment $\iota:X\rightarrow A$ 
			\begin{equation*}
				\rs{F}(\mathcal{A})\vDash_{ \iota\circ g} \Gamma\iff \mathcal{A}\vDash_\iota \qty{\stat{\mathbf{F}}(\phi)}_{\phi \in \Gamma}
			\end{equation*} 
			Since $\qty{\stat{\mathbf{F}}(\phi)}_{\phi \in \Gamma}\vdash \stat{\mathbf{F}}(\psi)$ is in $\Lambda$ we have  $\mathcal{A}\vDash_\iota \stat{\mathbf{F}}\qty(\psi)$ but, using again point $3$, this implies $ \rs{F}(\mathcal{A})\vDash_\iota \phi$. \qedhere
		\end{enumerate} 
\end{proof}

\begin{example}\label{ex:semig2}
	The models for $\Lambda_S$, $\Lambda_ {LI}$, $\Lambda_{RI}$ and $\Lambda_I$ (\cref{ex:semig1}) are precisely the structures defined in \cite{mordeson2012fuzzy}, while
	the models for $\Lambda_G$ (\cref{ex:fuzzyg1}) are precisely the fuzzy groups as in \cite{rosenfeld1971fuzzy} and those of $\Lambda_N$ are the structures called \emph{normal fuzzy subgroups} in \cite{ajmal1992fuzzy, ajmal1994homomorphism, mashour1990normal}.
\end{example}

\subparagraph{Soundness} 
Now we can proceed proving the soundness of the rules in \cref{fig:scrules}.
\begin{lemma}\label{rules} 
	Let  $\mathcal{L}=(\Sigma, X)$ be a language and $\mathcal{A} =((A, \mu_A), \Sigma^{\mathcal{A}})$ a $\Sigma$-algebra, then:
	\begin{enumerate}
		\item for any assignment $\iota:X\rightarrow A$ and rule
		\vspace{-1ex}
		\begin{equation*}
			\inferrule*[right=R]{\qty{\Psi_i \vdash \xi_i}_{i\in I}}{\Psi \vdash \xi}
		\end{equation*}
		different from \textsc{Sub}, if $\Psi_i\vDash_{\mathcal{A}, \iota} \xi_i$ for all $i\in I$ then  $\Psi \vDash_{\mathcal{A}, \iota} \xi $ too;
		\item for any $\sigma:X\rightarrow \terms{L}$, if $\Gamma\vDash_{\mathcal{A}}\psi$ then $\Gamma[\sigma]\vDash_{\mathcal{A}}\psi[\sigma]$.
	\end{enumerate}	
\end{lemma}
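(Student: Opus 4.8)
The plan is to establish soundness rule by rule. For the first claim I fix the assignment $\iota$ and verify, for each rule other than \textsc{Sub}, that whenever all of its premises hold with respect to $\iota$ the conclusion does too; in every case unfolding the definition of $\vDash_{\mathcal{A},\iota}$ reduces the obligation to an elementary fact about evaluations and about the lattice operations of $H$. Rule \textsc{Sub} is deliberately excluded here: it cannot be validated at a single $\iota$ because its soundness forces a reindexing of the assignment, which is exactly why it is split off into the second claim and stated with the stronger hypothesis $\Gamma\vDash_{\mathcal{A}}\psi$ quantifying over \emph{all} assignments.

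For the structural and equational rules the checks are routine. Rule A is immediate since $\phi\in\Gamma$; \textsc{Weak} follows because $\mathcal{A}\vDash_\iota\Gamma\cup\Delta$ entails $\mathcal{A}\vDash_\iota\Gamma$; and \textsc{Cut} follows because, assuming $\mathcal{A}\vDash_\iota\Gamma$, each premise $\Gamma\vDash_{\mathcal{A},\iota}\phi$ forces $\mathcal{A}\vDash_\iota\Phi$, after which $\Phi\vDash_{\mathcal{A},\iota}\psi$ yields $\mathcal{A}\vDash_\iota\psi$. The rules \textsc{Refl}, \textsc{Sym}, \textsc{Trans} and \textsc{Cong} are discharged using reflexivity, symmetry and transitivity of equality together with the inductive clause $\ext{A}{\iota}{(f(t_1,\dots,t_{\ar(f)}))}=f^{\mathcal{A}}(\ext{A}{\iota}{t_1},\dots,\ext{A}{\iota}{t_{\ar(f)}})$. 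On the membership side, \textsc{Inf} holds since $\bot$ is least, \textsc{Mon} since $l\wedge l'\leq l$, \textsc{Sup} since $\sup(S)$ is the least upper bound of $S$ while $\mu_A(\ext{A}{\iota}{t})$ bounds every $l\in S$, and \textsc{Fun} since the equality $\ext{A}{\iota}{t}=\ext{A}{\iota}{s}$ transports the bound $l\leq\mu_A(\ext{A}{\iota}{t})$ to $s$.

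The one case with genuine content is \textsc{Exp}, and it is where I expect the real work to lie. Here I would use two facts: that $f^{\mathcal{A}}:(A,\mu_A)^{\ar(f)}\to(A,\mu_A)$ is an arrow of $\fuz{H}$, so the membership of its value is at least that of its argument, and that by the product construction of \cref{prod} the membership of a tuple is the meet of the coordinatewise memberships. Assuming each premise supplies $l_i\leq\mu_A(\ext{A}{\iota}{t_i})$, these combine to give
\[
\inf\qty(\qty{l_i})=\bigwedge_i l_i\leq\bigwedge_i\mu_A\qty(\ext{A}{\iota}{t_i})\leq\mu_A\qty(f^{\mathcal{A}}\qty(\ext{A}{\iota}{t_1},\dots,\ext{A}{\iota}{t_{\ar(f)}})),
\]
and the right-hand side is precisely $\mu_A(\ext{A}{\iota}{(f(t_1,\dots,t_{\ar(f)}))})$, which is the conclusion.

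For the second claim I would invoke \cref{su}, which gives $\mathcal{A}\vDash_\iota\phi[\sigma]$ iff $\mathcal{A}\vDash_{\iota_\sigma}\phi$ with $\iota_\sigma(x)=(\sigma(x))^{\mathcal{A},\iota}$. Fixing an arbitrary $\iota$ and assuming $\mathcal{A}\vDash_\iota\Gamma[\sigma]$, applying \cref{su} to each $\phi\in\Gamma$ yields $\mathcal{A}\vDash_{\iota_\sigma}\Gamma$. Since the hypothesis $\Gamma\vDash_{\mathcal{A}}\psi$ holds at \emph{every} assignment, it holds in particular at $\iota_\sigma$, so $\mathcal{A}\vDash_{\iota_\sigma}\psi$, and a final application of \cref{su} gives $\mathcal{A}\vDash_\iota\psi[\sigma]$; as $\iota$ was arbitrary we conclude $\Gamma[\sigma]\vDash_{\mathcal{A}}\psi[\sigma]$. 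The conceptual point of the whole lemma is this dichotomy: every rule except \textsc{Sub} is sound pointwise in $\iota$, whereas \textsc{Sub} is sound only after quantifying over all assignments, and this is what dictates the two-part statement.
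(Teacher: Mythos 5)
Your proposal is correct and follows essentially the same route as the paper's proof: the same rule-by-rule verification for point 1 (with \textsc{Exp} handled exactly as in the paper, combining the meet-of-memberships description of products from \cref{prod} with the fact that $f^{\mathcal{A}}$ is a $\fuz{H}$-arrow), and the same double application of \cref{su} for point 2. The only difference is presentational — your closing remark explaining \emph{why} \textsc{Sub} must be quantified over all assignments is a nice conceptual addition, but it does not change the argument.
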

\begin{proof}
	\begin{enumerate}
		\item The thesis is vacuous for rule A and it is trivial for \textsc{Refl}, \textsc{Sym} and \textsc{Trans}. For rule  \textsc{Weak} if  $\mathcal{A}\vDash_{\iota}\Gamma\cup \Delta$ then $\mathcal{A}\vDash_{\iota}\Gamma$ and so, since $\Gamma \vDash_{\mathcal{A}, \iota}\phi$ this implies $\mathcal{A}\vDash_{\iota} \phi$. The thesis for \textsc{Inf} it follows from the fact that $\bot$ is the bottom element, for \textsc{Cong} and \textsc{Fun} it follows since $\mu_A$ and $f^{\mathcal{A}}\in \Sigma^{\mathcal{A}}$   are functions for any $f\in O$. We're left with four rules.
		\begin{itemize}\item[] \textsc{Cut}. For any assignment $\iota$, if $\mathcal{A}\vDash_\iota \Gamma $ then, since  $\Gamma \vDash_{\mathcal{A}, \iota} \phi_i$ for any $i\in I$ we have $\mathcal{A}\vDash_\iota \qty{\phi_i}_{i\in I} $, but $\qty{\phi_i}_{i\in I}\vDash_{\mathcal{A}, \iota} \psi$ and this implies $\mathcal{A}\vDash_\iota \psi$.
			\item[]\textsc{Sup}. Let $\iota:X\rightarrow A$ be such that $\mathcal{A}\vDash_{\iota}\Gamma$, then $\Gamma\vDash_{\mathcal{A}, \iota}\ex{l}{t}$ for all $l\in \mathcal{L}$ implies that $\mu_A(t^{\mathcal{A}, \iota})$ is an upper bound of $\mathcal{L}$, so the thesis follows.
			\item[]\textsc{Mon}. As before let $\iota$ be an assignment for which $\mathcal{A}$ satisfies all the formulae in $\Gamma$, then, since $\Gamma\vDash_{\mathcal{A}, \iota} \ex{l}{t}$ we have that $l\leq \mu_A(t^{\mathcal{A}, \iota})$, so, for any $l'\in H$, $l'\wedge l\leq \mu_A(t^{\mathcal{A}, \iota})$ holds too.
			\item[]\textsc{Exp}. For any $\iota$  such that $ \mathcal{A} \vDash_{ \iota}\Gamma$, $\Gamma\vDash_{\mathcal{A}, \iota}\ex{l_i}{t_i}$ entails  $l_i\leq \mu_A(t^{\mathcal{A}, \iota}_i)$ for any $1\leq i\leq \ar(f)$, therefore:
			\begin{align*}
				\bigwedge_{i=1}^{\ar(f)} l_i &\leq \bigwedge_{i=1}^{\ar(f)}\mu_A(t^{\mathcal{A}, \iota}_i)=
					\mu_{A^n}(t^{\mathcal{A}, \iota}_1,\dots,t^{\mathcal{A}, \iota}_{\ar(f)})\leq \mu_A(f^{\mathcal{A}}(t^{\mathcal{A}, \iota}_1,\dots,t^{\mathcal{A}, \iota}_{\ar(f)}))			 
			\end{align*}
			and we are done.
		\end{itemize}
			
		\item Let $\iota:X \rightarrow A$ an assignment such that $\mathcal{A}\vDash_{\iota}\phi_i[\sigma]$ for any $\phi_i\in \Gamma$. By  \cref{su} this means that $\mathcal{A}\vDash_{\iota_\sigma}\Gamma$ and so, by hypothesis, $\mathcal{A}\vDash_{\iota_\sigma}\psi$ which, again by  \cref{su}, entails $\mathcal{A}\vDash_{\iota}\psi[\sigma]$. 
		\qedhere 
	\end{enumerate} 
\end{proof}
	
\begin{corollary}[Soundness]\label{sound}
	If a $\Sigma$-algebra satisfies all the premises of a rule of the fuzzy sequent calculus then it satisfies also its conclusion.  
\end{corollary}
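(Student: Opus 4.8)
The plan is to read the corollary off directly from \cref{rules}, the only genuine work being to manage the quantification over assignments correctly. Recall that, by definition, a $\Sigma$-algebra $\mathcal{A}$ satisfies a sequent $\Gamma \vdash \phi$ precisely when $\Gamma \vDash_{\mathcal{A}} \phi$, i.e.\ when $\Gamma \vDash_{\mathcal{A}, \iota} \phi$ holds for \emph{every} assignment $\iota:X\rightarrow A$. I would therefore split the argument according to whether the rule in question is \textsc{Sub} or not, matching the two clauses of \cref{rules}.

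For any rule
\[
\inferrule*[right=R]{\qty{\Psi_i \vdash \xi_i}_{i\in I}}{\Psi \vdash \xi}
\]
different from \textsc{Sub}, I would assume that $\mathcal{A}$ satisfies each premise, that is $\Psi_i \vDash_{\mathcal{A}} \xi_i$ for all $i\in I$; unfolding the definition, this says $\Psi_i \vDash_{\mathcal{A}, \iota} \xi_i$ for every $i$ and every $\iota$. Fixing an arbitrary $\iota$, the first clause of \cref{rules} yields $\Psi \vDash_{\mathcal{A}, \iota} \xi$. Since $\iota$ was arbitrary, this gives $\Psi \vDash_{\mathcal{A}} \xi$, which is exactly the statement that $\mathcal{A}$ satisfies the conclusion.

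The rule \textsc{Sub} must be treated separately, because its soundness cannot be established assignment-by-assignment: substituting along $\sigma$ reindexes the relevant assignment via the map $\iota \mapsto \iota_\sigma$ of \cref{su}, so no per-$\iota$ statement is available. This is precisely why \cref{rules} isolates \textsc{Sub} in its second clause, which already asserts that $\Gamma \vDash_{\mathcal{A}} \psi$ entails $\Gamma[\sigma] \vDash_{\mathcal{A}} \psi[\sigma]$; this is verbatim the claim that $\mathcal{A}$ satisfying the premise $\Gamma \vdash \psi$ of \textsc{Sub} forces it to satisfy the conclusion $\Gamma[\sigma] \vdash \psi[\sigma]$. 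I do not expect any real obstacle here: all the substantive content has already been packaged into \cref{rules}, so the corollary is essentially a bookkeeping step that threads the universal quantifier over $\iota$ through the two cases, the mild subtlety being only the order of quantifiers and the need to route \textsc{Sub} through the global clause rather than the pointwise one.
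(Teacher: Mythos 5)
Your proposal is correct and is exactly the argument the paper intends: the corollary is stated without proof precisely because it is the bookkeeping step you describe, deducing the global statement from \cref{rules} by quantifying the pointwise clause over all assignments for non-\textsc{Sub} rules and invoking the already-global second clause for \textsc{Sub}. Your observation about why \textsc{Sub} cannot be handled assignment-by-assignment (the reindexing $\iota \mapsto \iota_\sigma$ from \cref{su}) matches the reason the paper splits \cref{rules} into two clauses in the first place.
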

	
\begin{remark}\label{rem:nodeduclemma}
	Let us see why the deduction lemma (\cref{ext}) cannot be extended to rule \textsc{Sub}. Take $\Sigma$ to be the empty set,  $X=\{x,y, z\}$ and $H=\{0,1\}$. Notice that $\alg{\Sigma}=\fuz{H}$. We have the derivation
	\vspace{-1ex}
	\begin{equation*}
		\inferrule*[Right=Sub]{\vdash x\equiv y}{\vdash x\equiv z}
	\end{equation*}
	If the deduction lemma held for \textsc{Sub}, $x\equiv y\vdash x\equiv z$ would be in $\emptyset^{\vdash}$, hence satisfied by any fuzzy set, but $(H, \id{H})$ with $\iota:	X\rightarrow H$ sending  $x$ and $y$ to $0$ and $z$ to $1$ is a counterexample. 
\end{remark}
\begin{remark}
	Let us take $\Sigma=\emptyset$ and $H=\{0,1\}$ and $X=\{x,y,z\}$ and the derivation as in \cref{rem:nodeduclemma}.
	Now, a fuzzy set $(A, \mu_A)$ satisfies $\vdash_{\iota} x\equiv y$ if and only if $\iota(x)=\iota(y)$, thus, if we take $(H, \id{H})$ and the assignment $\iota$ of the previous example, then $(H,\id{H})\vdash_{\iota} x\equiv y$ but it does not satisfy $ x\equiv z$ with respect to $\iota$.
\end{remark}

\subparagraph{Completeness} 
Now we prove that the calculus we have provided in \cref{sec:the} is complete.
Let us start with the following observation.
\begin{remark}\label{rel}
	For any $\Lambda\in \thr{L}$ the relation $\sim_{\Lambda}$ given by all $t$ and $s$ such that $ \vdash_{\Lambda} t \equiv s$,
	is an equivalence relation on $\terms{L}$.
\end{remark}
Using this equivalence, we can define the model of terms, as done next.
\begin{definition} 
	Let $\mathcal{L}=(\Sigma, X)$ be a language and $\Lambda\in\thr{L}$,  we define $\trms{\Lambda}$ to be the quotient of $\terms{L}$ by $\sim_{\Lambda}$, moreover, by rule \textsc{Fun}, the function
	\begin{align*}
		\hat{\mu}:\terms{L}\rightarrow H\qquad
		t \mapsto \sup\qty{l\in H \mid \hspace{2pt} \vdash_\Lambda \ex{l}{t}}
	\end{align*} 
	induces a function $\mu_\Lambda:\trms{\Lambda}\rightarrow H$.
	For any $f\in O$ and $c\in C$ putting $c^{\mathcal{T}_{\Lambda}}\ed[c]$ and
	\begin{align*}
		f^{\mathcal{T}_{\Lambda}}:\trms{\Lambda}^{\ar(f)} \rightarrow \trms{\Lambda}\qquad
		\qty([t_1],\dots, [t_{\ar(f)}])\mapsto \qty[f\qty(t_1,\dots, t_{\ar(f)})]	
	\end{align*} gives us a $\Sigma$-algebra $\mathcal{T}_\Lambda=\qty(\qty(\trms{\Lambda}, \mu_\Lambda), \Sigma^{\mathcal{T}_{\Lambda}})$, called \emph{ the $\Sigma$-algebra  of terms in $\Lambda$}.  
	The \emph{canonical assignment} is the function	$\iota_{can}:X\rightarrow \trms{\Lambda}$ sending $x$ to its class $[x]$.
\end{definition}
\begin{remark}
	Rule \textsc{Cong} assures us that $f^{\mathcal{T}_{\Lambda}}$ is well defined while  \textsc{Exp} implies that it is an arrow of $\fuz{H}$.
\end{remark}

\begin{lemma}\label{trms}Let $\mathcal{L}=(\Sigma, X)$ be a language and $\Lambda \in \thr{L}$, 	then:
	\begin{enumerate}	
		\item for any $\phi\in \stat{L}$ the following are equivalent:
		\begin{enumerate}
			\item  $\mathcal{T}_{\Lambda}\vDash \phi$,
			\item   $\mathcal{T}_{\Lambda}\vDash_{\iota_{can}}\phi$,
			\item $ \vdash_{\Lambda} \phi$;
		\end{enumerate}
		\item for any assignment $\iota:X \rightarrow \trms{\Lambda}$ and formula $\phi$, $\mathscr{T}_\Lambda \vDash_\iota \phi$ if and only if $ \vdash_\Lambda \phi[\sigma \circ \iota]$ for one (and thus any) section $\sigma$ of the quotient $\terms{L}\rightarrow \trms{\Lambda}$;
		\item $\mathcal{T}_\Lambda=\qty(\qty(\trms{\Lambda}, \mu_\Lambda), \Sigma^{\mathcal{T}_\Lambda})$ is a model of $\Lambda$.
	\end{enumerate}
\end{lemma}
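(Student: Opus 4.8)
The plan is to reduce all three statements to two elementary facts about the term model and then bootstrap from the canonical assignment to arbitrary ones. First I would prove an \emph{evaluation lemma}: for every $t\in\terms{L}$ one has $t^{\mathcal{T}_\Lambda,\iota_{can}}=[t]$. This is a routine structural induction on $t$, using $\iota_{can}(x)=[x]$, $c^{\mathcal{T}_\Lambda}=[c]$, and the definition of $f^{\mathcal{T}_\Lambda}$ on classes (well-definedness being guaranteed by \textsc{Cong}). Second, I would isolate the arithmetic of membership: for all $t$ and $l\in H$, $l\le\hat\mu(t)=\sup\{l'\mid\, \vdash_\Lambda \ex{l'}{t}\}$ holds if and only if $\vdash_\Lambda \ex{l}{t}$.

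I expect this second fact to be the main obstacle, since it is the only place where the order-theoretic structure of $H$ and the specifically fuzzy rules \textsc{Sup} and \textsc{Mon} are essential. The ``if'' direction is immediate. For ``only if'', writing $S=\{l'\mid\, \vdash_\Lambda \ex{l'}{t}\}$, rule \textsc{Sup} applied to the \emph{whole} set $S$ yields $\vdash_\Lambda \ex{\sup(S)}{t}$, so the supremum is itself a derivable degree; then rule \textsc{Mon}, together with $l\wedge\sup(S)=l$ whenever $l\le\sup(S)$, pushes derivability down to $l$. Everything else is bookkeeping with the structural rules.

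These two facts give $1(b)\Leftrightarrow 1(c)$ at once: for $\phi=(t\equiv s)$, $\mathcal{T}_\Lambda\vDash_{\iota_{can}}\phi$ means $[t]=[s]$, i.e.\ $\vdash_\Lambda t\equiv s$ by definition of $\sim_\Lambda$; for $\phi=\ex{l}{t}$ it means $l\le\mu_\Lambda([t])=\hat\mu(t)$, i.e.\ $\vdash_\Lambda \ex{l}{t}$ by the membership fact. For part 2 I would bootstrap via \cref{su}. Given any $\iota:X\to\trms{\Lambda}$ and a section $\sigma$ of the quotient $\terms{L}\to\trms{\Lambda}$, set $\tau\ed\sigma\circ\iota:X\to\terms{L}$. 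The evaluation lemma together with $[\sigma(c)]=c$ shows that the induced assignment $(\iota_{can})_\tau$ equals $\iota$, so \cref{su} gives $\mathcal{T}_\Lambda\vDash_\iota\phi \Leftrightarrow \mathcal{T}_\Lambda\vDash_{\iota_{can}}\phi[\tau]$; applying $1(b)\Leftrightarrow 1(c)$ to the formula $\phi[\tau]$ then yields $\mathcal{T}_\Lambda\vDash_\iota\phi \Leftrightarrow\, \vdash_\Lambda \phi[\sigma\circ\iota]$, which is part 2. Independence of the chosen section is automatic, since the left-hand side does not mention $\sigma$.

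Finally I would close part 1 and prove part 3 using \textsc{Sub} and \textsc{Cut}. Here $1(a)\Rightarrow 1(b)$ is trivial (instantiate at $\iota_{can}$); for $1(c)\Rightarrow 1(a)$, from $\vdash_\Lambda\phi$ rule \textsc{Sub} gives $\vdash_\Lambda\phi[\sigma\circ\iota]$ for every $\iota$, whence $\mathcal{T}_\Lambda\vDash_\iota\phi$ by part 2, so $\mathcal{T}_\Lambda\vDash\phi$. For part 3, take any $\Gamma\vdash\psi\in\Lambda$ and any $\iota$ with $\mathcal{T}_\Lambda\vDash_\iota\Gamma$; by part 2 this means $\vdash_\Lambda\phi[\sigma\circ\iota]$ for all $\phi\in\Gamma$, while \textsc{Sub} applied to $\Gamma\vdash\psi$ gives $\Gamma[\sigma\circ\iota]\vdash\psi[\sigma\circ\iota]$ in $\Lambda^\vdash$. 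A single application of \textsc{Cut} (with $\Phi=\Gamma[\sigma\circ\iota]$ and empty conclusion context) then yields $\vdash_\Lambda\psi[\sigma\circ\iota]$, hence $\mathcal{T}_\Lambda\vDash_\iota\psi$ again by part 2. Thus $\mathcal{T}_\Lambda$ satisfies every sequent of $\Lambda$ and is a model.
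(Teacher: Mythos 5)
Your proof is correct, and its skeleton matches the paper's: an evaluation identity at the canonical assignment, a membership-degree fact proved with \textsc{Sup} and \textsc{Mon}, then part 1(c)$\Rightarrow$(a), part 2, and part 3 via \textsc{Sub} and \textsc{Cut} exactly as you do (including the single \textsc{Cut} with $\Phi=\Gamma[\sigma\circ\iota]$). Two of your steps are genuinely different, though. First, the paper proves the evaluation identity in the general form $t^{\mathcal{T}_{\Lambda}, \iota}=[t[\sigma \circ \iota ]]$ by one direct induction (\cref{ass}), whereas you induct only on the canonical case $t^{\mathcal{T}_\Lambda,\iota_{can}}=[t]$ and recover the general statement by checking $(\iota_{can})_{\sigma\circ\iota}=\iota$ and invoking \cref{su}; this reuses an already-established lemma instead of repeating an induction, a sound and arguably cleaner decomposition. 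Second, and more interestingly, your derivation of $\vdash_\Lambda \ex{l}{t}$ from $l\leq\sup(S)$, where $S=\qty{l'\in H\mid\, \vdash_\Lambda \ex{l'}{t}}$, is more elementary than the paper's: the paper invokes the frame (infinite distributivity) law to write $l=\sup\qty{l\wedge l'\mid l'\in S}$, applies \textsc{Mon} to every element of $S$, and then \textsc{Sup} to that whole family; you instead apply \textsc{Sup} once to $S$ itself, obtaining $\vdash_\Lambda\ex{\sup(S)}{t}$, and then \textsc{Mon} once with $l$, using only the lattice identity $l\wedge\sup(S)=l$ when $l\leq \sup(S)$. Your ordering of the two rules shows that this step needs no distributivity at all, only the complete-lattice structure of $H$ --- a small but real sharpening of the paper's argument.
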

\begin{proof}
		Let us start with a technical result.
		\begin{proposition}\label{ass}
			Let $\mathcal{L}=(\Sigma, X)$ be a language, $\Lambda \in \thr{L}$, and  $\sigma:\trms{\Lambda}\rightarrow \terms{L}$ a section of the quotient $\terms{L}\rightarrow \trms{\Lambda}$. 
			The equation $t^{\mathcal{T}_{\Lambda}, \iota}=\qty[t\qty[\sigma \circ \iota ]]$ holds for any assignment $\iota:X\rightarrow \trms{\Lambda}$ and $t\in \terms{L}$. In particular $t^{\mathcal{T}_{\Lambda}, \iota_{can}}=[t]$.
		\end{proposition}
		\begin{proof}
				For constants it is obvious, let us proceed by induction.
				\begin{itemize}
					\item For any $x\in X$, $\sigma\qty(\iota(x))$ is a representative of $\iota(x)$, so:
					\begin{align*}
						x^{\mathcal{T}_{\Lambda}, \iota}=\iota(x)=\qty[\sigma(\iota(x))]=\qty[\sigma\qty(\iota\qty(x))/x]=\qty[x\qty[\sigma\circ \iota]]
					\end{align*}
					\item For any $f\in O$ and $t_1,\dots,t_{\ar(f)}\in \terms{L}$:
					\begin{align*}
						&\qty(f\qty(t_1,\dots,t_{\ar(f)}))^{\mathcal{T}_\Lambda, \iota}=f^{\mathcal{T}_\Lambda}\qty(t^{\mathcal{T}_\Lambda,\iota}_1,\dots,t^{\mathcal{T}_\Lambda,\iota}_{\ar(f)})=f^{\mathcal{T}_\Lambda}\qty(t^{\mathcal{T}_\Lambda,\iota}_1,\dots,t^{\mathcal{T}_\Lambda,\iota}_{\ar(f)})\\&=f^{\mathcal{T}_\Lambda}\qty(\big{[}t_1\qty[\sigma\circ \iota]\big{]},\dots,\qty[t_{\ar(f)}\qty[\sigma\circ \iota]])=
						\qty[f\qty(t_1[\sigma\circ \iota],\dots,t_{\ar(f)}[\sigma\circ \iota])]\\&=
						\qty[\qty(f\qty(t_1,\dots,t_{\ar(f)}))[\sigma \circ \iota]]
						\qedhere 
					\end{align*}
				\end{itemize}
		\end{proof}
	Now we can proceed with the proof of \cref{trms}.
	\begin{enumerate}
		\item 	Let us show the three implications.
		\begin{enumerate}
			\item[](a)$\Rightarrow$(b). By definition.
			\item[](b)$\Rightarrow$(c). 
			We split the cases.
			\begin{itemize}
				\item $\phi$ is $t\equiv s$. Then $\mathcal{T}_\Lambda\vDash_{\iota_{can}} \phi$ means 
				\begin{align*}
					[t]=t^{\mathscr{T}_\Lambda, \iota_{can}}=s^{\mathscr{T}_\Lambda, \iota_{can}}=[s]
				\end{align*}
				thus $t\sim_{\Lambda}s$ i.e. $ \vdash_\Lambda t\equiv s$.
				\item $\phi$ is $\ex{l}{t}$. Let $S$ be $
					\qty{l'\in H \mid \Lambda \vdash \ex{l'}{t}}$,
				by hypothesis $\mathcal{T}_\Lambda\vDash_{\iota_{can}} \phi$, so 
				\begin{align*}
					l\leq \mu_\Lambda\qty(t^{\mathscr{T}_\Lambda, \iota_{can}})= \mu_\Lambda\qty([t])=\sup(S)
				\end{align*}
				hence $l=l\wedge \sup(S)$ and, since
				$H$ is a frame, it follows that $l= \sup \qty(\qty{l\wedge l'\mid l'\in S })$, by rule \textsc{Mon} we know that that $\vdash_\Lambda \ex{l\wedge l'}{t}$ for all $l'\in S$ and so rule \textsc{Sup} gives us $ \vdash_\Lambda \ex{l}{t}$.
			\end{itemize}	
			\item[](c)$\Rightarrow$(a). Let $\iota:X \rightarrow \trms{\Lambda}$ be an assignment and $\sigma$ a section as in the hypothesis, then we can apply \textsc{Sub} to get $\vdash_\Lambda \phi[\sigma\circ \iota]$, by \cref{ass} we get the thesis.
		\end{enumerate}
		\item By \cref{ass} we have
		\begin{align*}
			t^{\mathscr{T}_\Lambda, \iota}=[t[\sigma\circ \iota]]=\qty(t[\sigma\circ \iota])^{\mathscr{T}_\Lambda, \iota_{can}}
		\end{align*}
		we can conclude using the previous point.
		\item
		Let $\Gamma\vdash \psi$ be a sequent in $\Lambda$ with $\Gamma=\qty{\phi_i}_{i=1}^n$ and $\iota:X\rightarrow \trms{\Lambda}$ an assignment such that $\mathcal{T}_{\Lambda}\vDash_\iota \Gamma$. By point $1$ above this means that $\vdash_\Lambda \Gamma[\sigma \circ \iota]$ and applying \textsc{Sub} and \textsc{Cut} we can conclude that $\vdash_\Lambda \psi[\sigma \circ\iota]$.
		By the previous point this is equivalent to $\mathscr{T}_\Lambda\vDash_{\iota} \psi$.\qedhere
		\end{enumerate}
 
\end{proof}
\begin{corollary}[Completeness for formulae]
	 For any theory $\Lambda \in \thr{L}$,  $\mathcal{A}\vDash \phi$ for all $\mathcal{A}\in \mode{\Lambda}$ if and only if $\vdash_{\Lambda}\phi$. 
\end{corollary}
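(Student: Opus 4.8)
The plan is to derive completeness entirely from the properties of the term model $\mathcal{T}_\Lambda$ established in \cref{trms}, so that the two directions reduce to short applications of soundness and of the term-model construction, respectively. Since all the real machinery is already in place, this corollary should fall out almost immediately.

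For the direction $\vdash_{\Lambda}\phi \Rightarrow \mathcal{A}\vDash\phi$ for every $\mathcal{A}\in\mode{\Lambda}$, I would rely on soundness as packaged in \cref{ind}. Concretely, given a model $\mathcal{A}$ of $\Lambda$, point~$1$ of \cref{ind} tells us that $\mathcal{A}$ is also a model of the deductive closure $\Lambda^{\vdash}$. Since $\vdash_{\Lambda}\phi$ means precisely that $\emptyset\vdash\phi$ lies in $\Lambda^{\vdash}$, the algebra $\mathcal{A}$ satisfies this sequent; as its antecedent is empty, this is exactly $\mathcal{A}\vDash\phi$. Unwinding \cref{ind}, this step is ultimately an induction on the derivation of $\emptyset\vdash\phi$ using \cref{sound}, but it is cleaner to cite the already-proved closure statement.

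For the converse, I would exhibit a single \emph{syntactically faithful} model, namely $\mathcal{T}_\Lambda$. By point~$3$ of \cref{trms}, $\mathcal{T}_\Lambda$ is itself a model of $\Lambda$, hence it is among the algebras quantified over in the hypothesis; so from $\mathcal{A}\vDash\phi$ for all $\mathcal{A}\in\mode{\Lambda}$ we obtain in particular $\mathcal{T}_\Lambda\vDash\phi$. The equivalence (a)$\Leftrightarrow$(c) of point~$1$ of \cref{trms} then converts this semantic fact back into the syntactic statement $\vdash_{\Lambda}\phi$, which is the thesis.

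I expect no genuine obstacle here: the substance of the argument lives not in this corollary but in the material it cites, and the hard work was already done in constructing $\mathcal{T}_\Lambda$ and proving \cref{trms}. The one point deserving care is that completeness truly hinges on $\mathcal{T}_\Lambda$ being a \emph{genuine} model together with the fact that membership in $\mathcal{T}_\Lambda$ is defined as a supremum; the implication (b)$\Rightarrow$(c) for a membership proposition $\ex{l}{t}$ is exactly where rules \textsc{Mon} and \textsc{Sup} are invoked to recover $\vdash_{\Lambda}\ex{l}{t}$ from the inequality $l\le\sup\{l' \mid \vdash_{\Lambda}\ex{l'}{t}\}$. That is the only non-formal ingredient feeding into the proof, and it is already discharged in \cref{trms}.
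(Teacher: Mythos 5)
Your proof is correct and matches the paper's intended argument: the paper states this corollary without a separate proof precisely because it follows immediately from soundness (one direction) and from \cref{trms} — the term model $\mathcal{T}_\Lambda$ being a genuine model of $\Lambda$ whose satisfaction of $\phi$ is equivalent to derivability — for the other. Your packaging of the soundness direction via point~1 of \cref{ind} is just a restatement of the same fact, since that point is itself proved by appeal to soundness.
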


\section{From theories to monads}\label{sec:fre}
 Given a language $\mathcal{L}= \qty(\Sigma, X)$ and a fuzzy theory $\Lambda\in \thr{L}$ we have a forgetful functor: $\mathscr{U}_{\Lambda}:\mode{\Lambda}\rightarrow \fuz{L}$. In this section we first show that it has a left adjoint (\cref{sec:ffafs}) and that for a specific class of theories, models correspond to Eilenberg-Moore algebras for the monad induced by this adjunction (\cref{sec:emalg}).
	 
 \subsection{The free fuzzy algebra on a fuzzy set}\label{sec:ffafs}
To give the definition of free models (\cref{def:freemodel}) we need some preliminary constructions.
 
 \begin{definition}
 	Let $\mathcal{A}$ be a $\Sigma$-algebra and $f:(B,\mu_B)\rightarrow \mathscr{U}_\Sigma(\mathcal{A})$ an arrow in $\fuz{H}$, a $\Sigma$-algebra \emph{generated by $f$ in $\mathcal{A}$} is a morphism $\epsilon:\gen{B}{f}{A}\rightarrow \mathcal{A}$ such that:
 	\begin{itemize}
 		\item $\mathscr{U}_\Sigma(\epsilon)$ is strong mono; 
 		\item there exists $\bar{f}:(B,\mu_B)\rightarrow \gen{B}{f}{A}$ such that $\mathscr{U}_\Sigma(\epsilon) \circ \bar{f}=f$;
 		\item if $g:\mathcal{C}\rightarrow \mathcal{A}$ is a morphism such that $\mathscr{U}_\Sigma(g)$ is strong monomorphism  and
 		$\mathscr{U}_\Sigma(g)\circ h=f$ for some $h$ then there exists a unique $k:\gen{B}{f}{A}\rightarrow \mathcal{C}$ such that $g\circ k=\epsilon$.
 	\end{itemize}
 \end{definition}
	 
 \begin{proposition}\label{con}
 	For any signature $\Sigma$, $\Sigma$-algebra $\mathcal{A}$ and $f:(B, \mu_B)\rightarrow \mathscr{U}_\Sigma(\mathcal{A})$, $\gen{B}{f}{A}$ exists and it is unique up to isomorphism.
 \end{proposition}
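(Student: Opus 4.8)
The plan is to construct $\gen{B}{f}{A}$ as the smallest strong subalgebra of $\mathcal{A}$ whose underlying set contains the image of $f$, and then to read off both the defining properties and uniqueness from the universal clause together with the characterisation of strong monomorphisms in point $2$ of \cref{mono}.

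First I would build the carrier. Let $\bar{A}$ be the intersection of all subsets $S\subseteq A$ that contain $f(B)$, contain $c^{\mathcal{A}}$ for every $c\in C$, and are closed under every operation $f^{\mathcal{A}}$ with $f\in O$; this family is nonempty since $A$ itself qualifies, and the three closure conditions are preserved under arbitrary intersection, so $\bar{A}$ is again such a set and is by construction the least one. Equipping $\bar{A}$ with the restricted membership function $\mu_A|_{\bar{A}}$ and with the (co)restrictions of the operations $f^{\mathcal{A}}$ and constants $c^{\mathcal{A}}$ --- which land in $\bar{A}$ precisely by closure --- yields a $\Sigma$-algebra that I take to be $\gen{B}{f}{A}$. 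The operations genuinely restrict to arrows of $\fuz{H}$ because, by \cref{prod}, the product membership on $\bar{A}^{\ar(f)}$ is the restriction of that on $A^{\ar(f)}$, so the inequality witnessing that $f^{\mathcal{A}}$ is a morphism is inherited.

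Next I would verify the three defining clauses. The inclusion $\epsilon:\gen{B}{f}{A}\rightarrow\mathcal{A}$ is injective and preserves membership on the nose, so by point $2$ of \cref{mono} its image under $\mathscr{U}_\Sigma$ is a strong monomorphism. The corestriction $\bar{f}:(B,\mu_B)\rightarrow\gen{B}{f}{A}$ of $f$ is a well-defined arrow of $\fuz{H}$ since $\mu_B(x)\leq\mu_A(f(x))=\mu_A|_{\bar{A}}(f(x))$, and it satisfies $\mathscr{U}_\Sigma(\epsilon)\circ\bar{f}=f$ by construction. For the universal clause, suppose $g:\mathcal{C}\rightarrow\mathcal{A}$ has $\mathscr{U}_\Sigma(g)$ strong mono and $\mathscr{U}_\Sigma(g)\circ h=f$ for some $h$. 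Then the image $g(C)\subseteq A$ contains $f(B)=g(h(B))$ and all constants, and is closed under the operations because $g$ is a homomorphism, so minimality of $\bar{A}$ forces $\bar{A}\subseteq g(C)$. Since $g$ is injective I may define $k:=\mathscr{U}_\Sigma(g)^{-1}\circ\epsilon$ on $\bar{A}$; it preserves membership because $g$ does, it is a $\Sigma$-homomorphism because $g$ is one and $k$ is its inverse on $\bar{A}$, and $g\circ k=\epsilon$ holds by definition. Uniqueness of $k$ is immediate from $g$ being monic.

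Uniqueness of $\gen{B}{f}{A}$ up to isomorphism is then the standard consequence of the universal property: given two algebras satisfying the definition, each factors through the other via the clause applied to the competing strong mono, and the two comparison maps compose to identities by the uniqueness part. The only point requiring a little care is the verification that the generated carrier $\bar{A}$ is genuinely closed and that taking the restricted membership (rather than some other compatible one) is forced --- this is exactly where point $2$ of \cref{mono} is essential, since it pins down strong subalgebras as those carrying the inherited membership, making the minimal closed subset the right object.
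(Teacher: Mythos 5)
Your construction of the carrier is exactly the paper's (the least subset of $A$ containing $f(B)$ and the constants and closed under the operations, with the restricted membership and restricted algebra structure, the inclusion being strong mono by point 2 of \cref{mono}), but your verification of the universal clause takes a genuinely different and in fact slicker route. The paper first obtains the restriction of $k$ to $f(B)$ as a diagonal filling supplied by the factorization system of \cref{prod}, then \emph{defines} $k$ on all of $\bar{A}$ by structural induction ($k(c^{\mathcal{A}})=c^{\mathcal{C}}$, $k(t^{\mathcal{A}}(b_1,\dots))=t^{\mathcal{C}}(k(b_1),\dots)$), which forces a lengthy case analysis to show $k$ is well defined, since elements of $\bar{A}$ can arise in several ways (as images of $f$, as constants, and as outputs of different operations); injectivity of $g$ is invoked separately in each case. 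You avoid the induction and the well-definedness problem entirely: you observe that $g(C)$ is itself a subset containing $f(B)=g(h(B))$ and the constants and closed under the operations (because $g$ is a homomorphism), so minimality of $\bar{A}$ gives $\bar{A}\subseteq g(C)$, and $k$ can simply be taken to be $\mathscr{U}_\Sigma(g)^{-1}\circ\epsilon$, with uniqueness immediate from $g$ being monic. This buys a shorter, case-free argument; what the paper's explicit inductive description buys is a concrete recursive formula for $k$, in the same spirit as the inductive description of the carrier that the paper reuses later (e.g.\ to see that every element of the generated algebra has a representative without variables).

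One point to state more carefully: that $k$ is an arrow of $\fuz{H}$ does not follow merely from ``$g$ preserves membership''. You need the strong-monomorphism \emph{equality} $\mu_C(c)=\mu_A(g(c))$ from point 2 of \cref{mono}, which gives $\mu_C(k(a))=\mu_A(g(k(a)))=\mu_A(a)=\mu_{\bar{A}}(a)$; the mere inequality $\mu_C(c)\leq\mu_A(g(c))$ goes in the wrong direction. Since you invoke that characterisation elsewhere in the proposal, this is an imprecision of wording rather than a gap, but it is exactly the step where strongness (and not just injectivity) of $g$ is used.
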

 \begin{proof}
 	This is straightforward: if $\mathcal{A}=((A, \mu_A), \Sigma^{\mathcal{A}})$ it is enough to consider as carrier set $\mathscr{U}_\Sigma(\gen{B}{f}{A})$ the smallest subset of $A$ such that:
 	\begin{itemize}
 		\item $f(B)\subset  \mathscr{U}_\Sigma(\gen{B}{f}{A})$;
 		\item $\qty{c^{\mathcal{A}}}_{c\in C} \subset  \mathscr{U}_\Sigma(\gen{B}{f}{A})$;
 		\item if $g\in O$ and $b_1,\dots, b_{\ar(g)}\in \mathscr{U}(\gen{B}{f}{A})$ then $g^{\mathcal{A}}(b_1,\dots,b_{\ar(g)})\in\mathscr{U}_\Sigma(\gen{B}{f}{A})$. 
 	\end{itemize}
 	with the restriction of $\mu_A$ as membership degree function. As $\Sigma$-algebra structure we can take the restriction of $\Sigma^{\mathcal{A}}$. By \cref{mono} the inclusion $i$ is a strong monomorphism and by construction an arrow of $\alg{\Sigma}$. 
 	Let $g:\mathcal{C}\rightarrow \mathcal{A}$ and $h$ be as in the definition, if we consider the inclusion $i:(f(B), {\mu_A}_{|f(B)})\rightarrow (A,\mu_A)$, by the left lifting property we obtain
	\begin{center}
		\begin{tikzpicture}
	 		\node(A)at(0,0){$(B, \mu_B)$};
	 		\node(B)at(0,-1.5){$(f(B), {\mu_A}_{|f(B)})$};
	 		\node(C)at(3,-1.5){$(A,\mu_A)$};
	 		\node(D)at(3,0){$\mathscr{U}_\Sigma(\mathcal{C})$};
	 		\draw[->](A)--(B)node[pos=0.5, left]{$\bar{f}$};
	 		\draw[->](B)--(C)node[pos=0.5, below]{$i$};
	 		\draw[->](A)--(D)node[pos=0.5, above]{$h$};
	 		\draw[->](B)--(D)node[pos=0.5, above, xshift=0cm, yshift=0cm]{$\bar{k}$};
	 		\draw[<-](C)--(D)node[pos=0.5, right]{$\mathscr{U}_\Sigma(g)$};
			\end{tikzpicture}
	\end{center}
	and we can define $k$ by induction as:
	\begin{itemize}
		\item $k(b)=\bar{k}(b)$ if $b\in f(B)$;
		\item $k(c^{\mathcal{A}})=c^{\mathcal{C}}$ if $c\in C$;
		\item $k(t^{\mathcal{A}}(b_1,\dots,b_{\ar(g)}))=t^{\mathcal{C}}(k(b_1),\dots,k(b_{\ar(t)}))$ if $t\in O$ and $b_1,\dots,b_{\ar(t)}\in \gen{B}{f}{A}$.
	\end{itemize}
	Notice that injectivity of $g$ implies that $k$ is actually well-defined, in fact, proceeding by induction:
	\begin{itemize}
		\item if $x\in B$ is such that $f(x)=c^\mathcal{A}$ for some $c\in C$ then
			\begin{align*}
	 			g(h(x))=\epsilon(\bar{f}(x))=c^{\mathcal{A}}=g(c^\mathcal{C})
	 		\end{align*}
	 		so
	 		\begin{align*}
	 			k(f(x))=h(x)=c^{\mathcal{C}}=k(c^{\mathcal{A}})
	 		\end{align*}
 		 
	 	\item if $x\in B$ is such that $f(x)=t^{\mathcal{A}}(b_1,\dots, b_{\ar(t)})$ for some $t\in O$ and $b_1,\dots,b_{\ar(t)}\in \gen{B}{f}{A}$ then
	 		\begin{align*}
	 			g(h(x))&=\epsilon(\bar{f}(x))=t^{\mathcal{A}}(b_1, \dots, b_{\ar(t)})\\
	 			&=t^{\mathcal{A}}(g(k(b_1)), \dots, g(k(b_{\ar(t)})))= g(t^{\mathcal{C}}(b_1,\dots, b_{\ar(t)}))
	 		\end{align*}
	 		so $h(x)=t^{\mathcal{C}}(b_1,\dots, b_{\ar(t)})$ and we are done;
	 		
	 	\item  if $c\in C$ is such that $c^{\mathcal{A}}=t^{\mathcal{A}}(b_1,\dots, b_{\ar(t)})$ for some $t\in O$ and $b_1,\dots,b_{\ar(t)}\in \gen{B}{f}{A}$ then
	 		\begin{align*}
	 			g(c^{\mathcal{C}})&=c^{\mathcal{A}}=t^{\mathcal{A}}(b_1, \dots, b_{\ar(t)})=t^{\mathcal{A}}(g(k(b_1)), \dots, g(k(b_{\ar(t)})))=
	 			g(t^{\mathcal{C}}(b_1,\dots, b_{\ar(t)}))
	 		\end{align*}
	 		so $c^{\mathcal{C}}=t^{\mathcal{C}}(b_1,\dots, b_{\ar(t)})$ and we are done again; 
	 		
	 	\item if $s^{\mathcal{A}}(b'_1,\dots, b'_{\ar(s)})=t^{\mathcal{A}}(b_1,\dots, b_{\ar(t)})$ for some $t, s\in O$ and $b'_1,\dots,b'_{\ar(s)}, b_1,\dots,b_{\ar(t)}, \in \gen{B}{f}{A}$ then
	 		\begin{align*}
	 			g(s^{\mathcal{C}}(k(b'_1),\dots, k(b'_{\ar(s)}))&=s^{\mathcal{A}}(b'_1,\dots, b'_{\ar(s)})=t^{\mathcal{A}}(b_1, \dots, b_{\ar(t)})\\
	 			&=t^{\mathcal{A}}(g(k(b_1)), \dots, g(k(b_{\ar(t)})))= g(t^{\mathcal{C}}(b_1,\dots, b_{\ar(t)}))
	 		\end{align*}
	 		so $s^{\mathcal{C}}(k(b'_1),\dots, k(b'_{\ar(s)}))=t^{\mathcal{C}}(b_1,\dots, b_{\ar(t)})$ and even in this case $k$ is well defined.
	 	\end{itemize}
	 	Moreover, for any $t\in \gen{B}{f}{A}$, if $\mathscr{U}(\mathcal{C})=(D, \mu_D)$ then
	 	\begin{align*}
		 	\mu_{D}(k(t))=\mu_A(g(k(t)))=\mu_A(i(t))
	 	\end{align*}
	 	Uniqueness follows at once by induction.
 \end{proof}
	 
 \begin{remark}\label{sub2}
 	\cref{ind} implies that, given a model $\mathcal{A}=\qty(\qty(A, \mu_A), \Sigma^{\mathcal{A}})$  of a theory $\Lambda\in \thr{L}$, and a morphism $f:(B, \mu_B)\rightarrow (A, \mu_A)$, the $\Sigma$-algebra $\gen{B}{f}{A}$ is in $\mode{\Lambda}$.
 \end{remark}

 \begin{proposition}\label{uni} 
 	Let $\mathcal{A}$ be a $\Sigma$-algebra and $f:(B, \mu_B)\rightarrow \mathscr{U}_\Sigma(\mathcal{A})$, then, for any other $\Sigma$-algebra $\mathcal{C}$ and $h:(B, \mu_B)\rightarrow \mathscr{U}_\Sigma(\mathcal{C})$ there exists at most one $k:\gen{B}{f}{A}\rightarrow \mathcal{C}$ such that $k\circ \bar{f}=h$.
 \end{proposition}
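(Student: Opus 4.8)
The plan is to leverage the minimality built into the construction of $\gen{B}{f}{A}$ in \cref{con}: its carrier is the \emph{smallest} subset of $A$ that contains $f(B)$ together with all interpreted constants and is closed under the interpreted operations. Since a homomorphism out of such a structure is completely pinned down by its values on the generators, any two solutions satisfying the same constraint relative to $f(B)$ must agree everywhere, which is exactly the asserted uniqueness.

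Concretely, let $k, k' : \gen{B}{f}{A} \rightarrow \mathcal{C}$ both satisfy $k\circ \bar{f} = h = k'\circ \bar{f}$, and consider the subset $E \ed \qty{x\in \mathscr{U}_\Sigma(\gen{B}{f}{A}) \mid k(x)=k'(x)}$ on which the two maps coincide. I would verify that $E$ satisfies the three closure clauses defining the carrier of $\gen{B}{f}{A}$. First, since $\mathscr{U}_\Sigma(\epsilon)$ is the inclusion of the carrier into $A$, the identity $\mathscr{U}_\Sigma(\epsilon)\circ \bar{f}=f$ yields $\bar{f}(B)=f(B)$, so for any $b\in B$ we have $k(\bar{f}(b))=h(b)=k'(\bar{f}(b))$ and thus $f(B)\subseteq E$. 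Second, because $k$ and $k'$ are morphisms of $\Sigma$-algebras, $k(c^{\mathcal{A}})=c^{\mathcal{C}}=k'(c^{\mathcal{A}})$ for every $c\in C$, so every interpreted constant lies in $E$. Third, if $g\in O$ and $x_1,\dots,x_{\ar(g)}\in E$, then applying the homomorphism property to both maps sends $g^{\mathcal{A}}(x_1,\dots,x_{\ar(g)})$ to the single element $g^{\mathcal{C}}(k(x_1),\dots,k(x_{\ar(g)}))$, so $E$ is closed under $g$.

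By the minimality clause of \cref{con}, any subset satisfying these three conditions must be the whole carrier; hence $E=\mathscr{U}_\Sigma(\gen{B}{f}{A})$ and $k=k'$. Equivalently, the same argument can be organised as a structural induction over the inductive generation of $\gen{B}{f}{A}$. I do not expect a genuine obstacle here: the only point needing care is the base case on generators, where one uses that $\bar{f}$ factors $f$ through the inclusion $\mathscr{U}_\Sigma(\epsilon)$, yielding $\bar{f}(B)=f(B)$, so that the hypothesis $k\circ\bar{f}=k'\circ\bar{f}$ indeed determines $k$ and $k'$ on the whole of $f(B)$ and not merely up to the passage from $B$ to its image.
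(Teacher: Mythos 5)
Your proof is correct and follows essentially the same route as the paper: the paper's proof is exactly a structural induction on the inductively generated carrier of $\gen{B}{f}{A}$ (agreement on $f(B)$ by hypothesis, on constants and on operation-images by the homomorphism property), and your agreement-set $E$ plus minimality argument is just the standard reformulation of that induction. If anything, you are slightly more careful than the paper in spelling out why $\bar{f}(B)=f(B)$ via $\mathscr{U}_\Sigma(\epsilon)\circ\bar{f}=f$, a step the paper leaves implicit.
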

 \begin{proof}
 	Let $k$ and $k'$ as in the thesis, we can proceed by induction:
	\begin{itemize}
		\item $k(f(x))=k'(f(x))$ for any $x\in B$ by hypothesis;
		\item $k\qty(c^{\gen{B}{f}{A}})=k'\qty(\gen{B}{f}{A})$ since $k$ and $k'$ are morphism of $\alg{\Sigma}$;
		\item $k\qty(g^{\gen{B}{f}{A}}\qty(b_1,\dots,b_{\ar(g)}))=g^{\mathcal{C}}\qty(k\qty(b_1),\dots,k(b_{\ar(g)}))$, by inductive hypothesis this is equal to  $g^{\mathcal{C}}(k'(b_1),\dots,k'(b_{\ar(g)}))$  and we conclude using again the fact that $k'$ is an arrow of $\alg{\Sigma}$.\qedhere
	 \end{itemize} 
\end{proof}

The next definition explains how to extend a theory in a given language with the data of a fuzzy set. 
\begin{definition}
	Let $(M, \mu_M)$ be a fuzzy set, $\mathcal{L}=(\Sigma, X)$ a language with $\Sigma = (O, \ar, C)$.
	We define $\Sigma[M, \mu_M]$ to be $\qty( O, \ar, C\sqcup M)$ and $\mathcal{L}_{(M,\mu_M)}$ to be $\qty(\Sigma[M, \mu_M], X)$.
	We have an obvious morphism $\mathbf{I}:\mathcal{L}\rightarrow \mathcal{L}_{(M,\mu_M)}$ given by the identities and the inclusion $i_C:C\rightarrow C\sqcup M$.
	
	For any $\Lambda \in \thr{L}$ we define $\Lambda[M, \mu_M]\in \mathcal{L}_{(M,\mu_M)}$ as $\mathbf{I}_*\qty(\Lambda)\cup \overline{\qty(M, \mu_M)}$ where $\overline{\qty(M, \mu_M)}=\qty{\vdash \ex{l}{m} \mid m \in M, l\in L \text{ and } \mu_M(m)\geq l}$.
\end{definition}
\begin{remark}
	It is immediate to see that $\mathbf{I^*}\qty(\Lambda[M,\mu_M])=\Lambda$.
\end{remark}

In the next proposition we show that, for any theory $\Lambda$, a fuzzy set can be mapped into the term model of the theory $\Lambda$ extended with it.
Hence, the natural candidate to be the free model is the algebra generated by such map.
 \begin{proposition}\label{init}
 	For any fuzzy set $(M,\mu_M)$ and any theory $\Lambda \in \thr{L}$:
 	\begin{enumerate}
 		\item the function $\bar{\eta}_{(M,\mu_M)}$ sending $m$ to the class $[m]$ of the corresponding constant defines an arrow of fuzzy sets $\bar{\eta}_{(M,\mu_M)}:(M,\mu_M)\rightarrow \trms{\Lambda\qty[M,\mu_M]}$;
 		\item any element in $\langle M,\mu_M \rangle_{\mathcal{T}_{\Lambda\qty[M,\mu_M]}, \bar{\eta}_{(M,\mu_M)}}$ has a representative without variables;
 		\item $\langle M,\mu_M \rangle_{\mathcal{T}_{\Lambda\qty[M,\mu_M]}, \bar{\eta}_{\qty(M,\mu_M)}}$ is the initial object of $\mode{\Lambda[M, \mu_M]}$;
 	\end{enumerate}
 \end{proposition}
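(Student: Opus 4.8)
The plan is to dispatch the three parts in order, the real content lying in part~(3). For part~(1) I would simply unfold the membership function of the term model: by definition $\mu_{\Lambda[M,\mu_M]}([m])=\sup\{l\mid\ \vdash_{\Lambda[M,\mu_M]}\ex{l}{m}\}$, and since $\overline{(M,\mu_M)}$ contains the axiom $\vdash\ex{\mu_M(m)}{m}$, the value $\mu_M(m)$ belongs to the set over which the supremum is taken. Hence $\mu_M(m)\le\mu_{\Lambda[M,\mu_M]}([m])$, which is exactly the inequality making $\bar{\eta}_{(M,\mu_M)}$ an arrow of $\fuz{H}$.

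For part~(2) I would invoke the explicit description of the generated subalgebra from the proof of \cref{con}: its carrier is the smallest subset of $\trms{\Lambda[M,\mu_M]}$ that contains $\bar{\eta}_{(M,\mu_M)}(M)$ together with all interpreted constants and is closed under the operations $f^{\mathcal{T}_{\Lambda[M,\mu_M]}}$. Letting $S$ be the set of classes that admit a variable-free representative, I would check that $S$ contains every $[m]$ and every $[c]$ (the symbols $m,c$ being themselves variable-free) and that it is closed under operations, using $f^{\mathcal{T}_{\Lambda[M,\mu_M]}}([t_1],\dots,[t_n])=[f(t_1,\dots,t_n)]$. Minimality then forces the carrier to sit inside $S$, which is the claim.

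For part~(3) I would first record that the subalgebra is a model: $\mathcal{T}_{\Lambda[M,\mu_M]}$ is a model of $\Lambda[M,\mu_M]$ by \cref{trms}, so \cref{sub2} makes the generated subalgebra a model too. For initiality, fix an arbitrary $\mathcal{N}\in\mode{\Lambda[M,\mu_M]}$ and define the candidate morphism by closed-term evaluation: for an element with variable-free representative $t$ (which exists by part~(2)) put $k([t]):=\ext{N}{\iota}{t}$, a value independent of $\iota$ because $t$ carries no variables. Well-definedness is the place where soundness enters: if $[t]=[t']$ then $\vdash_{\Lambda[M,\mu_M]}t\equiv t'$, so \cref{sound} gives $\mathcal{N}\vDash t\equiv t'$ and the two evaluations coincide. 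Preservation of constants and operations is read off the recursive clauses of $(-)^{\mathcal{N},\iota}$, and the fuzzy-set condition follows once more from soundness: any $l$ with $\vdash_{\Lambda[M,\mu_M]}\ex{l}{t}$ satisfies $l\le\mu_N(\ext{N}{\iota}{t})$, whence $\mu_{\Lambda[M,\mu_M]}([t])\le\mu_N(k([t]))$.

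Uniqueness I would obtain from \cref{uni}: any morphism out of the subalgebra sends the constant $[m]=\bar{\eta}_{(M,\mu_M)}(m)$ to $m^{\mathcal{N}}$, so its composite with $\bar{\eta}_{(M,\mu_M)}$ is forced to be $m\mapsto m^{\mathcal{N}}$, and \cref{uni} guarantees at most one $k$ with that prescribed restriction. I expect the main obstacle to be part~(3), specifically checking that the evaluation map is well defined and degree-preserving; both hinge on soundness together with part~(2) supplying a variable-free representative, whereas parts~(1)--(2) are essentially routine bookkeeping.
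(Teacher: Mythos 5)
Your proposal is correct and follows essentially the same route as the paper: part (1) via the axiom $\vdash \ex{\mu_M(m)}{m}$ in $\overline{(M,\mu_M)}$, part (2) via the explicit construction of the generated subalgebra in \cref{con}, and part (3) via \cref{sub2} for membership in $\mode{\Lambda[M,\mu_M]}$, \cref{uni} for uniqueness, and evaluation of variable-free representatives (well defined by soundness) for existence. You merely spell out details the paper leaves implicit, such as why every algebra morphism out of the subalgebra automatically has the prescribed composite with $\bar{\eta}_{(M,\mu_M)}$, which is exactly what makes the appeal to \cref{uni} legitimate.
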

 \begin{proof}
 	\begin{enumerate}
 		\item The only thing to do is to show that $\mu_M(m)\leq \mu_{{\Lambda[M,\mu_M]}}\qty([m])$	but we are easily done since $\vdash \ex{\mu_M}{m}\in \Lambda\qty[M, \mu_M]$.
	 	\item This follows by the explicit construction given in \cref{con}. 
	 	\item $\langle M,\mu_M \rangle_{\mathcal{T}_{\Lambda\qty[M,\mu_M]}, \bar{\eta}_{\qty(M,\mu_M)}}$ is in $\mode{\Lambda[M, \mu_M]}$ by \cref{sub2}.
	 	For every model $\mathcal{A}$ of $\Lambda[M, \mu_M]$, mapping $m\in M$ to $m^{\mathcal{A}}$ provides a morphism of fuzzy sets $(M,\mu_M)\rightarrow \mathscr{U}_{\Sigma[M,\mu_M]}(\mathcal{A})$. By \cref{uni} there exists at most one morphism of $\Sigma[M,\mu_M]$-algebras  from $\langle M,\mu_M \rangle_{\mathcal{T}_{\Lambda\qty[M,\mu_M]}, \bar{\eta}_{\qty(M,\mu_M)}}$ to $\mathcal{A}$.
	 	On the other hand, by point 2  for every $[t]\in \langle M,\mu_M \rangle_{\mathcal{T}_{\Lambda\qty[M,\mu_M]}, \bar{\eta}_{(M,\mu_M)}}$ there exists a term $s$ with no variables such that $\Lambda[M, \mu_M]\vdash t\equiv s$, therefore $t^{\mathcal{A}, \iota}=t^{\mathcal{A}, \iota'}$ for any two assignments $\iota, \iota':X\rightarrow A$. Hence we get a morphism of $\Sigma[M,\mu_M]$-algebras sending the class $[t]$ of a term $t$ to its interpretation $t^{\mathcal{A}, \iota}$ for any chosen assignment $\iota$.\qedhere
	\end{enumerate}
 \end{proof}
	 
 \begin{definition}\label{def:freemodel}
 	For any language $\mathcal{L}$, $\Lambda\in \thr{L}$ and $(M,\mu_M)\in \fuz{H}$ we define the \emph{free model $\mathcal{F}_\Lambda\qty(M,\mu_M)$ of $\Lambda$ generated by $\qty(M,\mu_M)$} to be $\res{I}{\Lambda\qty[M,\mu_M]}\qty(\langle M,\mu_M \rangle_{\mathcal{T}_{\Lambda\qty[M,\mu_M]},\bar{\eta}_{\qty(M,\mu_M)}})$.
 	We set $\term{\Lambda}\qty(M,\mu_M)$ to be $\mathscr{U}_{\Lambda}\qty(\mathcal{F}_\Lambda(M,\mu_M))$.
 \end{definition}
Now it is enough to check that the free model just defined actually provides the left adjoint.%
 \begin{theorem}\label{free}
 	For any language $\mathcal{L}$ and $\Lambda\in \thr{L}$ the functor $\mathscr{U}_\Lambda:\mode{\Lambda}\rightarrow \fuz{L}$ has a left adjoint $\mathscr{F}_{\Lambda}$.
 \end{theorem}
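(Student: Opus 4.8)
The plan is to exhibit, for each fuzzy set $(M,\mu_M)$, a universal arrow from $(M,\mu_M)$ to $\mathscr{U}_\Lambda$; the adjunction—together with the naturality of the unit and of $\mathscr{F}_\Lambda$—then follows by the standard pointwise-to-global argument, so it suffices to check the universal property objectwise. The candidate free object is $\mathscr{F}_\Lambda(M,\mu_M)$ from \cref{def:freemodel}, and the candidate unit $\eta_{(M,\mu_M)}$ is the corestriction of $\bar\eta_{(M,\mu_M)}$ to the generated subalgebra, read as an arrow of $\fuz{H}$; here I rely on the fact that the reduction functor $\res{I}{\Lambda[M,\mu_M]}$ leaves the underlying fuzzy set, and underlying functions, untouched, so that $\mathscr{U}_\Lambda(\mathscr{F}_\Lambda(M,\mu_M))$ and $\eta_{(M,\mu_M)}$ make sense and the unit sends $m$ to the class $[m]$ of the corresponding constant.

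Given a model $\mathcal{A} = ((A,\mu_A),\Sigma^{\mathcal{A}}) \in \mode{\Lambda}$ and an arrow $f : (M,\mu_M) \to \mathscr{U}_\Lambda(\mathcal{A})$ of $\fuz{H}$, I would first extend $\mathcal{A}$ to a $\Sigma[M,\mu_M]$-algebra $\mathcal{A}^f$ by setting $m^{\mathcal{A}^f} \ed f(m)$ for each new constant $m \in M$ and keeping the old structure, so that $\rs{I}(\mathcal{A}^f) = \mathcal{A}$. The key claim is that $\mathcal{A}^f$ is a model of $\Lambda[M,\mu_M] = \mathbf{I}_*(\Lambda)\cup \overline{(M,\mu_M)}$. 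The membership axioms $\vdash \ex{l}{m}$ (those with $\mu_M(m)\geq l$) hold because $f$ is an arrow of fuzzy sets, whence $l \leq \mu_M(m) \leq \mu_A(f(m)) = \mu_{\mathcal{A}^f}(m^{\mathcal{A}^f})$. For the image axioms $\mathbf{I}_*(\Lambda)$ I would invoke \cref{mrp}: since $\rs{I}(\mathcal{A}^f) = \mathcal{A} \vDash \Lambda$ and $\mathbf{I}$ fixes the variables, point 3 of that lemma turns satisfaction of each $\Gamma \vdash \psi \in \Lambda$ by $\mathcal{A}$ into satisfaction of $\stat{\mathbf{I}}(\Gamma) \vdash \stat{\mathbf{I}}(\psi)$ by $\mathcal{A}^f$.

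With $\mathcal{A}^f \in \mode{\Lambda[M,\mu_M]}$ established, \cref{init} provides a unique $\Sigma[M,\mu_M]$-morphism $\hat f : \langle M,\mu_M\rangle_{\mathcal{T}_{\Lambda[M,\mu_M]}, \bar\eta_{(M,\mu_M)}} \to \mathcal{A}^f$, the domain being initial in $\mode{\Lambda[M,\mu_M]}$. Applying $\res{I}{\Lambda[M,\mu_M]}$ yields $f^{\sharp} \ed \res{I}{\Lambda[M,\mu_M]}(\hat f) : \mathscr{F}_\Lambda(M,\mu_M) \to \mathcal{A}$ in $\mode{\Lambda}$. The triangle $\mathscr{U}_\Lambda(f^{\sharp})\circ \eta_{(M,\mu_M)} = f$ is then immediate on elements: $\eta_{(M,\mu_M)}(m) = [m]$ is the interpretation of the constant $m$, which the algebra morphism $\hat f$ sends to $m^{\mathcal{A}^f} = f(m)$, and reduction does not alter the underlying function.

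For uniqueness—the crux of the universal property—I would observe that any model morphism $k : \mathscr{F}_\Lambda(M,\mu_M) \to \mathcal{A}$ with $\mathscr{U}_\Lambda(k)\circ \eta_{(M,\mu_M)} = f$ automatically preserves the new constants, since $k([m]) = \mathscr{U}_\Lambda(k)(\eta_{(M,\mu_M)}(m)) = f(m) = m^{\mathcal{A}^f}$; as $k$ already respects the operations and old constants of $\Sigma$, it is therefore a $\Sigma[M,\mu_M]$-morphism $\langle M,\mu_M\rangle \to \mathcal{A}^f$ with the same underlying function, and initiality forces it to equal $\hat f$, giving $k = f^{\sharp}$ (alternatively one may cite \cref{uni} directly). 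I expect the main obstacle to be the verification that $\mathcal{A}^f$ genuinely models $\mathbf{I}_*(\Lambda)$ and the bookkeeping showing that passage through $\res{I}{\Lambda[M,\mu_M]}$ preserves underlying functions, so that neither the triangle identity nor the uniqueness argument is disturbed; once these are in place, the naturality statements and the adjunction $\mathscr{F}_\Lambda \dashv \mathscr{U}_\Lambda$ follow for free.
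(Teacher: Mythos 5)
Your proposal is correct and follows essentially the same route as the paper's proof: the same unit $\eta_{(M,\mu_M)}$ factoring $\bar{\eta}_{(M,\mu_M)}$, the same extension of a model $\mathcal{A}$ along $f$ to a $\Sigma[M,\mu_M]$-algebra $\mathcal{A}^f$, verification that $\mathcal{A}^f$ models $\Lambda[M,\mu_M]$, and then initiality (\cref{init}) plus restriction along $\res{I}{\Lambda[M,\mu_M]}$ to produce the mediating morphism and its uniqueness. If anything, your write-up is slightly more careful than the paper's at two points the paper leaves terse, namely invoking \cref{mrp} to justify that $\mathcal{A}^f$ satisfies $\mathbf{I}_*(\Lambda)$, and spelling out why any competing morphism $k$ automatically preserves the new constants before applying initiality (or \cref{uni}).
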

 \begin{proof}
	By construction $\bar{\eta}_{(M,\mu_M)}$ factors through $	 	\eta_{(M,\mu_M)}:(M,\mu_M)\rightarrow \term{\Lambda}(M,\mu_M)$ which sends $m$ to $[m]$.
	Let now $g:(M,\mu_M)\rightarrow \mathscr{U}_\Lambda(\mathcal{B})$ be another arrow in $\fuz{H}$, we can turn $\mathcal{B}$ into a $\Sigma[M,\mu_M]$-algebra $\mathcal{B}^g$ setting $m^{\mathcal{B}^g}$ to be $g(m)$ for any $m\in M$.
	
    Let us show that $\mathcal{B}^g$ is a model of $\Lambda[M, \mu_M]$. 
	 Surely it is a model of $\Lambda$ since $\mathcal{B}$ is, let $\vdash \ex{l}{m}$ be a sequent in $\overline{(M, \mu_M)}$, then for any assignment $\iota:V\rightarrow B$:
	 	\begin{align*}
	 		\mathcal{B}^g \vDash_\iota  \ex{l}{m} &\iff l\leq \mu_{B}(m^{\mathcal{B}^g, \iota})l\leq \mu_{\Lambda}\qty(
	 		t^{\mathscr{F}_{\Lambda}(M, \mu_M), \eta_{(M,\mu_M)}\circ \iota})\iff l\leq \mu_B(g(m)) 
	 	\end{align*}
	 	but $g$ is an arrow of $\fuz{H}$ so $\mu_B(g(m))\geq \mu_M(m)$ and we are done.

	 Now, since $\mathcal{B}^{g}$ is a model of $\Lambda[M, \mu_M]$, we can take $\bar{g}$ to be the image through $\res{\mathbf{I}}{\Lambda[M, \mu_M]}$ of the unique arrow $\langle M,\mu_M \rangle_{\mathcal{T}_{\Lambda[(M,\mu_M)]}, \bar{\eta}_{(M,\mu_M)}} \rightarrow \mathcal{B}^{g}$, by construction
	 \begin{align*}
	 	\bar{g}(\eta_{(M,\mu_M)}(m))=\bar{g}([m])=m^{\mathcal{B}^g}=g(m)
	 \end{align*}
	 so $\mathscr{U}_\Lambda(\bar{g})\circ \eta_{(M,\mu_M)}=g$. Uniqueness follows from \cref{init}.
\end{proof} 

\begin{definition}
 	Given a theory $\Lambda\in \thr{L}$, its \emph{associated monad} $\term{\Lambda}:\fuz{H}\rightarrow \fuz{H}$ is the composite $\mathscr{U}_\Lambda\circ \mathscr{F}_\Lambda$. 
\end{definition}
 
\begin{remark}
 	If $\Lambda$ is the empty theory (in any language), then, by \cref{left}, the composittion $\mathscr{F}_\emptyset\circ \nabla$ gives us a functor isomorphic to $\trma{\Sigma}$.
\end{remark}
\begin{notazione}
	We will denote by $\trm{\emptyset}$ with $\trm{\Sigma}$ and with  $\term{\Sigma}$ the monad $\term{\emptyset}=\mathscr{U}_\Sigma\circ \trm{\Sigma}$.	
\end{notazione}

In this setting we can provide a result similar to \cref{trms}.
\begin{lemma}\label{trms2}
 	For any language  $\mathcal{L}=(\Sigma, X)$ we define the \emph{natural assignment} $\iota_{nat}$ as the adjoint to the unit $\nabla(X)\rightarrow \term{\Lambda}(\nabla(X))$. Then $\trm{\Lambda}(\nabla(X))\vDash_{\iota_{nat}}\phi$ if and only if $\vdash_\Lambda \phi $.
\end{lemma}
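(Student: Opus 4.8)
The plan is to reduce the claim to \cref{trms} by computing the interpretation of terms in the free model under $\iota_{nat}$. The elements of $X$ play two roles here: they are the variables of $\mathcal{L}$, and they are the constants adjoined to form $\Lambda[\nabla(X)]\ed\Lambda[X,c_\bot]$. For $t\in\terms{L}$ let $\hat{t}$ be the closed term over $\mathcal{L}_{\nabla(X)}$ obtained by replacing each variable $x$ by the constant of $C\sqcup X$ named by it, and define $\hat{\phi}$ likewise. By the construction in the proof of \cref{free} the unit at $\nabla(X)$ sends $x$ to the class $[x]$ of that constant, so its transpose $\iota_{nat}$ under $\nabla\dashv\mathscr{V}$ is the underlying function $x\mapsto[x]$. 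Since $\trm{\Lambda}(\nabla(X))$ is the reduct to $\Sigma$ of the subalgebra of $\mathcal{T}_{\Lambda[\nabla(X)]}$ generated by these classes, a routine induction on $t$ gives $t^{\trm{\Lambda}(\nabla(X)),\,\iota_{nat}}=[\hat{t}]$, the class taken in $\mathcal{T}_{\Lambda[\nabla(X)]}$.

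The inclusion of $\trm{\Lambda}(\nabla(X))$ into $\mathcal{T}_{\Lambda[\nabla(X)]}$ is a strong monomorphism, so equalities and membership degrees of such classes may be read off either in the subalgebra or in $\mathcal{T}_{\Lambda[\nabla(X)]}$ (\cref{ind}). As $\hat{\phi}$ is closed, \cref{trms} applied to the term model $\mathcal{T}_{\Lambda[\nabla(X)]}$ yields
\begin{equation*}
\trm{\Lambda}(\nabla(X))\vDash_{\iota_{nat}}\phi\quad\Longleftrightarrow\quad \vdash_{\Lambda[\nabla(X)]}\hat{\phi}. \tag{$\ast$}
\end{equation*}
Now the ``if'' part of the lemma is immediate and needs none of this: $\trm{\Lambda}(\nabla(X))$ is a model of $\Lambda$, so by soundness (\cref{sound}) $\vdash_\Lambda\phi$ implies that it satisfies $\phi$ under every assignment, in particular $\iota_{nat}$. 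By $(\ast)$ the remaining ``only if'' part reduces to the implication $\vdash_{\Lambda[\nabla(X)]}\hat{\phi}\Rightarrow\vdash_\Lambda\phi$.

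For this implication I argue semantically, since stripping the adjoined constants out of a derivation of $\hat{\phi}$ is awkward: intermediate applications of \textsc{Sub} may reintroduce genuine variables and cannot in general be undone. Instead, turn $\mathcal{T}_\Lambda$ into a $\Sigma[\nabla(X)]$-algebra $\mathcal{T}_\Lambda^{\ast}$ by interpreting the constant named by $x\in X$ as $[x]=\iota_{can}(x)$, leaving the rest of the structure unchanged. Then $\mathcal{T}_\Lambda^{\ast}$ is a model of $\Lambda[\nabla(X)]$: its reduct to $\Sigma$ is $\mathcal{T}_\Lambda\in\mode{\Lambda}$, and the sequents of $\mathbf{I}_*(\Lambda)$ do not involve the adjoined constants, so they are satisfied because $\mathcal{T}_\Lambda$ models $\Lambda$; the sequents of $\overline{(X,c_\bot)}$ are all of the form $\vdash\ex{\bot}{x}$ and hold trivially. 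A short induction gives $\hat{t}^{\,\mathcal{T}_\Lambda^{\ast}}=t^{\mathcal{T}_\Lambda,\iota_{can}}=[t]$, so $\mathcal{T}_\Lambda^{\ast}\vDash\hat{\phi}$ iff $\vdash_\Lambda\phi$ by \cref{trms}. Hence if $\vdash_{\Lambda[\nabla(X)]}\hat{\phi}$, soundness forces $\mathcal{T}_\Lambda^{\ast}\vDash\hat{\phi}$ and therefore $\vdash_\Lambda\phi$, as required.

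The only genuinely delicate point is this last implication $\vdash_{\Lambda[\nabla(X)]}\hat{\phi}\Rightarrow\vdash_\Lambda\phi$, i.e.\ keeping the two roles of $X$ apart; it is precisely here that the auxiliary model $\mathcal{T}_\Lambda^{\ast}$ together with soundness stands in for a proof-theoretic inversion blocked by \textsc{Sub}. Everything else is the bookkeeping identity $t^{\trm{\Lambda}(\nabla(X)),\,\iota_{nat}}=[\hat{t}]$ combined with $(\ast)$ and \cref{trms}.
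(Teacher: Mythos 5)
Your proof is correct, but it departs from the paper's argument in the direction that matters. The paper, after disposing of the easy direction by soundness exactly as you do, uses the universal property of the free model: the canonical assignment $\iota_{can}$ induces (via $\nabla\dashv\mathscr{V}$ and freeness) a $\Sigma$-homomorphism $e:\trm{\Lambda}(\nabla(X))\rightarrow \rs{I}\qty(\mathcal{T}_{\Lambda[\nabla(X)]})$ with $e\circ\iota_{nat}=\iota_{can}$; it then pushes satisfaction of $\phi$ (kept with its variables) forward along $e$ using point 3 of \cref{ind}, applies \cref{trms} to get $\vdash_{\Lambda[\nabla(X)]}\phi$, and finishes by asserting that $\qty(\Lambda[\nabla(X)])^{\vdash}=\Lambda^{\vdash}$ because the adjoined axioms are just $\vdash\ex{\bot}{x}$. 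You never build $e$: instead you close the formula, replacing variables by the adjoined constants, compute $t^{\trm{\Lambda}(\nabla(X)),\,\iota_{nat}}=[\hat{t}]$ through the strong-mono inclusion of the generated subalgebra (\cref{con}, \cref{ind}), and reduce to the conservativity statement $\vdash_{\Lambda[\nabla(X)]}\hat{\phi}\Rightarrow\vdash_\Lambda\phi$, which you then prove semantically via the expansion $\mathcal{T}_\Lambda^{\ast}$ of the term model and soundness. The trade-off is instructive: the paper's route is shorter, but its last step silently identifies deductive closures of theories living over \emph{different} languages — precisely the conservativity-of-added-constants claim that a proof-theoretic constant-elimination argument (blocked, as you note, by \textsc{Sub}) would be needed to justify; your auxiliary model $\mathcal{T}_\Lambda^{\ast}$, interpreting each new constant as $\iota_{can}(x)=[x]$, is essentially the missing semantic witness for that assertion, so your version makes explicit and rigorous the one point the paper glosses over, at the modest cost of the hat-translation bookkeeping.
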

\begin{proof}
	The implication from the right to the left follows immediately since $\trm{\Lambda}(\nabla(X))$ is a model for $\Lambda$. By adjointness he canonical assignment $\iota_{can}$ induces an arrow $\nabla(X)\rightarrow \mathcal{U}_{\Lambda[\nabla(X)]}\qty( \mathcal{T}_{\Lambda[\nabla(X)]})$, which, in turn, induces a morphism $e:\trm{\Lambda}(\nabla(X))\rightarrow \mathcal{T}_{\Lambda[\nabla(X)]}$ of $\Sigma$-algebras such that, as function between sets, $e\circ \iota_{nat}=\iota_{can}$. Recalling that $\mathbf{I}$ is the arrow $(\Sigma, X)\rightarrow (\Sigma[\nabla(X)], X )$ and using \cref{ind}, \cref{mrp} and \cref{trms}: 
	\begin{align*}
		\trm{\Lambda}(\nabla X)\vDash_{\iota_{nat}}\phi &\iff \res{I}{\Lambda[\nabla(X)]}\qty(\gn)\vDash_{\iota_{nat}}\phi \\
		&\iff\rs{I}\qty(\gn)\vDash_{\iota_{nat}}\phi \\
		&\iff \gn \vDash_{\iota_{nat}}\phi \\ 
		&\hspace{0.2cm}\Longrightarrow  \mathcal{T}_{\Lambda[\nabla(X)]} \vDash_{e\circ \iota_{nat}}\phi\\
		&\iff \mathcal{T}_{\Lambda[\nabla(X)]} \vDash_{\iota_{can}}\phi \\&\iff \vdash_{\Lambda[\nabla(X)]} \phi
	\end{align*}
	Now, by definition $\overline{\nabla(X)}$ is equal to $\qty{\vdash \ex{\bot}{x}\mid x\in X}$, therefore $\qty(\Lambda[\nabla(X)])^{\vdash}=\Lambda^{\vdash}$ and we get the thesis. 
\end{proof}
	 
\subsection{Eilenberg-Moore algebras and models}\label{sec:emalg}
In this section we will compare the category $\mode{\Lambda}$ of models of some $\Lambda \in \thr{L}$ and $\eim{\Lambda}$ of Eilenberg-Moore algebras for the corresponding monad $\term{\Lambda}$. First of all we recall the following classic lemma (\cite[Prop.~4.2.1]{borceux1994handbook} and \cite[Theorem VI.3.1]{mac2013categories}).
\begin{lemma}
 	Let $\mathscr{L}:\catname{C}\rightarrow \catname{D}$ be a functor with right adjoint $\mathscr{R}$ and let $\mathsf{T}=\mathscr{R}\circ \mathscr{L}$ be its associated monad, then there exists a \emph{comparison functor} $\mathscr{K}:\catname{D}\rightarrow \catname{Alg}(\mathsf{T})$ such that 
 	$\mathscr{U}_{\mathsf{T}}\circ \mathscr{K}=\mathscr{R}$, where $\mathscr{U}_{\mathsf{T}}:\catname{Alg}(\mathsf{T})\rightarrow \catname{C}$ is the forgetful functor.
 	$\mathscr{K}$ sends $D$ in $(\mathscr{R}(D), \mathscr{R}(\epsilon_D))$, where $\epsilon$ is the counit of the adjunction.
\end{lemma}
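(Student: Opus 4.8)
The plan is to verify directly that the assignment $D \mapsto (\mathscr{R}(D), \mathscr{R}(\epsilon_D))$ lands in $\catname{Alg}(\mathsf{T})$, to extend it to morphisms via $\mathscr{R}$, and then to read off both functoriality and the identity $\mathscr{U}_{\mathsf{T}}\circ \mathscr{K}=\mathscr{R}$. Recall that the monad structure on $\mathsf{T}=\mathscr{R}\circ \mathscr{L}$ has unit the unit $\eta$ of the adjunction and multiplication $\mu=\mathscr{R}\epsilon \mathscr{L}$, so that $\mu_A=\mathscr{R}(\epsilon_{\mathscr{L}(A)})$ for every object $A$ of $\catname{C}$; everything will reduce to the two triangle identities and to naturality of $\epsilon$.

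First I would check that $(\mathscr{R}(D), \mathscr{R}(\epsilon_D))$ is a $\mathsf{T}$-algebra. The unit law $\mathscr{R}(\epsilon_D)\circ \eta_{\mathscr{R}(D)}=\id{\mathscr{R}(D)}$ is precisely the triangle identity $\mathscr{R}\epsilon \circ \eta \mathscr{R}=\id{\mathscr{R}}$ evaluated at $D$. For the associativity law, unfolding $\mu_{\mathscr{R}(D)}=\mathscr{R}(\epsilon_{\mathscr{L}\mathscr{R}(D)})$ and $\mathsf{T}(\mathscr{R}(\epsilon_D))=\mathscr{R}\mathscr{L}\mathscr{R}(\epsilon_D)$ and using functoriality of $\mathscr{R}$, it suffices to prove the equality
\begin{align*}
	\epsilon_D\circ \mathscr{L}\mathscr{R}(\epsilon_D)=\epsilon_D\circ \epsilon_{\mathscr{L}\mathscr{R}(D)}
\end{align*}
inside $\catname{D}$, which is exactly the naturality square of $\epsilon$ at the morphism $\epsilon_D:\mathscr{L}\mathscr{R}(D)\rightarrow D$. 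Applying $\mathscr{R}$ to both sides then yields $\mathscr{R}(\epsilon_D)\circ \mu_{\mathscr{R}(D)}=\mathscr{R}(\epsilon_D)\circ \mathsf{T}(\mathscr{R}(\epsilon_D))$, as required.

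Next I would set $\mathscr{K}(f)\ed \mathscr{R}(f)$ for an arrow $f:D\rightarrow D'$ of $\catname{D}$ and verify it is a morphism of $\mathsf{T}$-algebras, i.e.\ $\mathscr{R}(f)\circ \mathscr{R}(\epsilon_D)=\mathscr{R}(\epsilon_{D'})\circ \mathsf{T}(\mathscr{R}(f))$. Since $\mathsf{T}(\mathscr{R}(f))=\mathscr{R}\mathscr{L}\mathscr{R}(f)$, this again reduces, after applying functoriality of $\mathscr{R}$, to the naturality of $\epsilon$ at $f$, namely $f\circ \epsilon_D=\epsilon_{D'}\circ \mathscr{L}\mathscr{R}(f)$. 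Functoriality of $\mathscr{K}$ (preservation of identities and composites) is then inherited from that of $\mathscr{R}$, and the equation $\mathscr{U}_{\mathsf{T}}\circ \mathscr{K}=\mathscr{R}$ holds on the nose: on objects $\mathscr{U}_{\mathsf{T}}(\mathscr{R}(D),\mathscr{R}(\epsilon_D))=\mathscr{R}(D)$ and on arrows $\mathscr{U}_{\mathsf{T}}(\mathscr{R}(f))=\mathscr{R}(f)$.

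The computations are all routine diagram chases, so there is no genuine obstacle; the only point that requires care is the associativity law, where one must correctly unwind the whiskered multiplication $\mu=\mathscr{R}\epsilon \mathscr{L}$ and recognise that the relevant instance of naturality is that of the counit taken at its own component $\epsilon_D$, rather than at an arbitrary morphism.
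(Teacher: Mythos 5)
Your proof is correct: the unit law is the triangle identity, and both the associativity law and the compatibility of $\mathscr{R}(f)$ with the structure maps reduce, after applying functoriality of $\mathscr{R}$, to naturality of the counit (in the first case at its own component $\epsilon_D$), which is exactly the standard argument. The paper itself offers no proof of this lemma --- it recalls it as a classic result, citing Borceux (Prop.~4.2.1) and Mac Lane (Theorem~VI.3.1) --- and your verification is precisely the proof found in those references, so there is nothing to compare beyond noting that you have filled in what the paper delegates to the literature.
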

As a consequence, for any theory $\Lambda$ we have a functor from $\mode{\Lambda}$ to $\eim{\Lambda}$. We want to construct an inverse of such functor.
\begin{definition}
	Let $\Lambda$ be in $\thr{L}$ and $\xi:\term{\Lambda}(M, \mu_M)\rightarrow (M, \mu_M)$ an object of $\eim{\Lambda}$, we define its \emph{associated algebra} $\mathscr{H}(\xi)=\qty(\qty(M, \mu_M), \Sigma^{\mathscr{H}(\xi)})$ putting
	\begin{equation*}
		c^{\mathscr{H}(\xi)}\ed\xi\qty(c^{\mathscr{F}_\Lambda(X, \mu_X)}) \qquad f^{\mathscr{H}\qty(\xi)}\ed\xi \circ f^{\mathscr{F}_\Lambda(X, \mu_X)}\circ \eta_{\qty(M,\mu_M)}^{\ar(f)}
	\end{equation*}
	 for every $c\in C$ and $f\in O$.
\end{definition}

\begin{lemma}\label{hom}
 	For any Eilenberg-Moore algebra $\xi:\term{\Lambda}(M, \mu_M)\rightarrow (M, \mu_M)$, $\xi$ itself is an arrow $\mathscr{F}_\Lambda(X,\mu_X)\rightarrow \mathscr{H}(\xi)$ of $\alg{\Sigma}$.
\end{lemma}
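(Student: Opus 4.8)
Since $\xi$ is the structure map of an Eilenberg--Moore algebra it is already an arrow of $\fuz{H}$, and it preserves constants for free: by definition $c^{\mathscr{H}(\xi)}=\xi(c^{\mathscr{F}_\Lambda(M,\mu_M)})$, which is exactly the constant-preservation equation. Hence the entire content of the lemma is that $\xi$ commutes with the operations, i.e.\ that
\begin{equation*}
\xi(f^{\mathscr{F}_\Lambda(M,\mu_M)}(w_1,\dots,w_{\ar(f)}))=f^{\mathscr{H}(\xi)}(\xi(w_1),\dots,\xi(w_{\ar(f)}))
\end{equation*}
for every $f\in O$ and all $w_1,\dots,w_{\ar(f)}\in\term{\Lambda}(M,\mu_M)$. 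The plan is to derive this identity from the monad structure of $\term{\Lambda}=\mathscr{U}_\Lambda\circ\mathscr{F}_\Lambda$ together with the two Eilenberg--Moore axioms for $\xi$, rather than by unwinding the explicit term model.

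Write $\eta$ for the unit of $\mathscr{F}_\Lambda\dashv\mathscr{U}_\Lambda$, set $\mathfrak{m}\ed\mathscr{U}_\Lambda(\epsilon_{\mathscr{F}_\Lambda(M,\mu_M)})$ for the induced multiplication, and abbreviate $N\ed\term{\Lambda}(M,\mu_M)$. First I would record the crucial observation that the two maps $\term{\Lambda}(N)\rightarrow N$ at play, namely $\mathfrak{m}$ and $\term{\Lambda}(\xi)$, are both \emph{morphisms of $\Sigma$-algebras} from $\mathscr{F}_\Lambda(N)$ to $\mathscr{F}_\Lambda(M,\mu_M)$: the former is the underlying map of the counit $\epsilon_{\mathscr{F}_\Lambda(M,\mu_M)}$, which is an arrow of $\mode{\Lambda}$, and the latter is $\mathscr{U}_\Lambda\mathscr{F}_\Lambda(\xi)$, the image under the free functor of the fuzzy-set map $\xi$. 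Consequently both may be slid through the free operations $f^{\mathscr{F}_\Lambda(-)}$.

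The key step is then to test the associativity axiom on the witness element
\begin{equation*}
W\ed f^{\mathscr{F}_\Lambda(N)}(\eta_N(w_1),\dots,\eta_N(w_{\ar(f)}))\in\term{\Lambda}(N).
\end{equation*}
Since $\mathfrak{m}$ is a homomorphism and $\mathfrak{m}\circ\eta_N=\id{N}$ (a monad unit law), while $\term{\Lambda}(\xi)$ is a homomorphism and $\term{\Lambda}(\xi)\circ\eta_N=\eta_{(M,\mu_M)}\circ\xi$ (naturality of $\eta$), applying each of them to $W$ yields
\begin{gather*}
\mathfrak{m}(W)=f^{\mathscr{F}_\Lambda(M,\mu_M)}(w_1,\dots,w_{\ar(f)}),\\
\term{\Lambda}(\xi)(W)=f^{\mathscr{F}_\Lambda(M,\mu_M)}(\eta_{(M,\mu_M)}(\xi(w_1)),\dots,\eta_{(M,\mu_M)}(\xi(w_{\ar(f)}))).
\end{gather*}
Feeding both through $\xi$ and invoking the associativity law $\xi\circ\mathfrak{m}=\xi\circ\term{\Lambda}(\xi)$ gives
\begin{equation*}
\xi(f^{\mathscr{F}_\Lambda(M,\mu_M)}(w_1,\dots,w_{\ar(f)}))=\xi(f^{\mathscr{F}_\Lambda(M,\mu_M)}(\eta_{(M,\mu_M)}(\xi(w_1)),\dots,\eta_{(M,\mu_M)}(\xi(w_{\ar(f)})))),
\end{equation*}
and the right-hand side is precisely $f^{\mathscr{H}(\xi)}(\xi(w_1),\dots,\xi(w_{\ar(f)}))$ by the definition $f^{\mathscr{H}(\xi)}=\xi\circ f^{\mathscr{F}_\Lambda(M,\mu_M)}\circ\eta_{(M,\mu_M)}^{\ar(f)}$. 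This is the desired operation-preservation equation.

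I expect the only delicate point to be the bookkeeping of the second paragraph: one must genuinely verify that $\mathfrak{m}$ and $\term{\Lambda}(\xi)$ are homomorphisms between the correct free models $\mathscr{F}_\Lambda(N)$ and $\mathscr{F}_\Lambda(M,\mu_M)$, so that they commute with $f^{\mathscr{F}_\Lambda(-)}$, and that every occurrence of $\eta$ and $\mathfrak{m}$ carries the index forced by the monad laws. Once the types are aligned the computation is forced, and the constant case together with the automatic fuzzy-set condition complete the proof.
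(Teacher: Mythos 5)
Your proof is correct and takes essentially the same route as the paper: your ``crucial observation'' that the multiplication $\mathfrak{m}$ and $\term{\Lambda}(\xi)$ are $\Sigma$-algebra homomorphisms---being the underlying maps of the counit component $\epsilon_{\mathscr{F}_\Lambda(M,\mu_M)}$ and of $\mathscr{F}_\Lambda(\xi)$---is exactly the paper's preliminary claim (\cref{expl}), and your computation on the witness $W$ combines it with naturality of $\eta$, the monad unit law, and the Eilenberg--Moore associativity axiom in precisely the same way. The only difference is presentational: the paper pastes these facts into a single commuting diagram, whereas you evaluate them pointwise on $W$.
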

\begin{proof}
	Let us start with the following observation.		
	\begin{claim}\label{expl}
		Let $\Lambda$ be a theory in the language $\mathcal{L}$ and $\hat{\mu}$ the multiplication of $\term{\Lambda}$, then, for any  $g:(A,\mu_A)\rightarrow (B, \mu_B)$ and operation $f$ the following diagrams commutes:
	 	\begin{center}
	 		\begin{tikzpicture}
	 			\node(A) at(0,-1.5) {$\term{\Lambda}(B, \mu_B	)^{\ar(f)}$};
	 			\node(B) at(4,-1.5) {$\term{\Lambda}(B, \mu_B)$};
	 			\node(D) at(4,0) {$\term{\Lambda}(M, \mu_M)$};
	 			\node(C) at(0,0) {$\term{\Lambda}(M, \mu_M)^{\ar(f)}$};
	 			\draw[->](A)--(B) node[pos=0.5, below]{$f^{\mathscr{F}_{\Lambda}(B, \mu_B)}$};
	 			\draw[<-](B)--(D) node[pos=0.5, right]{$\term{\Lambda}(g)$};
	 			\draw[<-](A)--(C) node[pos=0.5, left]{$\term{\Lambda}(g)^{\ar(f)}$};
	 			\draw[->](C)--(D) node[pos=0.5, above]{$f^{\mathscr{F}_\Lambda(M, \mu_M)}$};
	 				
 				\node(A) at(-0.5,1) {$\term{\Lambda}(M, \mu_M	)^{\ar(f)}$};
 				\node(B) at(4.5,1) {$\term{\Lambda}(M, \mu_M	)^{\ar(f)}$};
 				\node(D) at(4.5,2.5) {$\term{\Lambda}(\term{\Lambda}(M, \mu_M))$};
 				\node(C) at(-0.5,2.5) {$\term{\Lambda}(\term{\Lambda}(M, \mu_M))^{\ar(f)}$};
 				\draw[->](A)--(B) node[pos=0.5, above]{$f^{\mathscr{F}_\Lambda(M, \mu_M)}$};
 				\draw[<-](B)--(D) node[pos=0.5, left]{$\hat{\mu}_{(M,\mu_M)}$};
 				\draw[<-](A)--(C) node[pos=0.5, left]{$\hat{\mu}_{(M,\mu_M)}^{\ar(f)}$};
 				\draw[->](C)--(D) node[pos=0.5, above]{$f^{\mathscr{F}_\Lambda\qty(\term{\Lambda}(M, \mu_M))}$};
	 				
 				\node(A) at(8.95,2.5) {$(1, c_{\bot})$};
 				\node(B) at(7.2,1) {$\term{\Lambda}\qty(\term{\Lambda}(M, \mu_M))$};
 				\node(C) at(10.7,1) {$\term{\Lambda}(M, \mu_M)$};
 				\draw[->](A)--(B) node[pos=0.5, left]{$c^{\mathscr{F}_{\Lambda}\qty(\term{\Lambda}\qty(M, \mu_M))}$};
 				\draw[->](B)--(C) node[pos=0.5, above]{$\hat{\mu}_{(M,\mu_M)}$};
 				\draw[->](A)--(C) node[pos=0.5, right, xshift=0.2cm]{$c^{\mathscr{F}_{\Lambda}(M, \mu_M)}$};
 					 				
	 			\node(A) at(8.95,0) {$(1, c_{\bot})$};
	 			\node(B) at(7.2,-1.5) {$\term{\Lambda}(M, \mu_M)$};
	 			\node(C) at(10.7,-1.5) {$\term{\Lambda}(M, \mu_M)$};
	 			\draw[->](A)--(B) node[pos=0.5, left]{$c^{\mathscr{F}_{\Lambda}(M, \mu_M)}$};
	 			\draw[->](B)--(C) node[pos=0.5, below]{$\term{\Lambda}(g)$};
	 			\draw[->](A)--(C) node[pos=0.5, right, xshift=0.2cm]{$c^{\mathscr{F}_{\Lambda}(M, \mu_M)}$};
	 		\end{tikzpicture}
	 	\end{center}	
	\end{claim}
	\begin{proof}
		$\term{\Lambda}(g)=\mathscr{U}_\Lambda(\mathscr{F}_\Lambda(g))$ and  $\hat{\mu}_{(A,\mu_A)} = \mathscr{U}_\Lambda(\epsilon_{\mathscr{F}_\Lambda(A,\mu_A)})$  where $\epsilon:\mathscr{F}_\Lambda\circ \mathscr{U}_\Lambda\rightarrow \id{\mode{\Lambda}}$ is the counit (\cite{borceux1994handbook} proposition $4.2.1$ or \cite{mac2013categories}, chapter VI). Now the thesis follows since both $\mathscr{F}_\Lambda(g)$ and $\epsilon_{\mathscr{F}_\Lambda(A,\mu_A)} $ are arrows of $\alg{\Sigma}$.
	\end{proof}
	We can now come back to the proof of our lemma: $\xi\circ c^{\mathscr{F}_\Lambda(M,\mu_M)}=c^{\mathscr{H}(\xi)}$ while the other condition is equivalent to commutativity of the outer rectangle in the diagram:
	\begin{center}
	 		\begin{tikzpicture}[scale=0.85]
	 		\node(A) at (0,0){$\term{\Lambda}\qty(\term{\Lambda}(M,\mu_M))^{\ar(f)}$};
	 		\node(B) at (0,2){$\term{\Lambda}(M,\mu_M)^{\ar(f)}$};
	 		\node(C) at (0,-2.5){$\term{\Lambda}(M,\mu_M)^{\ar(f)}$};
	 		\node(D) at (-4,0){$\term{\Lambda}(M,\mu_M)^{\ar(f)}$};
	 		\node(E) at (-4,2){$(M,\mu_M)^{\ar(f)}$};
	 		\node(F) at (5.5,-2.5){$\term{\Lambda}(M,\mu_M)$};
	 		\node(G) at (9.5,-2.5){$(M,\mu_M)$};
	 		\node(H) at (5.5,0){$\term{\Lambda}\qty(\term{\Lambda}(M,\mu_M))$};
	 		\node(I) at (9.5,0){$\term{\Lambda}(M,\mu_M)$};
	 		\draw[->](E)--(B)node[pos=0.5, above]{$\eta_{(M, \mu_M)}^{\ar(f)}$};
	 		\draw[->](D)--(A)node[pos=0.5, below]{$\eta_{\term{\Lambda}(M, \mu_M)}^{\ar(f)}$};
	 		\draw[->](D)--(E)node[pos=0.5, left]{$\xi^{\ar(f)}$};
	 		\draw[->](A)--(B)node[pos=0.5, right]{$\term{\Lambda}(\xi)^{\ar(f)}$};
	 		\draw[->](A)--(H)node[pos=0.5, below]{$f^{\mathscr{F}_\Lambda\qty(\mathscr{F}_\Lambda\qty(M,\mu_M))}$};
	 		\draw[->](A)--(C)node[pos=0.5, right]{$\hat{\mu}_{(M,\mu_M)}^{\ar(f)}$};
	 		\draw[->](C)--(F)node[pos=0.5, below]{$f^{\mathscr{F}_\Lambda\qty(M,\mu_M)}$};
	 		\draw[->](H)--(F)node[pos=0.5, left]{$\hat{\mu}_{(M,\mu_M)}$};
	 		\draw[->](F)--(G)node[pos=0.5, below]{$\xi$};
	 		\draw[->](I)--(G)node[pos=0.5, right]{$\xi$};
	 		\draw[->](H)--(I)node[pos=0.5, above]{$\term{\Lambda}(\xi)$};
	 		\draw[->](B)..controls(6.5,2) and (9.5,2)..(I)node[pos=0.2, above, yshift=0cm]{$f^{\mathscr{F}_\Lambda\qty(M,\mu_M)}$};
	 		\draw[->](D)..controls (-4,-2.5)..(C)node[pos=0.2, left, yshift=0cm]{$\id{\term{\Lambda}(M, \mu_M)}$};
	 		\node(a)at(-2,1){$1$};
	 		\draw (a) circle [radius=0.3cm];
	 		\node(b)at(-2,-1.25){$3$};
	 		\draw (b) circle [radius=0.3cm];
	 		\node(c)at(2.75,-1.25){$4$};
	 		\draw (c) circle [radius=0.3cm];
	 		\node(d)at(7.5,-1.25){$5$};
	 		\draw (d) circle [radius=0.3cm];
	 		\node(d)at(4.75,1){$2$};
	 		\draw (d) circle [radius=0.3cm];
	 		\end{tikzpicture}
	\end{center}
	but $\circled{1}$ commutes by naturality of $\eta$, $\circled{2}$ and $\circled{4}$ by \cref{expl}, $\circled{3}$ since $\term{\Lambda}$ is a monad and $\circled{5}$ from the fact that $\xi$ is an Eilenberg-Moore algebra.
\end{proof}

In general $\mathscr{H}(\xi)$ is not a model of $\Lambda$, but we can restrict to a class of theories such this holds. As in \cite{bacci2020quantitative, mardare2017axiomatizability}, we consider theories whose sequents' premises contain only variables.
\begin{definition}
	A theory $\Lambda\in \thr{L}$ is \emph{basic}\footnote{In \cite{bacci2018algebraic} such theories are called \emph{simple}.} if, for any sequent $\Gamma \vdash \phi$ in it, all the formulae in $\Gamma$ contain only variables.
\end{definition}

\begin{example}
	Fuzzy groups, fuzzy normal groups, fuzzy semigroups and left, right, bilateral ideals (\cref{ex:fuzzyg1,ex:semig1}) are all examples of basic theories.
\end{example}

\begin{lemma}$ \mathscr{H}(\xi)$ is a model of $\Lambda$ for any basic theory
$\Lambda\in \thr{L}$ and Eilenberg-Moore algebra $\xi:\term{\Lambda}(M, \mu_M)\rightarrow (M, \mu_M)$.
\end{lemma}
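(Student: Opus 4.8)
The plan is to leverage the preceding lemma, \cref{hom}, which exhibits $\xi$ as a morphism of $\Sigma$-algebras $\mathscr{F}_\Lambda(M, \mu_M) \to \mathscr{H}(\xi)$, together with the Eilenberg--Moore unit law $\xi \circ \eta_{(M, \mu_M)} = \id{(M, \mu_M)}$. First I would record the identity relating evaluations in the two algebras: point $2$ of \cref{ind} applied to the morphism $\xi$ gives $\xi \circ (-)^{\mathscr{F}_\Lambda(M, \mu_M), \kappa} = (-)^{\mathscr{H}(\xi), \xi \circ \kappa}$ for every $\kappa : X \to \term{\Lambda}(M, \mu_M)$, and specialising to $\kappa = \eta_{(M, \mu_M)} \circ \iota$ for an assignment $\iota : X \to M$ and using the unit law yields
\[t^{\mathscr{H}(\xi), \iota} = \xi\qty(t^{\mathscr{F}_\Lambda(M, \mu_M), \eta_{(M, \mu_M)} \circ \iota})\]
for every term $t$.

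Next I would fix a sequent $\Gamma \vdash \phi$ in $\Lambda$ and an assignment $\iota : X \to M$ with $\mathscr{H}(\xi) \vDash_\iota \Gamma$, and transport this hypothesis to the free model along $\eta_{(M, \mu_M)} \circ \iota$. This is where basicness is used: since every formula of $\Gamma$ mentions only variables, I can verify $\mathscr{F}_\Lambda(M, \mu_M) \vDash_{\eta_{(M, \mu_M)} \circ \iota} \Gamma$ directly. For a premise $x \equiv y$, the hypothesis gives $\iota(x) = \iota(y)$, hence $\eta_{(M, \mu_M)}(\iota(x)) = \eta_{(M, \mu_M)}(\iota(y))$; for a premise $\ex{l}{x}$, the hypothesis $l \leq \mu_M(\iota(x))$ and the fact that $\eta_{(M, \mu_M)}$ is an arrow of $\fuz{H}$ give $l \leq \mu_M(\iota(x)) \leq \mu_{\term{\Lambda}(M, \mu_M)}(\eta_{(M, \mu_M)}(\iota(x)))$. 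Since $\mathscr{F}_\Lambda(M, \mu_M)$ lies in $\mode{\Lambda}$ by \cref{free}, it satisfies the sequent, so $\mathscr{F}_\Lambda(M, \mu_M) \vDash_{\eta_{(M, \mu_M)} \circ \iota} \phi$; then point $3$ of \cref{ind} applied to $\xi$ gives $\mathscr{H}(\xi) \vDash_{\xi \circ \eta_{(M, \mu_M)} \circ \iota} \phi$, and the unit law collapses the assignment to $\iota$, which is the required conclusion.

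The one delicate step is the backward transport of the premises, and this is exactly what forces the basicness hypothesis. If $\Gamma$ contained an equality $s \equiv t$ between compound terms, then from $s^{\mathscr{H}(\xi), \iota} = t^{\mathscr{H}(\xi), \iota}$ the identity above would only yield $\xi\qty(s^{\mathscr{F}_\Lambda(M, \mu_M), \eta_{(M, \mu_M)} \circ \iota}) = \xi\qty(t^{\mathscr{F}_\Lambda(M, \mu_M), \eta_{(M, \mu_M)} \circ \iota})$, which need not lift to an equality in the free model because $\xi$ is in general not injective; and if $\Gamma$ contained a membership $\ex{l}{t}$ with $t$ compound, then $\xi$, being a fuzzy-set morphism, would move the membership degree in the wrong direction to say anything about the free model. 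Restricting the premises to variables sidesteps both obstructions, since for a single variable the transport passes directly through $\eta_{(M, \mu_M)}$, which is at once a genuine function (handling equalities) and an arrow of $\fuz{H}$ (handling memberships).
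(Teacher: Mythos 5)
Your proof is correct and follows essentially the same route as the paper's: transport the variable-only premises from $\mathscr{H}(\xi)$ to the free model along $\eta_{(M,\mu_M)}$ (this is where basicness is used), invoke that $\mathscr{F}_\Lambda(M,\mu_M)$ is a model of $\Lambda$, and push the conclusion back through $\xi$ via the unit law $\xi\circ\eta_{(M,\mu_M)}=\id{(M,\mu_M)}$. The only cosmetic difference is that you conclude by citing point 3 of \cref{ind} (justified since \cref{hom} makes $\xi$ a morphism of $\Sigma$-algebras), whereas the paper spells out the same case analysis on $\phi$ explicitly.
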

\begin{proof}
		We can start by observing that if $\Gamma \vdash \phi$ is in $\Lambda$ and $\iota:X\rightarrow M$ is an assignment such that $	\mathscr{H}(\xi)\vDash_{\iota} \Gamma$ then $\mathscr{F}_{\Lambda}(M, \mu_M)\vDash_{\eta_{(M,\mu_M)}\circ \iota} \Gamma$.
	Indeed, for every $\psi$  in $\Gamma$, we have two cases:
	\begin{itemize}
		\item $\psi$ is $x\equiv y$. Since $\iota(x)=\iota(y)$ we can easily conclude.
		\item $\psi$ is $\ex{l}{x}$. The thesis follows at once since the membership degree of $\eta_{(M,\mu_M)}(\iota(x))$ in $\term{\Lambda}(M, \mu_M)$ is greater than $\mu_{M}(\iota(x))$.
	\end{itemize}

Therefore, we know  that $\mathscr{F}_{\Lambda}(M, \mu_M)\vDash_{\eta_{(M,\mu_M)}\circ \iota} \phi$. Let us split again the two cases.
	\begin{itemize}
 		\item $\phi$ is $t\equiv s$. In this case, $
 		t^{\mathscr{F}_{\Lambda}(M, \mu_M), \eta_{(M,\mu_M)}\circ \iota}=s^{\mathscr{F}_{\Lambda}(M, \mu_M), \eta_{(M,\mu_M)}\circ \iota}$, point $2$ of \cref{ind} and the fact that $\xi$ is an Eilenberg-Moore algebra thus imply:
 		\begin{align*}
	 		t^{\mathscr{H}(\xi),  \iota}&=t^{\mathscr{H}(\xi), \xi \circ  \eta_{(M,\mu_M)}\circ \iota}=\xi\qty(t^{\mathscr{F}_{\Lambda}(M, \mu_M), \eta_{(M,\mu_M)}\circ \iota})\\&=\xi\qty(s^{\mathscr{F}_{\Lambda}(M, \mu_M), \eta_{(M,\mu_M)}\circ \iota})=s^{\mathscr{H}(\xi), \xi \circ  \eta_{(M,\mu_M)}\circ \iota}=s^{\mathscr{H}(\xi), \iota}
 		\end{align*}
 		\item $\phi$ is $\ex{l}{t}$. This means that $l\leq \mu_{\Lambda}\qty(t^{\mathscr{F}_{\Lambda}(M, \mu_M), \eta_{(M,\mu_M)}\circ \iota})$, hence, using again \cref{hom} and \cref{ind}:
 		\begin{align*}
	 		l&\leq \mu_{\Lambda}\qty(
	 		t^{\mathscr{F}_{\Lambda}(M, \mu_M), \eta_{(M,\mu_M)}\circ \iota})\leq \mu_{M}\qty(\xi\qty(
	 		t^{\mathscr{F}_{\Lambda}(M, \mu_M), \eta_{(M,\mu_M)}\circ \iota}))\\&=
	 		\mu_M\qty(t^{\mathscr{H}(\xi), \xi \circ  \eta_{(M,\mu_M)}\circ \iota})=\mu_M\qty(t^{\mathscr{H}(\xi),  \iota})
 		\end{align*}
 		and we can conclude.\qedhere 
 	\end{itemize} 	 
 \end{proof}
 
 \begin{theorem} For any basic theory $\Lambda\in \thr{L}$, the functor $\mathscr{K}:\mode{\Lambda}\rightarrow \eim{\Lambda}$ has an inverse $\mathscr{H}:\eim{\Lambda}\rightarrow \mode{\Lambda}$ sending $\xi:\term{\Lambda}(M, \mu_M)\rightarrow (M, \mu_M)$ to $\mathscr{H}(\xi)$. 
	 \end{theorem}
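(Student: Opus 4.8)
The statement to prove is that $\mathscr{K}$ and $\mathscr{H}$ are mutually inverse functors. The two preceding lemmas already supply, for each Eilenberg--Moore algebra $\xi\colon\term{\Lambda}(M,\mu_M)\to(M,\mu_M)$, that $\mathscr{H}(\xi)$ is a genuine object of $\mode{\Lambda}$ (this is where basicity enters) and, via \cref{hom}, that $\xi$ is itself a morphism $\mathscr{F}_\Lambda(M,\mu_M)\to\mathscr{H}(\xi)$ of $\alg{\Sigma}$. So the plan is to first upgrade $\mathscr{H}$ to a functor by defining it on arrows, and then to check that the two composites are identity functors, the bulk of which is bookkeeping with the unit--counit laws.

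For the action on arrows, take a morphism $h$ of Eilenberg--Moore algebras from $\xi$ to $\zeta\colon\term{\Lambda}(N,\mu_N)\to(N,\mu_N)$, that is, a map $h\colon(M,\mu_M)\to(N,\mu_N)$ with $h\circ\xi=\zeta\circ\term{\Lambda}(h)$, and set $\mathscr{H}(h)\ed h$. I would check it is a morphism $\mathscr{H}(\xi)\to\mathscr{H}(\zeta)$ of $\alg{\Sigma}$: for a constant $c$ one has $h\circ c^{\mathscr{H}(\xi)}=h\circ\xi\circ c^{\mathscr{F}_\Lambda(M,\mu_M)}=\zeta\circ\term{\Lambda}(h)\circ c^{\mathscr{F}_\Lambda(M,\mu_M)}=\zeta\circ c^{\mathscr{F}_\Lambda(N,\mu_N)}=c^{\mathscr{H}(\zeta)}$, using the Eilenberg--Moore compatibility and the constant triangle of \cref{expl}; for an operation $f$ one expands $h\circ f^{\mathscr{H}(\xi)}$, replaces $h\circ\xi$ by $\zeta\circ\term{\Lambda}(h)$, moves $\term{\Lambda}(h)$ past $f^{\mathscr{F}_\Lambda(-)}$ by the naturality square of \cref{expl} and past $\eta$ by naturality of the unit, landing on $f^{\mathscr{H}(\zeta)}\circ h^{\ar(f)}$. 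Functoriality is then immediate, since $\mathscr{H}$ is the identity on underlying fuzzy sets and underlying functions.

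Next I would verify $\mathscr{K}\circ\mathscr{H}=\id{\eim{\Lambda}}$. For an algebra $\xi$ the object $\mathscr{K}(\mathscr{H}(\xi))$ has carrier $(M,\mu_M)$ and structure map $\mathscr{U}_\Lambda(\epsilon_{\mathscr{H}(\xi)})$, so everything reduces to proving $\epsilon_{\mathscr{H}(\xi)}=\xi$. Here I use that the counit $\epsilon_{\mathscr{H}(\xi)}$ is characterised as the unique morphism $k\colon\mathscr{F}_\Lambda(M,\mu_M)\to\mathscr{H}(\xi)$ of $\mode{\Lambda}$ whose adjoint transpose is the identity, i.e.\ with $\mathscr{U}_\Lambda(k)\circ\eta_{(M,\mu_M)}=\id{(M,\mu_M)}$. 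By \cref{hom} together with the preceding lemma, $\xi$ is such a morphism, and the Eilenberg--Moore unit law gives precisely $\xi\circ\eta_{(M,\mu_M)}=\id{(M,\mu_M)}$; uniqueness then forces $\epsilon_{\mathscr{H}(\xi)}=\xi$, whence $\mathscr{K}(\mathscr{H}(\xi))=\xi$ on the nose.

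Finally, for $\mathscr{H}\circ\mathscr{K}=\id{\mode{\Lambda}}$, fix a model $\mathcal{A}$ and write $\xi_{\mathcal{A}}\ed\mathscr{U}_\Lambda(\epsilon_{\mathcal{A}})$ for the structure map of $\mathscr{K}(\mathcal{A})$. Since $\epsilon_{\mathcal{A}}$ is a morphism of $\alg{\Sigma}$ it preserves the operations, so $\xi_{\mathcal{A}}\circ c^{\mathscr{F}_\Lambda(A,\mu_A)}=c^{\mathcal{A}}$ and $\xi_{\mathcal{A}}\circ f^{\mathscr{F}_\Lambda(A,\mu_A)}=f^{\mathcal{A}}\circ\xi_{\mathcal{A}}^{\ar(f)}$; substituting these into the definition of $\mathscr{H}(\xi_{\mathcal{A}})$ and using the triangle identity $\xi_{\mathcal{A}}\circ\eta_{(A,\mu_A)}=\id{(A,\mu_A)}$ yields $c^{\mathscr{H}(\xi_{\mathcal{A}})}=c^{\mathcal{A}}$ and $f^{\mathscr{H}(\xi_{\mathcal{A}})}=f^{\mathcal{A}}$, so $\mathscr{H}(\xi_{\mathcal{A}})=\mathcal{A}$. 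As both composites act as the identity on carriers and on underlying maps, they are the identity functors. I expect the only genuinely delicate point to be the identification $\epsilon_{\mathscr{H}(\xi)}=\xi$ in the third step: it is what pins the Eilenberg--Moore structure to the algebra structure, and it is exactly the place where membership in $\mode{\Lambda}$ (and hence basicity) is indispensable.
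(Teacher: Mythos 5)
Your proposal is correct and follows essentially the same route as the paper's proof: you establish functoriality of $\mathscr{H}$ on arrows via the Eilenberg--Moore compatibility, the naturality squares of \cref{expl} and naturality of $\eta$; you obtain $\mathscr{K}\circ\mathscr{H}=\id{\eim{\Lambda}}$ by identifying $\xi$ with $\mathscr{U}_\Lambda(\epsilon_{\mathscr{H}(\xi)})$ through the uniqueness clause of the adjunction (exactly the paper's argument via \cref{hom}); and you get $\mathscr{H}\circ\mathscr{K}=\id{\mode{\Lambda}}$ from $\epsilon_{\mathcal{A}}$ being a $\Sigma$-algebra morphism together with the triangle identity, which is precisely what the paper's last two diagrams encode. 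The only cosmetic difference is that you argue equationally where the paper chases labelled diagrams.
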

	 \begin{proof}
	 	Let $\xi:\term{\Lambda}(M, \mu_M)\rightarrow (M, \mu_M)$, $\xi':\term{\Lambda}(N, \mu_N)\rightarrow (N, \mu_N)$ be two Eilenberg-Moore algebras and $g:(M, \mu_M)\rightarrow (N, \mu_N)$ an arrow between them, we claim that $g$ itself is a morphism of $\mode{\Lambda}$. Let $f\in O$ and $c\in C$, we have diagrams:
	 	\begin{center}
	 		\begin{tikzpicture}
	 		\node(A) at(3.5,2) {$(M, \mu_M	)$};
	 		\node(B) at(7.5,2) {$(N, \mu_N)$};
	 		\node(D) at(7.5,3.5) {$\term{\Lambda}(N, \mu_N)$};
	 		\node(C) at(3.5,3.5) {$\term{\Lambda}(M, \mu_M)$};
	 		\draw[->](A)--(B) node[pos=0.5, below]{$g$};
	 		\draw[<-](B)--(D) node[pos=0.5, right]{$\xi'$};
	 		\draw[<-](A)--(C) node[pos=0.5, left]{$\xi$};
	 		\draw[->](C)--(D) node[pos=0.5, below]{$\term{\Lambda}(g)$};
	 		\node(d)at(5.5,2.65){$2$};
	 		\draw (d) circle [radius=0.2cm];
	 		\node(e)at(5.5,4.2){$1$};
	 		\draw (e) circle [radius=0.2cm];
	 		\node(E) at(5.5,5) {$(1, c_{\bot})$};
	 		\draw[->](E)--(C) node[pos=0.5, left]{$c^{\mathscr{F}_{\Lambda}(M, \mu_M)}$};
	 		\draw[->](E)--(D) node[pos=0.5, right, xshift=0.2cm]{$c^{\mathscr{F}_{\Lambda}(N, \mu_N)}$};
	 		
	 		\node(A) at(0,1){$(M,\mu_M)^{\ar(f)}$};
	 		\node(B) at(0,-0.5){$(N,\mu_N)^{\ar(f)}$};
	 		\node(C) at(3.5,1){$\term{\Lambda}(M,\mu_M)^{\ar(f)}$};
	 		\node(D) at(3.5,-0.5){$\term{\Lambda}(N,\mu_N)^{\ar(f)}$};
	 		\node(E) at(7.5,1){$\term{\Lambda}(M,\mu_M)$};
	 		\node(F) at(7.5,-0.5){$\term{\Lambda}(N,\mu_N)$};
	 		\node(G) at(11,1){$(M,\mu_M)$};
	 		\node(H) at(11,-0.5){$(N,\mu_N)$};
	 		\node(a)at(1.75,0.25){$3$};
	 		\draw (a) circle [radius=0.2cm];
	 		\node(b)at(5.5,0.25){$4$};
	 		\draw (b) circle [radius=0.2cm];
	 		\node(c)at(9.25,0.25){$5$};
	 		\draw (c) circle [radius=0.2cm];
	 		\draw[->](A)--(B)node[pos=0.5, left]{$g^{\ar(f)}$};
	 		\draw[->](C)--(D)node[pos=0.5, right]{$\term{\Lambda}(g)^{\ar(f)}$};
	 		\draw[->](E)--(F)node[pos=0.5, left]{$\term{\Lambda}(g)$};
	 		\draw[->](G)--(H)node[pos=0.5, right]{$g$};
	 		\draw[->](A)--(C)node[pos=0.5, above]{$\eta_{(M, \mu_M)}^{\ar(f)}$};
	 		\draw[->](B)--(D)node[pos=0.5, below]{$\eta_{(N, \mu_N)}^{\ar(f)}$};
	 		
	 		\draw[->](C)--(E)node[pos=0.5, above]{$f^{\mathscr{F}_\Lambda(M,\mu_M)}$};
	 		\draw[->](D)--(F)node[pos=0.5, below]{$f^{\mathscr{F}_\Lambda(N,\mu_N)}$};
	 		\draw[->](E)--(G)node[pos=0.5, above]{$\xi$};
	 		\draw[->](F)--(H)node[pos=0.5, below]{$\xi'$};
	 		\end{tikzpicture}
	 	\end{center}
	 	Commutativity of $\circled{1}$ and $\circled{4}$ follows from \cref{expl}, that of $\circled{3}$ from naturality of $\eta$, $\circled{2}$ and $\circled{5}$ from the fact that $g$ is an arrow of $\eim{\Lambda}$. So $\mathscr{H}$ is a functor. Since both $\mathscr{H}$ and $\mathscr{K}$ are the identity on arrows, it is enough to show that they are the inverse of each other on objects.
	 	Let $\xi:\term{\Lambda}(M,\mu_M)\rightarrow (M,\mu_M)$ be in $\eim{\Lambda}$,
	 	then \cref{hom} implies that $\xi$ is a morphism of $\mode{\Lambda}$ such that $\mathscr{U}_\Lambda(\xi)\circ \eta_{(M,\mu_M)}=\id{\qty(M,\mu_M)}$, but there is only one such morphism, namely the component of the counit in $\mathscr{H}(\xi)$, so
	 	$\mathscr{U}_\Lambda\qty(\epsilon_{\mathscr{H}(\xi)})=\xi$ and $\mathscr{K}\circ \mathscr{H}=\id{\eim{\Lambda}}$. On the other hand, let $\mathcal{A}=\qty(\qty(A,\mu_A), \Sigma^{\mathcal{A}})\in \mode{\Lambda}$, and consider, for any $f\in O$ and $c\in C$, the diagrams
	 	\begin{center}
	 		\begin{tikzpicture}
	 		\node(A)at (0,0) {$\term{\Lambda}(A,\mu_A)^{\ar(f)}$};
	 		\node(B)at (0,-1.5) {$(A,\mu_A)^{\ar(f)}$};
	 		\node(C)at (4,0) {$\term{\Lambda}(A,\mu_A)$};
	 		\node(D)at (4,-1.5) {$(A,\mu_A)$};
	 		\node(E)at(-4,0){$(A,\mu_A)^{\ar(f)}$};
	 		\draw[->](E)--(A)node[pos=0.5, above]{$\eta_{(A, \mu_A)}^{\ar(f)}$};
	 		\draw[->](E)..controls(-4,-1.5)..(B)node[pos=0.5, left]{$\id{{(A, \mu_A)}^{\ar(f)}}$};
	 		\draw[->](A)--(B)node[pos=0.5, right]{$\mathscr{U}_\Lambda(\epsilon_{\mathcal{A}})^{\ar(f)}$};
	 		\draw[->](C)--(D)node[pos=0.5, right]{$\mathscr{U}_\Lambda(\epsilon_{\mathcal{A}})$};
	 		\draw[->](A)--(C)node[pos=0.5, above]{$f^{\mathcal{A}}$};
	 		\draw[->](B)--(D)node[pos=0.5, below]{$f^{\mathcal{A}}$};
	 		\node(a)at(-2,-0.75){$2$};
	 		\draw (a) circle [radius=0.2cm];
	 		\node(b)at(2.5,-0.75){$3$};
	 		\draw (b) circle [radius=0.2cm];
	 		
	 		\node(A)at (0,2.5) {$(1,c_{\bot})$};
	 		\node(C)at (-2,1) {$\term{\Lambda}(A,\mu_A)$};
	 		\node(D)at (2,1) {$(A,\mu_A)$};
	 		\draw[->](C)--(D)node[pos=0.5, below]{$\mathscr{U}_\Lambda(\epsilon_{\mathcal{A}})$};
	 		\draw[->](A)--(C)node[pos=0.5, left]{$c^{{\mathscr{F}_\Lambda(A, \mu_A)}}$};
	 		\draw[->](A)--(D)node[pos=0.5, right, xshift=0.1cm]{$c^{\mathcal{A}}$};
	 		\node(b)at(0,1.75){$1$};
	 		\draw (b) circle [radius=0.2cm];
	 		\end{tikzpicture}
	 	\end{center}
	 	Commutativity of $\circled{1}$ and $\circled{3}$ follows since each component of $\epsilon$ is an arrow of $\mode{\Lambda}$, that of $\circled{2}$ since $\epsilon$ is the counit. So we can deduce now that $
	 	f^{\mathscr{H}(\mathscr{U}_\Lambda(\epsilon_{\mathcal{A}}))}=f^{\mathcal{A}}$ and $ c^{\mathscr{H}(\mathscr{U}_\Lambda(\epsilon_{\mathcal{A}}))}=c^{\mathcal{A}}$ 
	 	from which we can deduce that $\mathscr{H}\circ \mathscr{K}=\id{\mode{\Lambda}}$. 
 \end{proof}
 \begin{corollary}
 	For any basic theory $\Lambda\in \thr{L}$, $\eim{\Lambda}$ and $\mode{\Lambda}$ are isomorphic, and thus equivalent, categories.
 \end{corollary}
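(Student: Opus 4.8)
The plan is to read this off directly from the theorem immediately preceding it, which is the substantive result. First I would recall what that theorem delivers: a functor $\mathscr{H}:\eim{\Lambda}\rightarrow \mode{\Lambda}$ together with the verified equations $\mathscr{H}\circ\mathscr{K}=\id{\mode{\Lambda}}$ and $\mathscr{K}\circ\mathscr{H}=\id{\eim{\Lambda}}$, where $\mathscr{K}:\mode{\Lambda}\rightarrow \eim{\Lambda}$ is the comparison functor. These two equalities are precisely the data exhibiting $\mathscr{K}$ and $\mathscr{H}$ as mutually inverse functors, which is by definition exactly what it means for $\mode{\Lambda}$ and $\eim{\Lambda}$ to be isomorphic categories. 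So the first assertion of the corollary requires nothing beyond citing the theorem.

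For the second assertion, I would then observe that an isomorphism of categories is in particular an equivalence of categories. Indeed, taking the identity natural transformations $\id{\id{\mode{\Lambda}}}$ and $\id{\id{\eim{\Lambda}}}$ on the two composites $\mathscr{H}\circ\mathscr{K}$ and $\mathscr{K}\circ\mathscr{H}$ gives (trivially invertible) natural isomorphisms witnessing the equivalence, so no further construction is needed.

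The only point deserving emphasis is that all the genuine work — checking that $\mathscr{H}(\xi)$ lands in $\mode{\Lambda}$ (which is where the hypothesis that $\Lambda$ is \emph{basic} is used, via the preceding lemma), that $\mathscr{H}$ is functorial on arrows, and that the two composites are strictly the identity functors — was already carried out in the theorem. Consequently I do not expect any real obstacle here: the corollary is an immediate formal consequence, and the proof amounts to little more than the remark that mutually inverse functors yield an isomorphism of categories, and that isomorphism is a (strong) special case of equivalence.
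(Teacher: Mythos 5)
Your proposal is correct and matches the paper's approach exactly: the corollary is stated without a separate proof precisely because the preceding theorem already establishes that $\mathscr{K}$ and $\mathscr{H}$ are mutually inverse functors, which is the definition of an isomorphism of categories, and an isomorphism is trivially an equivalence. You also correctly locate where the hypothesis that $\Lambda$ is basic enters, namely in the lemma guaranteeing $\mathscr{H}(\xi)\in\mode{\Lambda}$.
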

	 
 \section{Equational axiomatizations}\label{sec:birk}
In this section we prove two results for our calculus analogous to the classic HSP theorem \cite{birkhoff1935structure}, using the results by Milius and Urbat \cite{milius2019equational}.

\subparagraph{The abstract framework} 
Let us start recalling the tools introduced in \cite{milius2019equational}, adapted to our situation. 
In the following we will fix a tuple\footnote{In their work Milius and Urbat additionaly require a full subcategory of $\catname{C}$ and a fixed class of cardinals, but we will not need this level of generality.} $\qty(\catname{C}, \qty(\mathscr{E}, \mathscr{M}), \mathscr{X} )$ where $\catname{C}$ is a category, $\qty(\mathscr{E}, \mathscr{M})$ is a proper factorization system on $\catname{C}$ and  $\mathscr{X}$ is a class of objects of $\catname{C}$.

 \begin{definition}
 	An object $X$ of $\catname{C}$ is \emph{projective with respect to an arrow $f:A\rightarrow B$} if for any $h:X\rightarrow B$ there exists a $k:X\rightarrow A$ such that $f\circ k=h$.
 	We define $	\mathscr{E}_{\mathscr{X}}$ as the class of $e\in\mathscr{E}$ such that for every $X\in \mathscr{X}$, $X$ is projective with respect to $e$. A \emph{$\mathscr{E}_{\mathscr{X}}$-quotient} is just an arrow in $\mathscr{E}_{\mathscr{X}}$.
\end{definition}
In the rest of the section, we assume that  $\qty(\catname{C},\qty(\mathscr{E}, \mathscr{M}), \mathscr{X} )$ satisfies the following requirements:
 	\begin{itemize}
 		\item $\catname{C}$ has all (small) products;
 		\item for any $X\in \mathscr{X}$, the class $
 		\arr{X}{C}$ of all $e\in \mathscr{E}$ with domain $X$	is essentially small, i.e. there is a set $\mathcal{J}\subset\arr{X}{C}$ such that for any $e:X\rightarrow C\in \arr{X}{C}$ there exists $e':X\rightarrow D \in \mathcal{J}$ and an isomorphism $\phi$ such that $\phi \circ e=e'$;

 		\item for every object $C$ of $\catname{C}$ there exists $e:X\rightarrow C$ in $\mathscr{E}_{\mathscr{X}}$ with $X\in \mathscr{X}$.
	\end{itemize}

 \begin{definition}\label{equat} 
 	An \emph{$\mathscr{X}$-equation} is an arrow $e\in \arr{X}{C}$ with $X\in \mathscr{X}$.
 	We say that an object $A$ of $\catname{C}$ \emph{satisfies}  $e:X\rightarrow C$, and we write $A\vDash_{\mathscr{X}}e$, if for every $h:X\rightarrow A$ there exists  $q:C\rightarrow A$ such that $q\circ e=h$.  Given a class $\mathbb{E}$ of $\mathscr{X}$-equations, we define $\mathcal{V}(\mathbb{E})$ as the full subcategory of $\catname{C}$ given by objects that satisfy $e$ for every $e\in \mathbb{E}$. A full subcategory $\catname{V}$ is $\mathscr{X}$-\emph{equationally presentable} if there exists $\mathbb{E}$ such that $\catname{V}=\mathcal{V}(\mathbb{E})$.
\end{definition}
\begin{remark}\label{rmk}
 	The definition of equation in \cite[Def.~3.3]{milius2019equational} is given in terms of suitable subclasses of $\arr{X}{C}$. However in our setting Milius and Urbat's definition reduces to ours (cfr.~\cite[Remark 3.4]{milius2019equational}).
 \end{remark}

 \begin{theorem}[{\cite[Th.~3.15, 3.16]{milius2019equational}}] \label{milius} 
 	A full subcategory $\catname{V}$  of $\catname{C}$ is $\mathscr{X}$-equationally presentable if and only if it is closed under $\mathscr{E}_\mathscr{X}$-quotients, $\mathscr{M}$-subobjects and (small) products.
 \end{theorem}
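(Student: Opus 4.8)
The plan is to prove this Birkhoff-type characterization as a biconditional, reading the three closure conditions as the abstract analogues of homomorphic images ($\mathscr{E}_{\mathscr{X}}$-quotients), subalgebras ($\mathscr{M}$-subobjects) and products.

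\textbf{Necessity.} First I would show that every $\mathcal{V}(\mathbb{E})$ is closed under the three operations; it suffices to check that a single $\mathscr{X}$-equation $e:X\rightarrow C$ is preserved. For products, given $A_i\vDash_{\mathscr{X}} e$ for all $i$ and $h:X\rightarrow \prod_i A_i$, each $\pi_i\circ h$ factors as $q_i\circ e$, and $\langle q_i\rangle_i$ is the required factorization of $h$, so $\prod_i A_i\vDash_{\mathscr{X}} e$. For an $\mathscr{M}$-subobject $m:A\rightarrow B$ with $B\vDash_{\mathscr{X}} e$ and any $h:X\rightarrow A$, the arrow $m\circ h$ factors as $q\circ e$, producing a commuting square whose diagonal fill-in (the lifting of $m\in\mathscr{M}$ against $e\in\mathscr{E}$ guaranteed by the proper factorization system) gives $k:C\rightarrow A$ with $k\circ e=h$, so $A\vDash_{\mathscr{X}} e$. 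For an $\mathscr{E}_{\mathscr{X}}$-quotient $q:A\rightarrow B$ with $A\vDash_{\mathscr{X}} e$ and any $h:X\rightarrow B$, projectivity of $X\in\mathscr{X}$ against $q$ lifts $h$ to $h':X\rightarrow A$, which factors through $e$; postcomposing with $q$ yields the factorization of $h$, so $B\vDash_{\mathscr{X}} e$.

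\textbf{Sufficiency.} Conversely, assume $\catname{V}$ is closed under the three operations, and let $\mathbb{E}$ be the class of all $\mathscr{X}$-equations valid in $\catname{V}$, i.e.\ every $e\in\arr{X}{C}$ with $X\in\mathscr{X}$ such that $A\vDash_{\mathscr{X}} e$ for all $A\in\catname{V}$. Then $\catname{V}\subseteq\mathcal{V}(\mathbb{E})$ is immediate, and the content lies in the reverse inclusion. Given $A\in\mathcal{V}(\mathbb{E})$, fix by hypothesis a cover $e_A:X\rightarrow A$ in $\mathscr{E}_{\mathscr{X}}$ with $X\in\mathscr{X}$. Using essential smallness of $\arr{X}{C}$, pick a representative set $\{e_i:X\rightarrow E_i\}_{i\in I}$ of those $\mathscr{E}$-morphisms out of $X$ whose codomain lies in $\catname{V}$, form the small product $\prod_i E_i$ (in $\catname{V}$ by closure under products), and factor the induced map $\langle e_i\rangle_i$ as $X\xrightarrow{e}F\xrightarrow{m}\prod_i E_i$ with $e\in\mathscr{E}$ and $m\in\mathscr{M}$; then $F\in\catname{V}$ by closure under $\mathscr{M}$-subobjects. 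The key claim is that this canonical $e:X\rightarrow F$ lies in $\mathbb{E}$: for $B\in\catname{V}$ and $h:X\rightarrow B$, factor $h=m'\circ e'$ with $e'\in\mathscr{E}$, $m'\in\mathscr{M}$; the codomain of $e'$ is an $\mathscr{M}$-subobject of $B\in\catname{V}$, hence in $\catname{V}$, so $e'$ is isomorphic to some $e_i$ and thus factors through $e$ via $\pi_i\circ m$, whence $h$ factors through $e$ and $B\vDash_{\mathscr{X}} e$. Since $e\in\mathbb{E}$ and $A\in\mathcal{V}(\mathbb{E})$, the cover $e_A$ factors as $q\circ e$ for some $q:F\rightarrow A$. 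As $e_A,e\in\mathscr{E}$, the right-cancellation property of $\mathscr{E}$ in a factorization system gives $q\in\mathscr{E}$; and for $Y\in\mathscr{X}$ any $t:Y\rightarrow A$ lifts through $e_A$ to some $s:Y\rightarrow X$, so that $e\circ s$ is a lift of $t$ through $q$, showing $q\in\mathscr{E}_{\mathscr{X}}$. Thus $A$ is an $\mathscr{E}_{\mathscr{X}}$-quotient of $F\in\catname{V}$, so $A\in\catname{V}$, completing $\mathcal{V}(\mathbb{E})\subseteq\catname{V}$.

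\textbf{Main obstacle.} The delicate part is the sufficiency direction, specifically the construction of the ``free $\catname{V}$-object'' $F$ on $X$ and the proof that the resulting $e:X\rightarrow F$ is itself a valid equation. This is where essential smallness of $\arr{X}{C}$ is indispensable---it is what turns $\prod_i E_i$ into a genuine small product to which closure applies---and where the two transfer steps must be handled with care: the right cancellation of $\mathscr{E}$ and the transport of $\mathscr{X}$-projectivity from $e_A$ to $q$. Without both, one would only conclude that $A$ is a plain $\mathscr{E}$-quotient of $F$ rather than an $\mathscr{E}_{\mathscr{X}}$-quotient, which is exactly the form of quotient under which $\catname{V}$ was assumed closed.
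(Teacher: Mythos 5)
Your proof is correct, but there is no internal proof to compare it against: the paper does not prove this theorem at all, it imports it wholesale from Milius and Urbat \cite{milius2019equational} (their Theorems 3.15 and 3.16), and its own work is confined to verifying that its concrete triples satisfy the hypotheses (\cref{facto}, \cref{aaa}) and that single arrows suffice as equations (\cref{rmk}). What you have written is a correct, self-contained reconstruction of the cited result, and it follows the same overall strategy as the Milius--Urbat proof, specialized to the single-arrow formulation of equations adopted in \cref{equat}: soundness is the routine check you give (products via the pairing $\langle q_i\rangle_i$, $\mathscr{M}$-subobjects via the unique diagonal fill-in, $\mathscr{E}_{\mathscr{X}}$-quotients via projectivity of $X$), while completeness hinges on manufacturing, for the $\mathscr{E}_{\mathscr{X}}$-cover $e_A:X\rightarrow A$, the generic quotient $e:X\rightarrow F$ as the $(\mathscr{E},\mathscr{M})$-image of $X\rightarrow\prod_i E_i$, proving that $e$ is valid on all of $\catname{V}$, and transferring the quotient structure to $q:F\rightarrow A$. (In Milius--Urbat an equation is a suitable class of quotients of $X$, and their completeness argument works with the class of all quotients landing in the variety; your use of essential smallness, products and $\mathscr{M}$-images to collapse that class to a single arrow is exactly the reduction the paper records in \cref{rmk}, so the two formulations agree here.) Two steps you use silently deserve a line each: closure of $\catname{V}$ under isomorphisms --- needed when you replace $e'$ by a representative $e_i$ --- follows from closure under $\mathscr{M}$-subobjects, since isomorphisms lie in $\mathscr{M}$; and the right-cancellation property of $\mathscr{E}$ (from $q\circ e\in\mathscr{E}$ and $e\in\mathscr{E}$ infer $q\in\mathscr{E}$) is not among the paper's axioms for a proper factorization system, but does follow from them: factor $q=m'\circ e'$ with $e'\in\mathscr{E}$, $m'\in\mathscr{M}$, lift $m'$ against $q\circ e\in\mathscr{E}$ in the square $m'\circ(e'\circ e)=\id{A}\circ(q\circ e)$ to obtain a section of $m'$, and observe that a split epimorphism in $\mathscr{M}$ is an isomorphism, so $q\in\mathscr{E}$.
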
 
 
 \subparagraph{Application to fuzzy algebras}
 In order to apply the results above to $\alg{\Sigma}$, we need to define the required inputs, i.e., to specify a factorization system and a class of $\Sigma$-algebras.
 \begin{lemma}\label{facto}
 	For any signature $\Sigma$ the classes $\mathscr{E}_\Sigma\ed\qty{e \text{ arrow of } \alg{\Sigma} \mid \mathscr{U}_\Sigma(e) \text{ is epi} }$ and 
 	$\mathscr{M}_\Sigma\ed\qty{m \text{ arrow of } \alg{\Sigma} \mid \mathscr{U}_\Sigma(m) \text{ is strong mono} }$ form a proper factorization system on $\alg{\Sigma}$.
 \end{lemma}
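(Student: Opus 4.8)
The plan is to transport the factorization system of \cref{prod} from $\fuz{H}$ to $\alg{\Sigma}$ along the forgetful functor $\mathscr{U}_\Sigma$, which is faithful since a homomorphism is determined by its underlying map of fuzzy sets. Closure of $\mathscr{E}_\Sigma$ and $\mathscr{M}_\Sigma$ under composition is immediate, because $\mathscr{U}_\Sigma$ is a functor and both the epimorphisms and the strong monomorphisms of $\fuz{H}$ are closed under composition (being the two classes of a factorization system); likewise every isomorphism of $\alg{\Sigma}$ is sent to an isomorphism of $\fuz{H}$, hence to an arrow that is both epi and strong mono. Since $\mathscr{U}_\Sigma$ is faithful it reflects monomorphisms and epimorphisms, so every $e\in\mathscr{E}_\Sigma$ is epi and every $m\in\mathscr{M}_\Sigma$, whose underlying arrow is in particular monic, is mono in $\alg{\Sigma}$; this yields properness.

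For the lifting property, consider a commuting square $m\circ u=v\circ e$ in $\alg{\Sigma}$ with $e\in\mathscr{E}_\Sigma$ and $m\in\mathscr{M}_\Sigma$. Applying $\mathscr{U}_\Sigma$ gives a square in $\fuz{H}$ whose left edge is epi and whose right edge is strong mono, so \cref{prod} supplies a unique diagonal filler $k$ in $\fuz{H}$ with $\mathscr{U}_\Sigma(m)\circ k=\mathscr{U}_\Sigma(v)$ and $k\circ\mathscr{U}_\Sigma(e)=\mathscr{U}_\Sigma(u)$. The crux is to check that $k$ is a homomorphism. Here I use that, by \cref{mono}, $\mathscr{U}_\Sigma(m)$ is injective and hence monic, so it can be cancelled: for $f\in O$ of arity $n=\ar(f)$, since $v$ and $m$ are homomorphisms one computes $\mathscr{U}_\Sigma(m)\circ k\circ f^{B}=\mathscr{U}_\Sigma(v)\circ f^{B}=f^{D}\circ\mathscr{U}_\Sigma(v)^{n}=f^{D}\circ(\mathscr{U}_\Sigma(m)\circ k)^{n}=\mathscr{U}_\Sigma(m)\circ f^{C}\circ k^{n}$, whence $k\circ f^{B}=f^{C}\circ k^{n}$; the analogous, shorter computation $\mathscr{U}_\Sigma(m)\circ k\circ c^{B}=\mathscr{U}_\Sigma(v)\circ c^{B}=c^{D}=\mathscr{U}_\Sigma(m)\circ c^{C}$ handles the constants. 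Faithfulness of $\mathscr{U}_\Sigma$ then upgrades the uniqueness of $k$ in $\fuz{H}$ to uniqueness in $\alg{\Sigma}$.

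It remains to produce factorizations. Given $g:\mathcal{A}\rightarrow\mathcal{B}$, I factor $\mathscr{U}_\Sigma(g)$ in $\fuz{H}$ as in \cref{prod}, namely as the surjection onto the image $\qty(g(A),{\mu_{B}}_{|g(A)})$ followed by the inclusion into $(B,\mu_B)$. The image is closed under the interpreted operations and contains every interpreted constant, precisely because $g$ is a homomorphism: if $b_i=g(a_i)$ then $f^{B}(b_1,\dots,b_n)=g(f^{A}(a_1,\dots,a_n))$, and $c^{B}=g(c^{A})$. Restricting $\Sigma^{\mathcal{B}}$ therefore endows the image with a $\Sigma$-algebra structure for which both factors are homomorphisms, realizing $g$ as $m\circ e$ with $\mathscr{U}_\Sigma(e)$ epi and $\mathscr{U}_\Sigma(m)$ strong mono, i.e.\ $e\in\mathscr{E}_\Sigma$ and $m\in\mathscr{M}_\Sigma$. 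The only genuine work, and the step most prone to error, is verifying that the diagonal filler is a homomorphism; everything else follows formally from faithfulness of $\mathscr{U}_\Sigma$ and from the corresponding facts already established in $\fuz{H}$.
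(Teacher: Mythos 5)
Your proof is correct, and its skeleton matches the paper's: factor a homomorphism through its image equipped with the restricted algebra structure (same computation showing the image is closed under operations and constants), obtain the diagonal filler by applying $\mathscr{U}_\Sigma$ and invoking the factorization system on $\fuz{H}$ from \cref{prod}, and get properness from faithfulness of $\mathscr{U}_\Sigma$. The one step where you genuinely diverge is the verification that the filler $k$ is a homomorphism. The paper argues via the \emph{epi} side: since $\mathscr{U}_\Sigma(e)$ is surjective, every element of the intermediate object is of the form $e(a)$, and the identity $k\circ \mathscr{U}_\Sigma(e)=\mathscr{U}_\Sigma(u)$ (the lower triangle, which holds by the remark following the definition of strong monomorphism) transfers the operations along $e$. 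You instead argue via the \emph{mono} side: post-compose with $\mathscr{U}_\Sigma(m)$, use that $v$ and $m$ are homomorphisms, and cancel $\mathscr{U}_\Sigma(m)$ by its injectivity (\cref{mono}). Both are sound; your cancellation argument is arguably a bit cleaner, since it avoids choosing preimages and relies only on the triangle $\mathscr{U}_\Sigma(m)\circ k=\mathscr{U}_\Sigma(v)$ that the definition of the lifting property guarantees directly, whereas the paper's element chase is the more traditional universal-algebra computation. You also spell out the bookkeeping items (closure under composition and isomorphisms) that the paper leaves implicit under ``properness follows at once.''
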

 \begin{proof}
 	Let $\mathcal{A}=\qty((A, \mu_A), \Sigma^{\mathcal{A}})$ and $\mathcal{B}=\qty((B, \mu_B), \Sigma^{\mathcal{B}})$ two $\Sigma$-algebras with a morphism $g:\mathcal{A}\rightarrow \mathcal{B}$ between them. $\mathcal{U}_\Sigma(g)$ factors as $m\circ e$ where $e:(A, \mu_A)\rightarrow(g(A), {\mu_B}_{|g(A)})$ and $m:(g(A), {\mu_B}_{|g(A)})\rightarrow(B, \mu_B)$ is the usual epi-monomorphism factorization of $f$ on $\fuz{H}$ (cfr.~\cref{prod}). 
 	Notice that  $c^{\mathcal{B}}=g(c^\mathcal{A})\in g(A)$ for all $c\in C$ and $f^{\mathcal{B}}\qty(g(a_1),\dots, g(a_{\ar(g)}))=g\qty(f^{\mathcal{A}}(a_1,\dots,a_{\ar(g)}))\in g(A)$ for every $f\in O$  and $g(a_1),\dots, g(a_{\ar(g)})\in g(A)$ so $\Sigma^{\mathcal{B}}$ restricts to a $\Sigma$-algebra structure on $(g(A), {\mu_B}_{|g(A)})$ and it is clear that with this choice $e$ and $m$ becomes morphisms of $\alg{\Sigma}$. We have now to show the left lifting property. 
 	Given $e\in \mathscr{E}_\Sigma$, $m\in \mathscr{M}_\Sigma$ and $g$ and $h$ such that $m\circ g=h\circ e$ we can apply $\mathscr{U}_\Sigma$ and get a square
	 	\begin{center}
	 		\begin{tikzpicture}
	 		\node(A) at(0,0) {$\mathscr{U}_\Sigma(\mathcal{A})$};
	 		\node(B) at(2.5,0) {$\mathscr{U}_\Sigma(\mathcal{B})$};
	 		\node(D) at(2.5,-1.5) {$\mathscr{U}_\Sigma\qty(\mathcal{B'})$};
	 		\node(C) at(0,-1.5) {$\mathscr{U}_\Sigma\qty(\mathcal{A'})$};
	 		\draw[->](A)--(B) node[pos=0.5, above]{$\mathscr{U}_\Sigma(g)$};
	 		\draw[->](B)--(D) node[pos=0.5, right]{$\mathscr{U}_\Sigma(m)$};
	 		\draw[->](A)--(C) node[pos=0.5, left]{$\mathscr{U}_\Sigma(e)$};
	 		\draw[->](C)--(D) node[pos=0.5, below]{$\mathscr{U}_\Sigma(h)$};
	 		\end{tikzpicture}
	 	\end{center}
	 	which, by \cref{prod}, has a diagonal filling $k:\mathscr{U}_\Sigma\qty(\mathcal{A}')\rightarrow \mathscr{U}_\Sigma(\mathcal{B})$. Let us show that $k$ is a morphism of $\alg{\Sigma}$. For any $c\in C$ we have $c^{\mathcal{A'}}=e(c^\mathcal{A})$ and for every $f\in O$ and $x_1,\dots, x_{\ar(f)}\in \mathscr{U}_\Sigma\qty(\mathcal{A}')$ there exist $a_1,\dots, a_{\ar(f)}\in \mathscr{U}_\Sigma\qty(\mathcal{A})$ such that $e(a_i)=x_i$ for $1\leq i\leq \ar(f)$, so $k(c^\mathcal{A'})=k\qty(e\qty(c^\mathcal{A}))=g\qty(c^\mathcal{A})=c^\mathcal{B}$ and
\begin{align*}
	 	k\qty(f^{\mathcal{A'}}\qty(x_1,\dots, x_{\ar(f)})) & = 
	    k\qty(f^{\mathcal{A'}}\qty(e\qty(a_1),\dots,e\qty(a_{\ar(f)})))=
	    k\qty(e\qty(f^{\mathcal{A}}\qty(a_1,\dots,a_{\ar(f)}))) \\
	    & = g\qty(f^{\mathcal{A}}\qty(a_1,\dots,a_{\ar(f)})) = 
	    f^{\mathcal{B}}\qty(g\qty(a_1),\dots,g\qty(a_{\ar(f)})) \\
	    &= f^{\mathcal{B}}\qty(k\qty(e\qty(a_1)),\dots,k\qty(e\qty(a_{\ar(f)}))) =f^{\mathcal{B}}\qty(k\qty(x_1),\dots,k\qty(x_{\ar(f)}))
\end{align*}

	 	Finally, properness follows at once since $\mathscr{U}_\Sigma$ is faithful and so reflects epis and monos. 
	 \end{proof}

 \begin{definition}
 We define the following two classes of $\Sigma$-algebras: 
\begin{align*}
\mathscr{X}_0\ed\qty{\mathscr{F}^{\catname{Set}}_{\Sigma}\qty(X)\mid X \in \catname{Set}}\qquad
\mathscr{X}_{\mathsf{E}}\ed\qty{\mathscr{F}_{\Sigma}(X, \mu_X)\mid (X, \mu_X)\in \fuz{H}}
\end{align*} 
We will use $\mathscr{E}_{\Sigma,\mathscr{X}_0}$ (resp.,  $\mathscr{E}_{\Sigma,\mathscr{X}_\mathsf{E}}$) for the class of $e\in\mathscr{E}$ such that every $X\in \mathscr{X}_0$ (resp. $X\in \mathscr{X}_\mathsf{E}$) is projective with respect to $e$.
\end{definition}
 \begin{remark}
 	$\mathscr{X}_0=\qty{\trm{\Sigma}\qty(X, \mu_X)\mid \supp{X}=\emptyset}$. 
 \end{remark}
 We  have now all the ingredients needed to use the results recalled above.
 \begin{lemma}\label{aaa}
 	With the above definitions:
 	\begin{enumerate}
 		\item $\mathscr{E}_{\Sigma,\mathscr{X}_0}=\mathscr{E}_\Sigma$;
 		\item $	\mathscr{E}_{\Sigma, \mathscr{X}_{\mathsf{E}}}=\qty{e\in \mathscr{E}_{\Sigma} \mid \mathscr{U}_\Sigma(e) \text{ is split}}$;
 		\item  $(\alg{\Sigma}, (\mathscr{E}_\Sigma, \mathscr{M}_\Sigma), \mathscr{X}_0)$ and $(\alg{\Sigma}, (\mathscr{E}_\Sigma, \mathscr{M}_\Sigma), \mathscr{X}_{\mathsf{E}} )$ satisfy the conditions of our settings.
	\end{enumerate}
\end{lemma}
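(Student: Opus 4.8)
The plan is to reduce everything to the two free--forgetful adjunctions $\trma{\Sigma}\dashv \mathscr{V}_\Sigma$ and $\trm{\Sigma}\dashv \mathscr{U}_\Sigma$, transporting projectivity across them and reading off the covering arrows of the third requirement from the respective counits. For part 1 the inclusion $\mathscr{E}_{\Sigma,\mathscr{X}_0}\subseteq \mathscr{E}_\Sigma$ is immediate, so I would fix $e\in \mathscr{E}_\Sigma$ and a set $Y$ and show $\trma{\Sigma}(Y)$ is projective with respect to $e$: by $\trma{\Sigma}\dashv \mathscr{V}_\Sigma$ an arrow $h:\trma{\Sigma}(Y)\rightarrow \mathcal{B}$ transposes to a function $\bar h:Y\rightarrow \mathscr{V}_\Sigma(\mathcal{B})$ and postcomposition with $e$ becomes postcomposition with $\mathscr{V}_\Sigma(e)$; since $\mathscr{U}_\Sigma(e)$ is epi, \cref{mono} makes $\mathscr{V}_\Sigma(e)$ surjective, so $\bar h$ lifts pointwise and transposing back yields $k$. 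For part 2 the same transposition along $\trm{\Sigma}\dashv \mathscr{U}_\Sigma$ shows that $\trm{\Sigma}(X,\mu_X)$ is projective with respect to $e$ exactly when every $\fuz{H}$-arrow $(X,\mu_X)\rightarrow \mathscr{U}_\Sigma(\mathcal{B})$ factors through $\mathscr{U}_\Sigma(e)$; a section of $\mathscr{U}_\Sigma(e)$ produces such factorizations uniformly, whereas conversely taking $(X,\mu_X)=\mathscr{U}_\Sigma(\mathcal{B})$ and lifting the identity along $\mathscr{U}_\Sigma(e)$ exhibits a section, so membership in $\mathscr{E}_{\Sigma,\mathscr{X}_\mathsf{E}}$ is equivalent to $\mathscr{U}_\Sigma(e)$ being split.

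For part 3 two of the three requirements are formal. Existence of products holds because $\mathscr{U}_\Sigma$ creates them: one equips the product of the underlying fuzzy sets (\cref{prod}) with the componentwise operations and the tuples of constants, the constant maps out of $(1,c_\bot)$ being arrows since $\bot$ is the bottom of $H$. The covering requirement is supplied by the counits: for $\mathscr{X}_0$ the counit $\varepsilon_{\mathcal{C}}:\trma{\Sigma}(\mathscr{V}_\Sigma(\mathcal{C}))\rightarrow \mathcal{C}$ has, by a triangle identity, a split and hence surjective underlying function, so $\varepsilon_{\mathcal{C}}\in \mathscr{E}_\Sigma=\mathscr{E}_{\Sigma,\mathscr{X}_0}$ by part 1 with domain in $\mathscr{X}_0$; for $\mathscr{X}_\mathsf{E}$ the counit $\trm{\Sigma}(\mathscr{U}_\Sigma(\mathcal{C}))\rightarrow \mathcal{C}$ is, again by a triangle identity, sent by $\mathscr{U}_\Sigma$ to a split epimorphism, so it lies in $\mathscr{E}_{\Sigma,\mathscr{X}_\mathsf{E}}$ by part 2 with domain in $\mathscr{X}_\mathsf{E}$.

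The hard part will be essential smallness. Here I would note that any $e:X\rightarrow \mathcal{C}$ in $\mathscr{E}_\Sigma$ is surjective on underlying sets, so $\abs{\mathscr{V}_\Sigma(\mathcal{C})}\leq \abs{\mathscr{V}_\Sigma(X)}$, and that up to isomorphism over $X$ such an $e$ is determined by the kernel equivalence of $\mathscr{V}_\Sigma(e)$ on the fixed set $\mathscr{V}_\Sigma(X)$ together with a membership function into $H$ and interpretations of the operations and constants on the quotient. As $\mathscr{V}_\Sigma(X)$ and $H$ are fixed, all of these data range over genuine sets, so only a set of isomorphism classes occurs, producing the required $\mathcal{J}$. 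This set-theoretic bookkeeping is the only genuinely delicate step; parts 1 and 2 and the remaining requirements follow formally from the two adjunctions and from \cref{prod,mono}.
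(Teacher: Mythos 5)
Your proposal is correct and follows essentially the same route as the paper: parts 1 and 2 by transposing projectivity across the adjunctions $\trma{\Sigma}\dashv\mathscr{V}_\Sigma$ and $\trm{\Sigma}\dashv\mathscr{U}_\Sigma$ (with the section obtained by lifting the counit/identity), products created over $\fuz{H}$, and the covering arrows given by the counits. The only difference is presentational: where you verify essential smallness by directly noting that an $\mathscr{E}_\Sigma$-quotient of $X$ is determined up to isomorphism over $X$ by its kernel, a membership function on the quotient, and the induced operations, the paper packages exactly this set-sized data into a formal notion of \emph{fuzzy congruence} and proves the correspondence as a separate proposition.
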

\begin{proof}
	 	Let us start adapting the usual notion of congruence to our set environment.
	 	\begin{definition}
	 		Given a $\Sigma$-algebra $\mathcal{A}=\qty(\qty(A, \mu_A), \Sigma^{\mathcal{A}})$, a \emph{fuzzy congruence} on $\mathcal{A}$ is a pair $(\sim, \mu)$ where
	 		\begin{itemize}
	 			\item $\sim$ is a congruence: i.e. an equivalence relation such that, for any $f\in O$, if $a_i\sim b_i$ fo $1\leq i \leq \ar(f)$ then $f^{\mathcal{A}}\qty(a_1,\dots a_{\ar(f)})\sim f^{\mathcal{A}}\qty(b_1,\dots, b_{\ar(f)})$;
	 			\item $\mu:A\rightarrow H$ is a function such $\mu(a)=\mu(b)$ whenever  $a\sim b$;
	 			\item for any $f\in O$,
	 			\begin{equation*}
	 				\bigwedge_{i=1}^{\ar(f)} \mu(a_i)  \leq \mu\qty(f^{\mathcal{A}}\qty(a_1,\dots a_{\ar(f)}))
	 			\end{equation*}
	 			\item  $\mu_A(a)\leq \mu(a)$ for every $a\in A$.
	 		\end{itemize}
	 	\end{definition}
	 	\begin{proposition}\label{cong}
	 		Let $\mathcal{A}=\qty(\qty(A, \mu_A), \Sigma^{\mathcal{A}})$ be a $\Sigma$-algebra, then:
	 		\begin{enumerate}[label=(\alph*)]
	 			\item if $\qty{\qty(\sim_{i}, \mu_i)}_{i\in I}$ is a family of fuzzy congruence then $\qty(\bigcap_{i\in I}\sim_{i}, \mu)$ with 
	 			$\mu:A\rightarrow H$ the pointwise infimum of $\qty{\mu_i}_{i\in I} $
	 			is a fuzzy congruence;
	 			\item for every fuzzy congruence $(\sim, \mu)$ there exists an epimorphism: $e_{(\sim, \mu)}:\mathcal{A}\rightarrow \mathcal{B}$ such that $\mu_B(b)=\mu(a)$ for any $a\in e^{-1}(b)$ and $e_{(\sim, \mu)}(a)=e_{(\sim, \mu)}(b)$ if and only if $a\sim b$;
	 			\item for every epimorphism $e:\mathcal{A}\rightarrow \mathcal{B}$ there exists a fuzzy congruence $(\sim_e, \mu_e)$ on $\mathcal{A}$ such that $e\leq e_{(\sim, \mu)}$ and $e_{(\sim, \mu)}\leq e$ in $\mathcal{A}\arro \alg{\Sigma}$.
	 		\end{enumerate}
	 	\end{proposition}
	 	\begin{proof}
	 	\begin{enumerate}[label=(\alph*)]
	 		\item This is straightforward.
	 		\item Define $\mathcal{B}=\qty(\qty(B, \mu_B), \Sigma^{\mathcal{B}})$ setting $B:=A/\sim$, $\mu_B([a]):=\mu(a)$ and, for any $f\in O$
	 		\begin{equation*}
	 		f^{\mathcal{A}}\qty(\qty[a_1],\dots,[a_{\ar(f)}]):=\qty[f^{\mathcal{B}}\qty(\qty[a_1], \dots, [a_{\ar(f)}])]
	 		\end{equation*}
	 		Since $(\sim, \mu)$ is a fuzzy congruence all these objects are well defined, the fact that $f^{\mathcal{B}}$ is an arrow of fuzzy sets follows from the second condition on $\mu$, while the last condition entails that the projection on the quotient is an arrow of $\fuz{H}$.
	 		\item Put $a\sim_e b$ if and only if $e(a)=e(b)$ and $\mu_e(a):=\mu_B(e(a))$. Since $e$ is a morphism of $\alg{\Sigma}$ we get the first and the last condition in the definition of a fuzzy congruence, while the second one follows since
	 		\begin{align*}
	 		\mu_e\qty(f^{\mathcal{A}}\qty(a_1,\dots, a_{\ar(f)}))&=\mu_B\qty(e\qty(f^{\mathcal{A}}\qty(a_1,\dots, a_{\ar(f)})))=\mu_B\qty(f^{\mathcal{B}}\qty(e\qty(a_1),\dots,  e\qty(a_{\ar(f)})))\\&\geq \bigwedge_{i=1}^{\ar(f)}\mu_B\qty(e\qty(a_i))=\bigwedge_{i=1}^{\ar(f)}\mu_e\qty(a_i)		 \end{align*}
	 		Now, it is immediate to see that the function sending the equivalence class $[a]$ of $a\in A$ to $e(b)$
	 		induces an isomorphism of $\alg{\Sigma}$ witnessing the thesis.\qedhere 
	 	\end{enumerate}
 	\end{proof}
	 	So equipped we can turn back to the proof of \cref{aaa}.
	 	\begin{enumerate}
	 		\item Let  $e:\mathcal{A}\rightarrow \mathcal{B}$ be an arrow in $\mathscr{E}_\Sigma$ and let  $h:\trma{\Sigma}(X)\rightarrow \mathcal{B}$ be any morphism of $\alg{\Sigma}$. By point $2$ of  \cref{mono} $e$ is surjective so for any $x\in X$ we can take a $a_x\in e^{-1}\qty(h\qty(\eta_X\qty(x)))$, where $\eta$ is the unit of the adjunction $\trma{\Sigma}\dashv\mathscr{V}_\Sigma $ of \cref{left}, and define $
	 		\bar{k}:X\rightarrow A$ mapping $x$ to $a_x$,  
	 		where $\mathcal{A}=\qty(\qty(A, \mu_A), \Sigma^{\mathcal{A}})$. By adjointness, from $\bar{k}$ we get $k:\trma{\Sigma}(X)\rightarrow \mathcal{A}$ and
	 		\begin{align*}
	 		\qty(e\circ  k)\circ \eta_{X}=e\circ \qty(k\circ \eta_{X})=e \circ \bar{k}=h\circ \eta_{X} 
	 		\end{align*}
	 		so $e\circ k = h$.
	 		\begin{center}
	 			\begin{tikzpicture}
	 			\node(A) at(1.5,0){$\mathscr{V}_\Sigma(\mathcal{A})$};
	 			\node(B) at(1.5,-1.5){$\mathscr{V}_\Sigma(\mathcal{B})$};
	 			\node(X) at(-2,-1.5){$\mathscr{V}_\Sigma\qty(\trma{\Sigma}\qty(X))$};
	 			\node(Y) at(-5.5,-1.5){$X$};
	 			\draw[->](A)--(B)node[pos=0.5, right]{$\mathscr{V}_\Sigma(e)$};
	 			\draw[<-](X)--(Y)node[pos=0.5, below]{$\eta_{\nabla(X)}$};
	 			\draw[->](X)--(B)node[pos=0.5, below]{$\mathscr{V}_\Sigma(h)$};
	 			\draw[->, dashed](X)..controls(-1.75,-0.75)and(-1.25, -0.25)..(A)node[pos=0.35, left, xshift=-0.1cm]{$\mathscr{V}_\Sigma(k)$};
	 			\draw[->, dashed](Y)..controls(-4.5,0)and(-1, 0)..(A)node[pos=0.5, above]{$\bar{k}$};
	 			\end{tikzpicture}
	 		\end{center}
	 		\item Let $e:\mathcal{A}\rightarrow \mathcal{B}$ be in $\mathscr{E}_\Sigma$ such that $\mathscr{U}_\Sigma(e)$ is split and let $s$ be a section in $\fuz{H}$, then, for any $h:\trm{\Sigma}(X,\mu_X)\rightarrow \mathcal{B}$ we can consider the arrow $s\circ h\circ \eta_{(X, \mu_X)}$, which, by adjointness provides a $k:\trm{\Sigma}(X,\mu_X)\rightarrow \mathcal{A}$, moreover:
	 		\begin{align*}
	 		\qty(e\circ k)\circ \eta_{(X, \mu_X)}&=e\circ \qty(k\circ \eta_{(X, \mu_X)})=e\circ \qty(s\circ h\circ \eta_{(X,\mu_X)})\\&=\qty(e\circ s)\circ \qty(h\circ \eta_{(X,\mu_X)})=h\circ \eta_{(X,\mu_X)} 
	 		\end{align*} 
	 		so $k$ is the wanted lifting. On the other hand, if $e$ is in $\mathscr{E}_{\Sigma, \mathscr{X}_1}$ we can take the diagram:
	 		\begin{center}
	 			\begin{tikzpicture}
	 			\node(A) at(1.5,0){$\mathscr{U}_\Sigma(\mathcal{A})$};
	 			\node(B) at(1.5,-1.5){$\mathscr{U}_\Sigma(\mathcal{B})$};
	 			\node(X) at(-2,-1.5){$\mathscr{U}_\Sigma\qty(\trm{\Sigma}(\mathscr{U}_\Sigma(\mathcal{B})))$};
	 			\node(Y) at(-5.5,-1.5){$\mathscr{U}_\Sigma(\mathcal{B})$};
	 			\draw[->](A)--(B)node[pos=0.5, right]{$\mathscr{U}_\Sigma(e)$};
	 			\draw[<-](X)--(Y)node[pos=0.5, above]{$\eta_{\mathscr{U}_\Sigma(\mathcal{B})}$};
	 			\draw[->](X)--(B)node[pos=0.5, above]{$\mathscr{U}_\Sigma(\epsilon_{\mathcal{B}})$};
	 			\draw[->](X)..controls(-1.75,-0.25)and(0, 0)..(A)node[pos=0.5, above, xshift=-0.05cm]{$\mathscr{U}_\Sigma(k)$};
	 			\draw[->](Y)..controls(-2.5,-2.5)and(-1.5, -2.5)..(B)node[pos=0.5, below,]{$\id{\mathscr{U}_\Sigma(\mathcal{B})}$};
	 			\end{tikzpicture}
	 		\end{center}
	 		where $\epsilon_{\mathcal{B}}$ is the component of the counit $\epsilon:\trm{\Sigma}\circ \mathscr{U}_\Sigma\rightarrow \id{\alg{\Sigma}}$ and $k$ its lifting. Taking $\mathscr{U}_\Sigma(k)\circ \eta_{\mathscr{U}_\Sigma(\mathcal{B})}$ we get the desired section of $\mathscr{U}_\Sigma(e)$.
	 		
	 		\item Let us proceed by steps.
	 		\begin{itemize}
	 			\item $\fuz{H}$ has all products by \cref{prod}.
	 			\item  $\arr{X}{C}$ is essentially small by point $3$ of \cref{cong}.
	 			\item For any fuzzy set $(X, \mu_X)$ we can consider the identity $
	 			\id{(X, \mu_X)}:(X, \mu_X)\rightarrow (X, \mu_X)$ and the counit $\epsilon_{(X, \mu_X)}:\nabla(X)\rightarrow (X, \mu_X)$ of the adjunction $\nabla \dashv\mathscr{U}$ of \cref{adj}. They induce arrows $
	 			e_0:\trma{\Sigma}(X)\rightarrow (X, \mu_X)$ and $ e_{\mathsf{E}}:\trm{\Sigma}(X, \mu_X)\rightarrow (X, \mu_X)$
	 			such that $\mathscr{U}_\Sigma \qty(e_0)\circ \eta_{\nabla(X)}= \epsilon_{(X, \mu_X)}$ and $\mathscr{U}_\Sigma \qty(e_H)\circ \eta_{(X, \mu_X)}=\id{(X, \mu_X)}$. So $\mathscr{U}_\Sigma \qty(e_H)$ is split and, since $\epsilon_{(X, \mu_X)}$ is surjective, point $2$ of \cref{mono} allows us to conclude that $\mathscr{U}_\Sigma \qty(e_0)$ is an epimorphism. \qedhere  
	 		\end{itemize}  
	 	\end{enumerate}
	 \end{proof}
 
 We want now to translate formulae of our sequent calculus into $\mathscr{X}_0$- and $\mathscr{X}_\mathsf{E}$-equations.  To this end, we have to restrict to two classes of theories, which we introduce next. 
	 \begin{definition}
	 	Let $\mathcal{L}=(\Sigma, X)$ be a language, a $\Lambda \in \thr{L}$ is said to be:
	 	\begin{itemize}
	 		\item  \emph{unconditional} (\cite[App.~B.5]{milius2019equational}) if any sequent in $\Lambda$ is of the form $\vdash \phi$ for some formula $\phi$;
	 		\item \emph{of type $\mathsf{E}$} if any sequent in $\Lambda$ is of the form $\qty{\ex{l_i}{x_i}}_{i\in I}\vdash \phi$ for some formula $\phi$, $\{x_i\}_{i\in I}\subset X$  and $\{l_i\}_{i\in I}\subset H$.
	 	\end{itemize}
	 \end{definition}
 
	 \begin{lemma}\label{te}  For any signature $\Sigma$ and $\mathscr{X}_{\mathsf{E}}$-equation $e:\trm{\Sigma}(X, \mu_X)\rightarrow \mathcal{B}$ there exists a type ${\mathsf{E}}$ theory $\Lambda_e$ such that, for every $\Sigma$-algebra $\mathcal{A}$, $	 	\mathcal{A}\vDash_{\mathscr{X}_1} e$ if and only if $\mathcal{A}\in \modd(\Lambda_e)$. Moreover $\abs{\Gamma}\leq \abs{\supp{X}}$ for any $\Gamma \vdash \phi \in \Lambda_e$.
	 \end{lemma}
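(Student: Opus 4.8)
The plan is to translate the factorization condition that defines $\mathcal{A}\vDash_{\mathscr{X}_{\mathsf{E}}}e$ directly into sequents. First I would use the free--forgetful adjunction $\trm{\Sigma}\dashv\mathscr{U}_\Sigma$ (\cref{free} with $\Lambda=\emptyset$) to identify morphisms $h:\trm{\Sigma}(X,\mu_X)\rightarrow\mathcal{A}$ of $\alg{\Sigma}$ with fuzzy-set morphisms $(X,\mu_X)\rightarrow\mathscr{U}_\Sigma(\mathcal{A})$, i.e.\ with assignments $\iota:X\rightarrow A$ such that $\mu_X(x)\leq\mu_A(\iota(x))$ for all $x$; under this bijection the transpose $\hat\iota$ sends each element $t$ of $\term{\Sigma}(X,\mu_X)$ to its interpretation $t^{\mathcal{A},\iota}$, by the same structural induction as in \cref{ass}. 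The point is that such an assignment is a fuzzy-set morphism exactly when $\mathcal{A}\vDash_\iota\ex{\mu_X(x)}{x}$ for every $x$, and since this is automatic whenever $\mu_X(x)=\bot$ (rule \textsc{Inf}), it is enough to impose it for $x\in\supp{X}$.

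Next I would define the theory. By \cref{cong} the $\mathscr{E}_\Sigma$-quotient $e$ determines a fuzzy congruence $(\sim_e,\mu_e)$ on $\trm{\Sigma}(X,\mu_X)$ with $t\sim_e s\iff e(t)=e(s)$ and $\mu_e(t)=\mu_B(e(t))$. Setting $\Gamma\ed\{\ex{\mu_X(x)}{x}\mid x\in\supp{X}\}$, I would take
\[
\Lambda_e\ed\{\Gamma\vdash t\equiv s\mid t\sim_e s\}\cup\{\Gamma\vdash\ex{\mu_e(t)}{t}\mid t\in\term{\Sigma}(X,\mu_X)\}.
\]
Every premise is a membership proposition on a variable, so $\Lambda_e$ is of type $\mathsf{E}$; and since the single shared antecedent $\Gamma$ satisfies $\abs{\Gamma}\leq\abs{\supp{X}}$, the final cardinality claim is immediate.

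The core of the argument is a per-assignment factorization criterion: for $\iota$ a fuzzy-set morphism, $\hat\iota$ factors as $q\circ e$ for some arrow $q:\mathcal{B}\rightarrow\mathcal{A}$ of $\alg{\Sigma}$ if and only if (i) $t\sim_e s$ implies $t^{\mathcal{A},\iota}=s^{\mathcal{A},\iota}$, and (ii) $\mu_e(t)\leq\mu_A(t^{\mathcal{A},\iota})$ for all $t$. The forward direction reads both conditions off $q\circ e=\hat\iota$, using that $q$ is a fuzzy-set morphism and $\mu_e(t)=\mu_B(e(t))$. For the converse, since $e\in\mathscr{E}_\Sigma$ means $\mathscr{U}_\Sigma(e)$ is surjective (\cref{mono}), condition (i) makes $q(e(t))\ed t^{\mathcal{A},\iota}$ a well-defined function; surjectivity together with $e$ and $\hat\iota$ being $\Sigma$-homomorphisms forces $q$ to preserve the $\Sigma$-structure, and condition (ii) makes $q$ an arrow of $\fuz{H}$. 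I would then chain the equivalences: $\mathcal{A}\vDash_{\mathscr{X}_{\mathsf{E}}}e$ says every $\hat\iota$ factors through $e$, which by the criterion holds iff (i) and (ii) are met for every fuzzy-set-morphism $\iota$; as $\Gamma$ selects precisely those assignments among all $\iota:X\rightarrow A$, this is exactly the assertion that $\mathcal{A}$ validates every sequent of $\Lambda_e$, i.e.\ $\mathcal{A}\in\modd(\Lambda_e)$.

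The main obstacle is this factorization criterion, specifically verifying that the set-theoretic condition (i) and the membership condition (ii) are jointly necessary and sufficient for the mediating map $q$ to exist as an arrow of $\alg{\Sigma}$, that is, to be simultaneously a $\Sigma$-homomorphism and an arrow of $\fuz{H}$. This is where surjectivity of $e$ (for well-definedness and automatic preservation of operations) and the identification $\mu_e(t)=\mu_B(e(t))$ from \cref{cong} do the real work; the remainder is bookkeeping around the free--forgetful adjunction.
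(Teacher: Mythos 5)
Your proof is correct and takes essentially the same route as the paper's: the same premise set $\Gamma_X=\{\ex{\mu_X(x)}{x}\mid x\in\supp{X}\}$, the same two families of axioms (equations identified by $e$, and membership bounds read off $\mu_B\circ e$), the same adjunction-based bijection between $\Gamma_X$-satisfying assignments and homomorphisms out of $\trm{\Sigma}(X,\mu_X)$, and the same surjectivity-based construction of the mediating map $q$. The only differences are cosmetic: you package the data as a fuzzy congruence via \cref{cong} and record only the maximal membership axiom $\Gamma\vdash\ex{\mu_B(e(t))}{t}$, whereas the paper closes under all $l\leq \mu_B(e([t]))$ --- semantically equivalent choices.
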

	 \begin{proof}
	 	Let $\mathcal{L}_e$ be $(\Sigma, X)$. We define $\Gamma_X:=\{\ex{\mu_X(x)}{x}\mid x\in \supp{X}\}$
	 	and  $\Lambda_e\in \thr{L}$ as $\Lambda_e^{1}\cup \Lambda_e^{2}$ where
	 	\begin{align*}
	 	\Lambda_e^{1}\ed\qty{\Gamma_X\vdash \ex{l}{t}\mid l\leq \mu_{B}\qty(e\qty([t]))}\qquad 
	 	\Lambda_e^{2}\ed\qty{\Gamma_X\vdash [s]\equiv [t]\mid e\qty([t])=e\qty([s])}	
	 	\end{align*}
	 	and $(B, \mu_B)$ is $\mathscr{U}_\Sigma(\mathcal{B})$. Let us show the two implications.
	 	\begin{itemize}
	 		\item[$\Rightarrow$] Any $\iota:X\rightarrow A$ such that $\mathcal{A}\vDash_\iota \Gamma_X$ defines an arrow $\bar{\iota}(X,\mu_X)\rightarrow \mathscr{U}_\Sigma(\mathcal{A})$. By adjointness we have a homomorphism $h_\iota:\trm{\Sigma}(X,\mu_X)\rightarrow \mathcal{A}$ hence, by hypothesis, there exists $q_\iota:\mathcal{B}\rightarrow \mathcal{A}$ such that $q_\iota\circ e=h_\iota$. Now, notice that (see~\cref{free}, and \cref{init}(4)) $h_\iota([t])=t^\mathcal{A, \iota}$ Take a sequent $\Gamma_X \vdash \psi$ in $\Lambda_e$, we have two cases, depending on $\psi$.
	 		\begin{itemize}
	 			\item If $\psi=\ex{l}{ t}\in \Lambda^{me}_{e}$ we have
	 			\begin{align*}
	 			l\leq \mu_B\qty(e\qty([t]))\leq \mu_A\qty(q_\iota\qty(e\qty([t])))=\mu_A\qty(h_\iota\qty([t]))={t}^{\mathcal{A}, \iota}
	 			\end{align*} 
	 			therefore $\mathcal{A}\vDash_{\iota}\psi$. 
	 			\item If $\phi=[s]\equiv [t]\in \Lambda^{eq}_{e}$ then
	 			\begin{align*}
	 			{t}^{\mathcal{A}, \iota}=h_\iota\qty([t])=q_\iota\qty(e([t]))=q_\iota\qty(e([s]))=h_\iota\qty([t])={s}^{\mathcal{A}, \iota}
	 			\end{align*} 
	 			and even in this case  $\mathcal{A}\vDash_{\iota}\psi$.  
	 		\end{itemize}
	 		\item[$\Leftarrow$]	Take $h:\trm{\Sigma}\qty(X,\mu_X)\rightarrow \mathcal{A}$, $\mathscr{U}_\Sigma(h)\circ \eta_{\nabla(X)}$ is an arrow $(X,\mu_X)\rightarrow \mathscr{U}_\Sigma(\mathcal{A})$, so forgetting the fuzzy set structure too gives us an assignment $\iota_h:X\rightarrow A$ such that $\mathcal{A}\vDash_{\iota_h} \Gamma_X$. As before $
	 		h([t])=t^{\mathcal{A}, \iota_h}
	 	$
	 		for every $[t]\in \trm{\Sigma}\qty(X,\mu_X)$. Since $\mathcal{A}\in \mode{\Lambda_e}$ we have
	 		\begin{itemize}
	 			\item $t^{\mathcal{A}, \iota_h}=s^{\mathcal{A}, \iota_h}$ for all terms $t$ and $s$ such that $e([t])=e([s])$;
	 			\item $l\leq \mu_A(t^{\mathcal{A}, \iota_h})$ for all terms $t$ such that $l\leq \mu_B(e([t]))$.
	 		\end{itemize}
	 		So, the function $
	 		q:B\rightarrow A$ which sends $
	 		b\in B$ to  $h([t])$ for some $[t]\in e^{-1}(b)$, provides us with an arrow $\mathscr{U}_\Sigma(\mathcal{B})\rightarrow \mathscr{U}_\Sigma(\mathcal{A})$ such that
	 		$q\circ e=h$. Now:
	 		\begin{equation*}
	 		\begin{split}
	 		&q\qty(c^{\mathcal{B}})=q\qty(e\qty(c^{\trm{\Sigma}\qty(X,\mu_X)}))\\&=h\qty(c^{\trm{\Sigma}\qty(X,\mu_X)})=c^{\mathcal{A}}
	 		\\ \\ 
	 		\end{split}\quad
	 		\begin{split}
	 		&q\qty(f^{\mathcal{B}}\qty(b_1,\dots, b_{\ar(f)}))=q\qty(f^{\mathcal{B}}\qty(e\qty(c_1),\dots, \qty(e\qty(c_{\ar(f)}))))\\&=q\qty(e\qty(f^{\trm{\Sigma}\qty(X,\mu_X)}\qty(c_1,\dots,c_{\ar\qty(f)} )))\\&=h\qty(f^{\trm{\Sigma}\qty(X,\mu_X)}\qty(c_1,\dots,c_{\ar\qty(f)} ))=
	 		f^{\mathcal{A}}\qty(h\qty(c_1),\dots, h\qty(c_{\ar(f)}))\\&=
	 		f^{\mathcal{A}}\qty(q\qty(e\qty(c_1)),\dots, q\qty(e\qty(c_{\ar(f)})))=
	 		f^{\mathcal{A}}\qty(q\qty(b_1),\dots, q\qty(b_{\ar(f)}))
	 		\end{split} 
	 		\end{equation*}
	 		so we can conclude that $q$ is an arrow of $\alg{\Sigma}$ and we are done.	 		\qedhere 
	 	\end{itemize} 
	 \end{proof}
 
	 \begin{corollary}\label{tec}For any signature $\Sigma$ and $\mathscr{X}_0$-equation $e:\trma{\Sigma}\qty(X)\rightarrow \mathcal{B}$ there exists an unconditional theory  $\Lambda_e$ such that, for any $\Sigma$-algebra $\mathcal{A}$, $	 	\mathcal{A}\vDash_{\mathscr{X}_0} e$ if and only if $\mathcal{A}\in \modd(\Lambda_e)$.
	 \end{corollary}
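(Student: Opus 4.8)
The plan is to obtain this as an immediate consequence of \cref{te}, by recognizing an $\mathscr{X}_0$-equation as a particular $\mathscr{X}_{\mathsf{E}}$-equation. First I would use \cref{left} together with the remark that $\mathscr{F}_\emptyset\circ\nabla$ is isomorphic to $\trma{\Sigma}$: this gives a canonical isomorphism $\trma{\Sigma}(X)\cong\trm{\Sigma}(\nabla(X))$, where $\nabla(X)=(X, c_\bot)$ is the free fuzzy set of \cref{adj}, carrying the constant-$\bot$ membership function. Transporting $e$ along this isomorphism, the given $\mathscr{X}_0$-equation becomes an $\mathscr{X}_{\mathsf{E}}$-equation $e\colon\trm{\Sigma}(X, c_\bot)\to\mathcal{B}$; this is exactly the description of $\mathscr{X}_0$ as those $\trm{\Sigma}(X,\mu_X)$ with empty support.

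Next I would note that the satisfaction relation of \cref{equat} is a pure factorization condition attached to the arrow $e$ and is insensitive to the ambient class: for a fixed $e$, the assertions $\mathcal{A}\vDash_{\mathscr{X}_0}e$ and $\mathcal{A}\vDash_{\mathscr{X}_{\mathsf{E}}}e$ both say that every $h$ out of the domain of $e$ factors through $e$, hence they coincide. Applying \cref{te} to $e$ read as an $\mathscr{X}_{\mathsf{E}}$-equation then supplies a theory $\Lambda_e$ with $\mathcal{A}\vDash_{\mathscr{X}_{\mathsf{E}}}e$ iff $\mathcal{A}\in\modd(\Lambda_e)$, and by the previous sentence this is precisely the equivalence $\mathcal{A}\vDash_{\mathscr{X}_0}e\iff\mathcal{A}\in\modd(\Lambda_e)$ demanded in the statement.

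It then only remains to verify that the theory produced this way is unconditional, not merely of type $\mathsf{E}$. The sequents constructed in \cref{te} all carry the common context $\Gamma_X=\qty{\ex{\mu_X(x)}{x}\mid x\in\supp{X}}$, and the families $\Lambda_e^1$ and $\Lambda_e^2$ differ from unconditional sequents only through this context. Since here $\mu_X=c_\bot$, the support $\supp{X}$ is empty, so $\Gamma_X=\emptyset$ and every sequent of $\Lambda_e$ takes the form $\vdash\phi$; thus $\Lambda_e$ is unconditional, as required.

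I do not expect a genuine obstacle: the whole corollary is bookkeeping. The only step deserving care is matching morphisms out of $\trma{\Sigma}(X)$ with morphisms out of $\trm{\Sigma}(X, c_\bot)$ through the isomorphism above, so that the two versions of $\vDash$ are literally identified; once this is in place, the vanishing of $\Gamma_X$ for an empty support does all the remaining work.
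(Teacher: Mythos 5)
Your proposal is correct and is exactly the argument the paper intends: the corollary is stated without proof as an immediate consequence of \cref{te}, via the identification $\trma{\Sigma}(X)\cong\trm{\Sigma}(\nabla(X))$ (so that $\mathscr{X}_0$-equations are $\mathscr{X}_{\mathsf{E}}$-equations with empty support) together with the observation that $\supp{X}=\emptyset$ forces $\Gamma_X=\emptyset$, making every sequent of $\Lambda_e$ unconditional. Indeed, the clause $\abs{\Gamma}\leq\abs{\supp{X}}$ in \cref{te} was included precisely so that this last step is automatic, which is the same vanishing-context argument you carry out by inspecting the construction.
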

	 
	 Finally, from the results above we can easily conclude HSP-like results for $\alg{\Sigma}$.	 
	 \begin{theorem}\label{hsp} Let $\catname{V}$ be a full subcategory  of $\alg{\Sigma}$, then
	 	\begin{enumerate}
	 		\item    $\catname{V}$ is closed under epimorphisms, (small) products and strong monomorphisms if and only if there exists a class of unconditional theories $\qty{\Lambda_e}_{e\in \mathbb{E}}$  such that $\mathcal{A}\in \catname{V}$ if and only if $\mathcal{A}\in \mode{\Lambda_e}$  for all $e\in \mathbb{E}$.
	 		
	 		\item    $\catname{V}$ is closed under split epimorphisms, (small) products and strong monomorphisms if and only if there exists a class of type $\mathsf{E}$ theories $\qty{\Lambda_e}_{e\in \mathbb{E}}$  such that $\mathcal{A}\in \catname{V}$ if and only if $\mathcal{A}\in \mode{\Lambda_e}$  for all $e\in \mathbb{E}$.
	 	\end{enumerate}
	 \end{theorem}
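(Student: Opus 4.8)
The plan is to treat both parts in parallel and reduce everything to the abstract machinery already assembled: \cref{milius} (Milius--Urbat) together with the identifications of \cref{aaa} and the equation-to-theory translations \cref{tec} and \cref{te}. Part~1 will be run in the setting $(\alg{\Sigma},(\mathscr{E}_\Sigma,\mathscr{M}_\Sigma),\mathscr{X}_0)$, in which $\mathscr{E}_{\mathscr{X}_0}$-quotients are exactly the epimorphisms, and part~2 in $(\alg{\Sigma},(\mathscr{E}_\Sigma,\mathscr{M}_\Sigma),\mathscr{X}_{\mathsf{E}})$, in which $\mathscr{E}_{\mathscr{X}_{\mathsf{E}}}$-quotients are the split epimorphisms; in both cases $\mathscr{M}_\Sigma$-subobjects are the strong monomorphisms. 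I describe part~1; part~2 is verbatim with $\mathscr{X}_0$ replaced by $\mathscr{X}_{\mathsf{E}}$, ``epimorphism'' by ``split epimorphism'' and \cref{tec} by \cref{te}.

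For the forward implication I would argue as follows. Assuming $\catname{V}$ is closed under epimorphisms, products and strong monomorphisms, \cref{aaa}(1),(3) let me reread these three properties as closure under $\mathscr{E}_{\mathscr{X}_0}$-quotients, small products and $\mathscr{M}_\Sigma$-subobjects, and also guarantee that the triple satisfies the running hypotheses of the section. Hence \cref{milius} applies and produces a class $\mathbb{E}$ of $\mathscr{X}_0$-equations with $\catname{V}=\mathcal{V}(\mathbb{E})$. Finally \cref{tec} attaches to each $e\in\mathbb{E}$ an unconditional theory $\Lambda_e$ with $\mathcal{A}\vDash_{\mathscr{X}_0}e$ iff $\mathcal{A}\in\mode{\Lambda_e}$, so that $\mathcal{A}\in\catname{V}$ iff $\mathcal{A}\vDash_{\mathscr{X}_0}e$ for all $e\in\mathbb{E}$ iff $\mathcal{A}\in\mode{\Lambda_e}$ for all $e\in\mathbb{E}$, which is the claim.

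For the converse I would write $\catname{V}=\bigcap_{e\in\mathbb{E}}\mode{\Lambda_e}$ and observe that an intersection of subcategories each closed under a given operation is again closed under it, reducing the claim to showing that $\mode{\Lambda}$ is closed under epimorphisms, products and strong monomorphisms for a single unconditional theory $\Lambda$. Closure under strong monomorphisms is exactly \cref{ind}(5). Closure under products is routine: projections are homomorphisms, so \cref{ind}(2) gives componentwise evaluation of terms, the product membership is the meet of the factor memberships by \cref{prod}, and from these two facts satisfaction of every sequent is inherited by the product. Closure under epimorphisms is the step where the two parts genuinely diverge: given a surjection $q\colon\mathcal{A}\to\mathcal{B}$ with $\mathcal{A}\in\mode{\Lambda}$ and an axiom $\vdash\phi$, any assignment $\iota\colon X\to B$ lifts through $q$ to some $\iota'$ with $q\circ\iota'=\iota$, and since $\mathcal{A}\vDash_{\iota'}\phi$, \cref{ind}(3) gives $\mathcal{B}\vDash_{q\circ\iota'}\phi$, i.e.\ $\mathcal{B}\vDash_{\iota}\phi$.

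The main obstacle, and the conceptual point of the theorem, is precisely this last step in the type~$\mathsf{E}$ case. For a type~$\mathsf{E}$ axiom $\qty{\ex{l_i}{x_i}}_{i}\vdash\phi$ the lift $\iota'$ must additionally validate the membership premises, i.e.\ one needs preimages whose membership degree is not lowered; by \cref{mono}(3) such preimages exist exactly when $\mathscr{U}_\Sigma(q)$ is split, and this is what forces ``split epimorphism'' rather than ``epimorphism'' in part~2. Checking that the lifted assignment preserves the premises and then invoking \cref{ind}(3) is the only delicate computation; everything else is bookkeeping on top of the cited lemmas.
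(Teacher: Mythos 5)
Your proposal is correct and takes essentially the same route as the paper: the paper's entire proof is the one-liner ``This is straightforward in light of \cref{milius}, \cref{te} and \cref{tec}'', and your argument is the natural expansion of it --- rereading the closure conditions via \cref{aaa}, applying \cref{milius}, and translating equations into theories via \cref{tec} and \cref{te}. The converse direction you spell out (closure of each $\mode{\Lambda_e}$ under the three operations, with the observation that \cref{mono}(3) and the type-$\mathsf{E}$ membership premises are exactly what force splitness of the epimorphism) is precisely the ``straightforward'' part the paper leaves implicit.
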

	 \begin{proof}
	 	This is straightforward in light of \cref{milius}, \cref{te} and \cref{tec}.
	 \end{proof} 
 
	 \begin{remark}
	 	We cannot arrange the collection $\{\Lambda_e\}_{e\in \mathbb{E}}$ into a unique theory since a proper class of variables is needed to write down all the necessary sequents. A possible way to deal with this issue is to fix two Grothendieck universes (\cite{williams1969grothendieck}) $\catname{U}_1\subset \catname{U}_2$ and modify the definition of language allowing for a possible class (i.e., an element of $\catname{U}_2$) of variables. All the proof of this paper can be repeated verbatim in this context carefully distinguishing between fuzzy \emph{sets} (i.e., those defined on an element of $\catname{U}_1$) and fuzzy \emph{classes} (i.e., those defined on an element of $\catname{U}_2$). Then the algebras of terms will be a fuzzy class in general but it is possible to show, using the explicit construction, that $\term{\Lambda}(X,\mu_X)$ is a fuzzy set if $X\in\catname{U}_1$ and so we can retain all the results of  \cref{sec:fre}.
	 \end{remark}
	 
	The issue mentioned in the previous remark can be avoided if the family $\{\Lambda_e\}_{e\in\mathbb{E}}$ satisfies a boundedness property
	about the premises of the sequents belonging to each $\Lambda_e$.	 
	 \begin{definition}
	 	Given a cardinal $\kappa$ we say that a $\mathscr{X}_\mathsf{E}$-equation $e:\trm{\Sigma}(X,\mu_X)\rightarrow \mathcal{B}$ is \emph{$\kappa$-supported} if $\abs{\supp{X}}<\kappa$.
	 \end{definition}
	 \begin{proposition}
	 	Let $\catname{V}=\mathcal{V}(\mathbb{E})$ be an $\mathcal{X}_\mathsf{E}$-equational defined subcategory of $\alg{\Sigma}$ and suppose every $e\in \mathbb{E}$ is $\kappa$-supported, then there exists a theory $\Lambda\in \thr{L}$, where $\mathcal{L}=(\Sigma, \kappa)$, such that $\catname{V}=\mode{\Lambda}$.
	 \end{proposition}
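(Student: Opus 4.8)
The plan is to manufacture, for each single equation, a type $\mathsf{E}$ theory capturing exactly the algebras that satisfy it, and then to collapse all of these into the one fixed language $\mathcal{L}=(\Sigma,\kappa)$ by renaming variables and taking a union. Concretely, \cref{te} attaches to every $e\colon\trm{\Sigma}(X_e,\mu_{X_e})\to\mathcal{B}_e$ in $\mathbb{E}$ a type $\mathsf{E}$ theory $\Lambda_e$, living in the language $(\Sigma,X_e)$, with $\mathcal{A}\vDash_{\mathscr{X}_\mathsf{E}}e$ if and only if $\mathcal{A}\in\mode{\Lambda_e}$, and with the key bookkeeping property that every premise $\Gamma$ of a sequent of $\Lambda_e$ satisfies $\abs{\Gamma}\le\abs{\supp{X_e}}<\kappa$, the last inequality being precisely $\kappa$-supportedness. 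Since a model of any such theory is by definition merely a $\Sigma$-algebra, membership in $\catname{V}=\mathcal{V}(\mathbb{E})$ is exactly membership in $\bigcap_{e\in\mathbb{E}}\mode{\Lambda_e}$, so it suffices to rewrite each $\Lambda_e$ as a theory $\Lambda_e'$ in $(\Sigma,\kappa)$ having the same models.

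The arithmetic core is that each \emph{individual} sequent of $\Lambda_e$ mentions fewer than $\kappa$ variables; here I take $\kappa$ to be infinite, which is the relevant case. Indeed, the common premise $\Gamma_{X_e}$ only involves the variables of $\supp{X_e}$, a set of size $<\kappa$, whereas the conclusion is a single formula built from terms, and since every term is a finite tree over operations of finite arity it uses only finitely many variables. Hence the set $V_s$ of variables occurring in a sequent $s=(\Gamma\vdash\phi)$ of $\Lambda_e$ has $\abs{V_s}<\kappa$, and we may choose an injection $\rho_s\colon V_s\hookrightarrow\kappa$. Relabelling along $\rho_s$ (a special case of the substitution of \cref{not}) turns $s$ into a sequent $s'=(\Gamma[\rho_s]\vdash\phi[\rho_s])$ of $\sqt{(\Sigma,\kappa)}$, and we set $\Lambda_e'\ed\{\,s[\rho_s]\mid s\in\Lambda_e\,\}\subseteq\sqt{(\Sigma,\kappa)}$.

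Nothing is lost by this relabelling. Since $\rho_s$ is injective on $V_s$, and the satisfaction of a sequent by a $\Sigma$-algebra $\mathcal{A}$ depends only on the behaviour of assignments on the variables actually occurring in it, $\rho_s$ induces a bijection between the assignments relevant to $s$ and those relevant to $s'$ under which $\mathcal{A}\vDash_\iota\Gamma\Leftrightarrow\mathcal{A}\vDash_{\iota'}\Gamma[\rho_s]$ and likewise for the conclusions; equivalently, one applies the soundness of \textsc{Sub} (\cref{rules}, second part, together with \cref{su}) both to $\rho_s$ and to a left inverse of $\rho_s$ to get the two implications. Thus $\mode{\{s'\}}=\mode{\{s\}}$ for each single sequent, and as a theory is satisfied exactly when all its sequents are, $\mode{\Lambda_e'}=\mode{\Lambda_e}$. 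Setting $\Lambda\ed\bigcup_{e\in\mathbb{E}}\Lambda_e'$ then yields $\mode{\Lambda}=\bigcap_{e}\mode{\Lambda_e'}=\bigcap_{e}\mode{\Lambda_e}=\catname{V}$.

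The one point that genuinely needs dispatching — and the resolution of the set-theoretic difficulty noted in the earlier remark — is that $\Lambda$ is a legitimate theory, i.e.\ a \emph{set}: even when $\mathbb{E}$ is a proper class, every $\Lambda_e'$ is a subset of the single \emph{fixed} set $\sqt{(\Sigma,\kappa)}$, so the union is a subset of a set and hence belongs to $\thr{(\Sigma,\kappa)}$. The only place where real care is needed is the variable count of the second paragraph: it is there that the hypothesis $\abs{\supp{X_e}}<\kappa$ (a bound on the support, \emph{not} on all of $X_e$) must be combined with the finiteness of terms, and it is there too that infinitude of $\kappa$ is used, so that adjoining the finitely many variables of a conclusion to the $<\kappa$ variables of a premise still produces fewer than $\kappa$ variables in total.
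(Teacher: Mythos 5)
Your proof is correct, and at the top level it follows the same strategy as the paper's: translate each $e\in\mathbb{E}$ into a type $\mathsf{E}$ theory $\Lambda_e$ via \cref{te}, rename variables so that everything lives in the single language $(\Sigma,\kappa)$, and take the union, which is a set because it sits inside the fixed set of sequents over $(\Sigma,\kappa)$. Where you genuinely differ is in how the renaming is carried out, and your version is the more robust one. The paper fixes one map $\bar{i}_e\colon X_e\to\kappa$ per equation, required to be injective only on $\supp{X_e}$ and otherwise an arbitrary extension, and pushes all of $\Lambda_e$ forward along the language morphism $(\id{\Sigma},\bar{i}_e)$; in the converse direction it then asks, for an arbitrary assignment $\iota\colon X_e\to A$, for some $\hat{\iota}\colon\kappa\to A$ with $\hat{\iota}\circ\bar{i}_e=\iota$. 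Such a factorization exists only when $\iota$ is constant on the fibres of $\bar{i}_e$; since the conclusions of sequents in $\Lambda_e$ can involve variables outside the support (on which $\bar{i}_e$ cannot be injective once $\abs{X_e}>\kappa$), this step can fail --- indeed a bad extension identifying the two variables of a conclusion $y_1\equiv y_2$ turns that sequent into a tautology and changes its class of models. Your per-sequent injective relabellings $\rho_s$ are exactly what avoids this, at the price of assuming $\kappa$ infinite so that $\abs{\supp{X_e}}<\kappa$ plus finitely many conclusion variables still gives fewer than $\kappa$ variables; that assumption is genuinely needed (for $\kappa=1$ the proposition is false: with $\Sigma=\emptyset$ and an equation collapsing two variables of empty support, $\catname{V}$ is the class of at-most-one-element fuzzy sets, which no one-variable theory can cut out, as the two-element crisp fuzzy set satisfies every one-variable sequent). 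Two smaller points to tighten: your relabelling crosses languages, so it should be justified by morphisms of languages and \cref{mrp} (or the pushforward $\mathsf{Seq}(\mathbf{F})$ of the Galois-connection proposition) rather than by \cref{not}, whose substitutions are endomaps of a single language; and both your bijection-of-assignments argument and the left-inverse trick tacitly use that satisfaction of a formula depends only on the variables actually occurring in it, which is true but deserves a one-line induction on terms.
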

	 \begin{proof}For any $e:\trm{\Sigma}(X_e, \mu_{X_e})\rightarrow \mathcal{B}_e$ in $\mathbb{E}$ we can fix an injection $i_e:\supp{X_e}\rightarrow \kappa$ and an extension let $\bar{i}_e:X\rightarrow \kappa$ of it, fix also  morphisms $\mathbf{I}^e:\mathcal{L}_e\rightarrow\mathcal{L}$ given by $(\id{\Sigma}, \bar{i}_e)$. Let now $\qty{\Lambda_e}_{e\in \mathbb{E}}$ be the collection of theories given by \cref{tec} and \cref{hsp}, since each $\Lambda_e\in \mathbf{Form}\qty(\mathcal{L}_e)$ we can define $\Lambda$ as $\bigcup_{e\in \mathbb{E}}\mathbf{I}^e_*\qty(\Lambda_e)$.
	 	We have to show that $\mathcal{A}\in \catname{V}$ if and only if $\mathcal{A}\in \mode{\Lambda}$.
	 	
	 	\begin{itemize}
	 		\item[$\Rightarrow$] Let $\mathsf{Form}\qty(\mathbf{I}^e)\qty(\Gamma_{X_e}) \vdash \mathsf{Form}\qty(\mathbf{I}^e)\qty(\psi) $ be a sequent in $\Lambda$ and let $\iota:\kappa \rightarrow A$ an assignment such that $\mathcal{A}\vDash_\iota\mathsf{Form}\qty(\mathbf{I}^e)\qty(\Gamma_{X_e})$. By point $3$ of  \cref{mrp} this implies $\mathcal{A}\vDash_{\iota \circ \bar{i}_e} \Gamma_{X_e}$, therefore $\mathcal{A}\vDash_{\iota \circ \bar{i}_e} \psi$ and we conclude applying lemma \ref{mrp} again. 
	 		\item[$\Leftarrow$] If $\mathscr{U}_\Sigma(\mathcal{A})=(\emptyset, \rotatebox[origin=c]{180}{!}_{H})$, ($\rotatebox[origin=c]{180}{!}_H$ being the empty map $\emptyset\rightarrow H$) then there are no assignment $\kappa \rightarrow A$ and so $\mathcal{A}$ is in $\mode{\Lambda}$. In the other cases let $\Gamma_{X_e}\vdash \psi$ be in $\Lambda_e$ and $\iota:X_e\rightarrow A$ such that $\mathcal{A}\vDash_{\iota} \Gamma_{X_e}$, since $A\neq \emptyset$ there exists $\hat{\iota}:\kappa \rightarrow A$ such that $\hat{\iota}\circ \bar{i}_e= \iota$ as in the previous point \cref{mrp} implies $\mathcal{A}\vDash_{\hat{\iota}}\mathsf{Form}\qty(\mathbf{I}^e)\qty(\Gamma_{X_e})$, so $\mathcal{A}\vDash_{\hat{\iota}}\mathsf{Form}\qty(\mathbf{I}^e)\qty(\psi)$ and again this is equivalent to $\mathcal{A}\vDash_{\iota}\psi$.\qedhere 
	 	\end{itemize} 
	 \end{proof} 
	 \begin{corollary}
	 	$\catname{V}$ is closed under epimorphisms, (small) products and strong monomorphisms if and only if there exists a language $\mathcal{L}$ and an unconditional theory $\Lambda\in \thr{L}$  such that $\catname{V}=\mode{\Lambda}$.
	 \end{corollary}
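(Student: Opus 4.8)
The plan is to invoke the first item of \cref{hsp} and then collapse the resulting (possibly proper) class of unconditional theories into a single one, mimicking the preceding Proposition but exploiting that unconditional sequents carry no premises.

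For the backward implication, assume $\catname{V}=\mode{\Lambda}$ for an unconditional $\Lambda$. Indexing the one-element family $\{\Lambda_\star\}\ed\{\Lambda\}$, the requirement ``$\mathcal{A}\in\catname{V}$ if and only if $\mathcal{A}\in\mode{\Lambda_e}$ for all $e$'' is exactly $\catname{V}=\mode{\Lambda}$, so the ``if'' part of point $1$ of \cref{hsp} shows that $\catname{V}$ is closed under epimorphisms, products and strong monomorphisms. (One can also check the three closures directly: closure under strong monomorphisms is point $5$ of \cref{ind}, while closure under products and epimorphisms is routine from points $2$ and $3$ of \cref{ind} and point $1$ of \cref{mono}, the epimorphism case using surjectivity to lift an assignment into the quotient back to the domain.)

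For the forward implication, closure under the three operations yields, by point $1$ of \cref{hsp}, a class $\{\Lambda_e\}_{e\in\mathbb{E}}$ of unconditional theories over languages $\mathcal{L}_e=(\Sigma,X_e)$ with $\mathcal{A}\in\catname{V}$ if and only if $\mathcal{A}\in\mode{\Lambda_e}$ for every $e$. It remains to merge them into a single unconditional theory over the \emph{fixed} language $\mathcal{L}=(\Sigma,\mathbb{N})$. Because every operation is finitary, each formula $\phi$ occurring in a sequent $\vdash\phi$ of some $\Lambda_e$ mentions only finitely many variables; I therefore choose an injection of $\mathrm{vars}(\phi)$ into $\mathbb{N}$, extend it arbitrarily to a map $\rho\colon X_e\to\mathbb{N}$, and set $\phi'\ed\stat{\mathbf{F}}(\phi)$ for the language morphism $\mathbf{F}=(\id{\Sigma},\rho)\colon\mathcal{L}_e\to\mathcal{L}$. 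Putting $\Lambda\ed\{\vdash\phi'\mid e\in\mathbb{E},\ (\vdash\phi)\in\Lambda_e\}$ then gives an unconditional theory over $\mathcal{L}$.

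The crux is that this relabelling preserves the class of models, i.e.\ $\mathcal{A}\vDash\phi$ if and only if $\mathcal{A}\vDash\phi'$ for every $\Sigma$-algebra $\mathcal{A}$. Since $\mathbf{F}$ is the identity on the signature, $\rs{F}$ is the identity functor on $\alg{\Sigma}$, so point $3$ of \cref{mrp} specialises to $\mathcal{A}\vDash_{\iota\circ\rho}\phi$ iff $\mathcal{A}\vDash_{\iota}\phi'$ for all $\iota\colon\mathbb{N}\to A$. As $\iota$ ranges over all such maps and $\rho$ is injective on $\mathrm{vars}(\phi)$, the restrictions $(\iota\circ\rho)|_{\mathrm{vars}(\phi)}$ range over every assignment of the finitely many relevant variables, while the interpretation of a term depends only on the values assigned to its variables; universally quantifying over assignments on both sides therefore yields the equivalence, whence $\mode{\Lambda}=\bigcap_{e\in\mathbb{E}}\mode{\Lambda_e}=\catname{V}$. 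I expect this merging step to be the main obstacle: the delicate point is that $\{\Lambda_e\}$ may be a proper class, yet relabelling each sequent into a single countable set of variables is harmless precisely because unconditionality removes the premises, letting each sequent's finitely many variables be treated independently and so dispensing with the $\kappa$-support hypothesis needed in the preceding Proposition.
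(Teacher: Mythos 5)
Your proposal is correct, and its skeleton coincides with the paper's: the backward implication is the ``if'' half of point 1 of \cref{hsp} applied to the singleton family $\{\Lambda\}$, and the forward implication combines point 1 of \cref{hsp} with a merge of the resulting class of unconditional theories into one theory over a single fixed language. The genuine difference is in how the merge is performed. The paper obtains the corollary as a special case of the proposition immediately preceding it: each theory $\Lambda_e$ is pushed forward \emph{as a whole} along one morphism $(\id{\Sigma},\bar{i}_e)$, where $\bar{i}_e\colon X_e\to\kappa$ is an arbitrary extension of an injection defined only on $\supp{X_e}$ --- a vacuous constraint here, since $\mathscr{X}_0$-equations have empty support; this buys uniformity, as the same construction also covers $\kappa$-supported theories of type $\mathsf{E}$. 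You instead rename variables \emph{per sequent}, insisting on injectivity on the finitely many variables of each formula. This buys robustness: the paper's backward direction must factor an arbitrary assignment $\iota\colon X_e\to A$ as $\hat{\iota}\circ\bar{i}_e$, which forces $\iota$ to be constant on the fibres of $\bar{i}_e$ and so can genuinely fail when $\abs{X_e}>\kappa$ and the arbitrarily chosen extension collapses variables occurring in a conclusion; your per-formula injectivity makes the required factorization (on the variables that actually occur) always available, so your argument is self-contained and does not depend on choosing the extension well. One point you gloss over: the claim that the restrictions $(\iota\circ\rho)|_{\mathrm{vars}(\phi)}$ exhaust all assignments of the relevant variables presupposes $A\neq\emptyset$. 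The gap is harmless --- a variable-free formula must contain a constant, and the interpretation of a constant forces $A\neq\emptyset$, while for $A=\emptyset$ both satisfaction claims hold vacuously --- but the case split should be made explicit, exactly as the paper does in the analogous proposition.
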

	 
\section{Conclusions and future work}\label{sec:conc}
In this paper we have introduced a \emph{fuzzy sequent calculus} to capture equational aspects of fuzzy sets.  While equalities are captured by usual equations, information contained in the membership function is captured by \emph{membership proposition} of the form $\ex{l}{t}$, to be interpreted as ``the membership degree of $t$ is at least $l$''.
We have used a natural concept of \emph{fuzzy algebras} to provide a sound and complete semantics for such calculus in the sense that a formula is satisfied by all the models of a given theory if and only if it is derivable from it using the rules of our calculus. 

As in the classical and quantitative contexts, there is a notion of \emph{free model} of a theory $\Lambda$ and thus an associated monad $\term{\Lambda}$ on the category $\fuz{H}$. However, in general Eilenberg-Moore algebras for such monad are not equivalent to models of $\Lambda$, but we have shown that this equivalence holds if $\Lambda$ is \emph{basic}.
In this direction it would be interesting to better understand the categorical status of our approach, investigating possible links between our notion of fuzzy theory and $\fuz{H}$-Lawvere theories as introduced in full generality by Nishizawa and Power in \cite{nishizawa2009lawvere}. A difference between the two approaches is that for us arities are simply finite sets, while following \cite{nishizawa2009lawvere} a $\fuz{H}$-Lawvere theory arities would be given by finite fuzzy sets.

Finally, using the results provided in \cite{milius2019equational} we have proved that, given a signature $\Sigma$, subcategories of $\alg{\Sigma}$ which are closed under products, strong monomorphisms and epimorphic images correspond precisely with categories of models for \emph{unconditional theories}, i.e. theories axiomatised by sequents without premises. Moreover, using the same results, we have also proved that the categories of models of \emph{theories of type $\mathsf{E}$}, i.e. those whose axioms' premises contain only membership propositions involving variables, are exactly those subcategories closed under products, strong monomorphisms and split epimorphisms.

Our category $\fuz{H}$ of fuzzy sets has crisp arrows and crisp equality: arrows are ordinary functions between the underlying sets and equalities can be judged to be either true or false. A way to further ``fuzzifying'' concepts is to use the topos of \emph{$H$-sets}  over the frame $H$ introduced in \cite{fourman1979sheaves}: this is equivalent to the topos of sheaves over $H$ and contains $\fuz{H}$ as a (non full) subcategory. By construction, equalities and functions are ``fuzzy''. It would be interesting to study an application of our approach to this context. A promising  feature is that in an $H$-set the membership degree function is built-in as simply the equality relation, so it would not be necessary to distinguish between equations and membership propositions.
Even more generally, we can replace $H$ with an arbitrary quantale $\mathcal{V}$ and consider the category of sets endowed with a ``$\mathcal{V}$-valued equivalence relation'' \cite{bkp:concur18}.


\bibliography{bibliog}
\newpage 
	\appendix

	\section{Derivations for \cref{equ}}\label{sec:der}
	\begin{turn}{90}
		\begin{minipage}{550pt}
			\hspace{1pt}\\
			\hspace{1pt}\\
			\hspace{1pt}\\
			\hspace{1pt}\\
			\begin{equation*}
			\inferrule*[right=Fun]{\vdash e \equiv x\cdot x^{-1}\\ 
				\inferrule*[right=Exp]{
					\inferrule*[right=A]{\hspace{1pt}}{\ex{l} {x}\vdash \ex{l}{x}}\\ 
					\inferrule*[right=Exp]{
						\inferrule*[right=A]{\hspace{1pt}}{\ex{l}{x}\vdash \ex{l}{x} }
					}{\ex{l}{x}\vdash \ex{l}{x^{-1}} }}{\ex{l}{x}\vdash \ex{l}{x\cdot x^{-1}}}	
			}{\ex{l}{x}  \vdash \ex{l}{e} }\end{equation*}
			\\ \hspace{1pt} \\ \hspace{1pt} \\ \hspace{1pt} \\ 	\\ \hspace{1pt} \\ \hspace{1pt} \\ \hspace{1pt} \\ \hspace{1pt} \\\hspace{1pt} \\ \hspace{1pt} 
			\begin{equation*}
			\inferrule*[right=Fun]{\vdash y\equiv \qty(y^{-1})^{-1}\\\inferrule*[right=Sub]{
					\inferrule*[right=Fun]
					{\vdash x\equiv y \cdot\qty (\qty(y^{-1} \cdot (x\cdot y)) \cdot y^{-1})\\
						\inferrule*[right=Sub]{\ex{l}{x}	\vdash \ex{l}{y\cdot (x \cdot y^{-1})} }{\ex{l}{y^{-1} \cdot (x\cdot y)}\vdash \ex{l}
							{y \cdot \qty(\qty(y^{-1} \cdot (x\cdot y)) \cdot y^{-1})}}} 		{\ex{l}{ y^{-1} \cdot (x\cdot y)}\vdash \ex{l}{x}}}{\ex{l}{\qty(y^{-1})^{-1}\cdot \qty(x \cdot y^{-1})}\vdash \ex{l}{x}}}{\ex{l}{y\cdot\qty(x\cdot y^{-1})}\vdash \ex{l}{x}}
			\end{equation*}
		\end{minipage} 
	\end{turn}
\end{document}